\newtheorem{definition}{Definition}
\newtheorem{proposition}{Proposition}
\newtheorem{theorem}{Theorem}
\newtheorem{example}{Example}
\newtheorem{lemma}{Lemma}
\newtheorem{corollary}{Corollary}
\newcommand{\rk}{\mathrm{rk}}
\newcommand{\cl}{\mathrm{cl}}
\newcommand{\mias}{\mathrm{mias}}
\newcommand{\vcup}{\,\vec{\cup}\,}
\newcommand{\bcup}{\,\bar{\cup}\,}
\begin{document}

\title{Closure solvability for network coding and secret sharing}
\author{Maximilien Gadouleau~\IEEEmembership{Member, IEEE}\thanks{The author is with the School of Engineering and Computing Sciences, Durham University, Durham, UK. Email: m.r.gadouleau@durham.ac.uk}}
\maketitle

\begin{abstract}
Network coding is a new technique to transmit data through a network by letting the intermediate nodes combine the packets they receive. Given a network, the network coding solvability problem decides whether all the packets requested by the destinations can be transmitted. In this paper, we introduce a new approach to this problem. We define a closure operator on a digraph closely related to the network coding instance and we show that the constraints for network coding can all be expressed according to that closure operator. Thus, a solution for the network coding problem is equivalent to a so-called solution of the closure operator. We can then define the closure solvability problem in general, which surprisingly reduces to finding secret-sharing matroids when the closure operator is a matroid. Based on this reformulation, we can easily prove that any multiple unicast where each node receives at least as many arcs as there are sources is solvable by linear functions. We also give an alternative proof that any nontrivial multiple unicast with two source-receiver pairs is always solvable over all sufficiently large alphabets.  Based on singular properties of the closure operator, we are able to generalise the way in which networks can be split into two distinct parts; we also provide a new way of identifying and removing useless nodes in a network. We also introduce the concept of network sharing, where one solvable network can be used to accommodate another solvable network coding instance. Finally, the guessing graph approach to network coding solvability is generalised to any closure operator, which yields bounds on the amount of information that can be transmitted through a network.
\end{abstract}

\section{Introduction} \label{sec:intro}

Network coding \cite{ACLY00} is a protocol which outperforms routing for multicast networks by letting the intermediate nodes manipulate the packets they receive. In particular, linear network coding \cite{LYC03} is optimal in the case of one source; however, it is not the case for multiple sources and destinations \cite{Rii04, DFZ05}. Although for large dynamic networks, good heuristics such as random linear network coding \cite{KM03, HMK+06} can be used, maximizing the amount of information that can be transmitted over a static network is fundamental but very difficult in practice. Solving this problem by brute force, i.e. considering all possible operations at all nodes, is computationally prohibitive. Different alternative approaches have been proposed to tackle this problem, notably using matroids, information inequalities, and group theory \cite{DFZ04, DFZ06, DFZ07, CG08, CG10, Cha05}. In this paper, we provide a new approach to tackle this problem based on a closure operator defined on a related digraph. Closure operators are fundamental and ubiquitous mathematical objects.

The guessing number of digraphs is a concept introduced in \cite{Rii06}, which connects graph theory, network coding, and circuit complexity theory. In \cite{Rii06} it was proved that an instance of network coding with $r$ sources and $r$ sinks on an acyclic network (referred to as a multiple unicast network) is solvable over a given alphabet if and only if the guessing number of a related digraph is equal to $r$. Moreover, it is proved in \cite{Rii06, DZ06} that any network coding instance can be reduced into a multiple unicast network. Therefore, the guessing number is a direct criterion on the solvability of network coding. One of the main advantages of the guessing number approach is to remove the hierarchy between sources, intermediate nodes, and destinations. In \cite{Rii07}, the guessing number is also used to disprove a long-standing open conjecture on circuit complexity. In \cite{WCR09}, the guessing number of digraphs was studied, and bounds on the guessing number of some particular digraphs were derived. The guessing number is also equal to the so-called graph entropy \cite{Rii06, Rii07b}. This allows us to use information inequalities \cite{Cha11} to derive upper bounds on the guessing number. The guessing number of undirected graphs is studied in \cite{CM11}. Moreover, in \cite{GR11}, the guessing number is viewed as the independence number of an undirected graph related to the digraph and the alphabet.


Shamir introduced the so-called threshold secret sharing scheme in \cite{Sha79}. Suppose a sender wants to communicate a secret $a \in A$ to $n$ parties, but that an eavesdropper may intercept $r-1$ of the transmitted messages. We then require that given any set of $r-1$ messages, the eavesdropper cannot obtain any information about the secret. On the other hand, any set of $r$ messages allows to reconstruct the original secret $a$. The elegant technique consists of sending evaluations of a polynomial $p(x) = \sum_{i=0}^{r-1} p_ix^i$, with $p_0 = a$ and all the other coefficients chosen secretly at random, at $n$ nonzero elements of $A$; this is evidently reminiscent of Reed-Solomon codes. The threshold scheme was then generalised to {\em ideal secret sharing schemes} with different access structures, i.e. different sets of trusted parties. Brickell and Davenport have proved that the access structure must be the family of spanning sets of a matroid; also any linearly representable matroid is a valid access structure \cite{BD91}. However, there exist matroids (such as the V\'amos matroid \cite{Sey92}) which are not valid access structures. For a given access structure (or equivalently, matroid), finding the scheme is equivalent to a representation by partitions \cite{Mat99}.

In this paper, we introduce a closure operator on digraphs, and define the closure solvability problem for any closure operator. This yields the following contributions.
\begin{itemize}
	\item First of all, this framework encompasses network coding and ideal secret sharing. In particular, network coding solvability is equivalent to the solvability of the closure operator of a digraph associated to the network. This framework then allows us to think of network coding solvability on a higher, more abstract level. The problem, which used to be about coding functions, is now a simplified problem about partitions. 
				
	\item This approach is particularly elegant, in different aspects. Firstly, the adjacency relations of the graph, and hence the topology of the network, are not visible in the closure operator. 
Therefore, the closure operator filters out some unnecessary information from the graph. Secondly, it is striking that all along the paper, most proofs will be elementary, including those of far-reaching results. Thirdly, this framework highlights the relationship with matroids unveiled in \cite{DFZ07, CGR13}.
	
	\item Like the guessing number approach, the closure operator approach also gets rid of the source-intermediate node-destination hierarchy. The guessing graph machinery of \cite{GR11} can then be easily generalised to any closure operator. In other words, the interesting aspects of the guessing number approach can all be recast and generalised in our framework.
	
	\item This approach then yields interesting results. First, it was shown in \cite{GR11} that the entropy of a digraph is equal to the sum of the entropies of its strongly connected components. Thus, one can split the solvability problem of a digraph into multiple ones, one for each strongly connected component \cite{GR11}. In this paper, we extend this way of splitting the problem by considering the closure operators induced by the subgraphs. We can easily exhibit a strongly connected digraph whose closure operator is disconnected, i.e. which can still be split into two smaller parts. More specifically, if the graph is strongly connected but its closure operator is disconnected, then we can exhibit a set of vertices which are simply useless and can be disregarded for solvability. Second, we can prove that any digraph whose closure operator has rank two is solvable. This means that any multiple unicast with two source-receiver pairs is solvable, unless there exists an easily spotted bottleneck in the network. This has already been proved in \cite{WS10}; our proof is much shorter and highlights the relation with coding theory and designs. Third, we can prove that any network with minimum in-degree equal to the number of source-receiver pairs is solvable by linear functions over all sufficiently large alphabets of size equal to a large prime power. Fourth, we prove an equivalence between network coding solvability and index coding solvability. Finally, we show how the bidirectional union of digraphs can be viewed as network sharing.
\end{itemize}

The rest of the paper is organised as follows. In Section \ref{sec:preliminaries}, we review some useful background. In Section \ref{sec:clD}, we define the closure solvability problem and prove that network coding solvability is equivalent to the solvability of a closure operator. We then use this conversion in Section \ref{sec:solvable} to prove the solvability of different classes of networks. We investigate how to combine closure operators in Section \ref{sec:combining}. We finally define the solvability graph in \ref{sec:solvability_graph} and study its properties.

\section{Preliminaries} \label{sec:preliminaries}

\subsection{Closure operators} \label{sec:closure}

Throughout this paper, $V$ is a set of $n$ elements. A closure operator on $V$ is a mapping $\cl: 2^V \to 2^V$ which satisfies the following properties \cite[Chapter IV]{Bir48}. For any $X,Y \subseteq V$,
\begin{enumerate}
    \item \label{it:Xincl} $X \subseteq \cl(X)$ (extensive);

    \item \label{it:cl_increasing} if $X \subseteq Y$, then $\cl(X) \subseteq \cl(Y)$ (isotone);

    \item \label{it:cl(cl)} $\cl(\cl(X)) = \cl(X)$ (idempotent).
\end{enumerate}
A {\em closed set} is a set equal to its closure. For instance, in a group one may define the closure of a set as the subgroup generated by the elements of the set; the family of closed sets is simply the family of all subgroups of the group. Another example is given by linear spaces, where the closure of a set of vectors is the subspace they span.

A closure operator satisfies the following properties. For any $X,Y \subseteq V$,
\begin{enumerate}
    \item \label{it:cl=bigcap} $\cl(X)$ is equal to the intersection of all closed sets containing $X$;

    \item \label{it:cl(cap)} $\cl(\cl(X) \cap \cl(Y)) = \cl(X) \cap \cl(Y)$, i.e. the family of closed sets is closed under intersection;

    \item \label{it:cl(cup)} $\cl(X \cup Y) = \cl(\cl(X) \cup \cl(Y))$.

    \item \label{it:axiom} $X \subseteq \cl(Y)$ if and only if $\cl(X) \subseteq \cl(Y).$
\end{enumerate}

We refer to
$$
    r:= \min\{|b| : \cl(b) = V\}
$$
as the {\em rank} of the closure operator. For instance, in a linear space, this is the dimension of the space. Any set $b \subseteq V$ of size $r$ and whose closure is $V$ is referred to as a {\em basis} of $\cl$.

An important class of closure operators are {\em matroids} \cite{Oxl06}, which satisfy the Mac Lane-Steinitz exchange property\footnote{In order to simplify notation, we shall identify a singleton $\{v\}$ with its element $v$}: if $X \subseteq V$, $v \in V$ and $u \in \cl(X \cup v) \backslash \cl(X)$, then $v \in \cl(X \cup u)$. A special class consists of the uniform matroids, typically denoted as $U_{r,n}$, where
$$
    U_{r,n}(X) = \begin{cases} V & \mbox{if } |X| \ge r\\
    X & \mbox{otherwise}. \end{cases}
$$
Clearly, $U_{r,n}$ has rank $r$.

\subsection{Functions and their kernels} \label{sec:functions}

While network coding typically works with functions assigned to vertices, it is elegant to work with {\em partitions} (for a review of their properties, the reader is invited to \cite{Bai04}). Recall that a partition of a set $B$ is a collection of subsets, called parts, which are pairwise disjoint and whose union is the whole of $B$. We denote the parts of a partition $f$ as $P_i(f)$ for all $i$. 

Any function $\bar{f}: B \to C$ has a {\em kernel} denoted as $f := \{\bar{f}^{-1}(c): c \in \bar{f}(B)\}$, defined by the partition of $B$ into pre-images under $\bar{f}$. Conversely, any partition of $B$ in at most $|C|$ can be viewed as the kernel of some function from $B$ to $C$. Note that two functions $\bar{f}$, $\bar{g}$ have the same kernel if and only if $\bar{f} = \pi \circ \bar{g}$ for some permutation $\pi$ of $C$. The kernel of any injective function $B \to C$ is the so-called equality partition $E_B$ of $B$ (i.e. the partition with $|B|$ parts). 

If any part of $f$ is contained in a unique part of $g$, we say $f$ refines $g$. The equality partition refines any other partition, while the universal partition (the partition with one part) is refined by any other partition. The {\em common refinement} of two partitions $f$, $g$ of $B$ is given by $h:= f \vee g$ with parts
$$
    P_{i,j}(h) = P_i(f) \cap P_j(g): P_i(f) \cap P_j(g) \ne \emptyset.
$$

We shall usually consider a tuple of $n$ partitions $f= (f_1,\ldots,f_n)$ assigned to elements of a finite set $V$ with $n$ elements. In that case, for any $X \subseteq V$, we denote the common refinement of all $f_v, v \in X$ as $f_X := \bigvee_{v \in X} f_v$. For any $S,T \subseteq V$ we then have $f_{S \cup T} = f_S \vee f_T$.

\subsection{Digraphs}

Throughout this paper, we shall only consider digraphs \cite{BG09a} with no repeated arcs. We shall denote the arc set as $E(D)$, since the letter $A$ will be reserved for the alphabet. However, we do allow edges in both directions between two vertices, referred to as {\em bidirectional edges} (we shall abuse notations and identify a bidirectional edge with a corresponding undirected edge) and loops over vertices. In other words, the digraphs considered here are of the form $D = (V,E)$, where $E \subseteq V^2$. For any vertex $v$ of $D$, its in-neighborhood is $v^- = \{u \in V: (u,v) \in E(D)\}$ and its in-degree is the size of its in-neighborhood. By extension, we denote $X^- = \bigcup_{v \in X} v^-$ for any set of vertices $X$. Also, by analogy, the out-neighbourhood of $v$ is $v^+ := \{u \in V: (v,u) \in E(D)\}$. We say that a digraph is {\em strongly connected} if there is a path from any vertex to any other vertex of the digraph. 

The {\em girth} of a digraph is the minimum length of a cycle, where we consider a bidirectional edge as a cycle of length $2$. A digraph is {\em acyclic} if it has no directed cycles. In this case, we can order the vertices $v_1,\ldots,v_n$ so that $(v_i,v_j) \in E(D)$ only if $i<j$. The cardinality of a maximum induced acyclic subgraph of the digraph $D$ is denoted as $\mias(D)$. A set of vertices $X$ is a {\em feedback vertex set} if and only if any directed cycle of $D$ intersects $X$, or equivalently if $V \backslash X$ induces an acyclic subgraph. 

\begin{definition} \label{def:D_union}
\cite{GR11} For any digraphs $D_1$ and $D_2$ with disjoint vertex sets $V_1$ and $V_2$, we denote the disjoint union, unidirectional union, and bidirectional union of $D_1$ and $D_2$ as the graphs on $V_1 \cup V_2$ and respective edge sets
\begin{align*}
	E(D_1 \cup D_2) &= E(D_1) \cup E(D_2)\\
	E(D_1 \vcup D_2) &= E(D_1 \cup D_2) \cup \{(v_1,v_2): v_1 \in V_1, v_2 \in V_2\}\\
	E(D_1 \bcup D_2) &= E(D_1 \vcup D_2) \cup \{(v_2,v_1): v_1 \in V_1, v_2 \in V_2\}.
\end{align*}
\end{definition}

In other words, the disjoint union simply places the two graphs next to each other; the unidirectional union adds all possible arcs from $D_1$ to $D_2$ only; the bidirectional union adds all possible arcs between $D_1$ and $D_2$.

\subsection{Guessing game and guessing number} \label{sec:preliminaries_guessing}

A \textit{configuration} on a digraph $D$ on $V$ over a finite alphabet $A$ is simply an $n$-tuple $x = (x_1, \ldots, x_n) \in A^n$. A \textit{protocol} $f = (f_1,\ldots,f_n)$ of $D$ is a mapping $f:A^n \to A^n$ such that $f(x)$ is locally defined, i.e. $f_v(x) = f_v(x_{v^-})$ for all $v$. The fixed configurations of $f$ are all the configurations $x \in A^n$ such that $f(x) = x$. The {\em guessing number} of $D$ is then defined as the logarithm of the maximum number of configurations fixed by a protocol of $D$:
\begin{equation} \label{eq:def_g} \nonumber
	g(D,A) = \max_f \left\{ \log_{|A|} |\mathrm{Fix}(f)| \right\}.
\end{equation}

We now review how to convert a multiple unicast problem in network coding to a guessing game. Note that any network coding instance can be converted into a multiple unicast without any loss of generality \cite{DZ06, Rii07}. We suppose that each sink requests an element from an alphabet $A$ from a corresponding source. This network coding instance is {\em solvable} over $A$ if all the demands of the sinks can be satisfied at the same time. We assume the network instance is given in its {\em circuit representation}, where each vertex represents a distinct coding function and hence the same message flows every edge coming out of the same vertex; again this loses no generality. This circuit representation has $r$ source nodes, $r$ sink nodes, and $m$ intermediate nodes. By merging each source with its corresponding sink node into one vertex, we form the digraph $D$ on $n = r+m$ vertices. In general, we have $g(D,A) \le r$ for all $A$ and the original network coding instance is solvable over $A$ if and only if $g(D,A) = r$ \cite{Rii07}. Note that the protocol on the digraph is equivalent to the coding and decoding functions on the original network.

For any digraphs $D_1$, $D_2$ on disjoint vertex sets $V_1$ and $V_2$ respectively, we have 
\begin{align*}
	g(D_1 \cup D_2,A) &= g(D_1 \vcup D_2,A) = g(D_1,A) + g(D_2,A),\\
	g(D_1 \bcup D_2, A) &\le \min\{|V_1| + g(D_2,A), |V_2| + g(D_1,A)\},
\end{align*}
for all alphabets $A$ \cite{GR11}. Notably, we can always consider strongly connected graphs only.

We illustrate the conversion of a network coding instance to a guessing game for the famous butterfly network in Figure \ref{fig:butterfly}. It is well-known that the butterfly network is solvable over all alphabets, and conversely it was shown that the clique $K_3$ has guessing number $2$ over any alphabet. The combinations and decoding operations on the network are equivalent to the protocol on the digraph. For instance, if $v_3$ transmits the opposite of the sum of the two incoming messages modulo $|A|$ on the network, the corresponding protocol lets all nodes guess minus the sum modulo $|A|$ of their incoming elements.

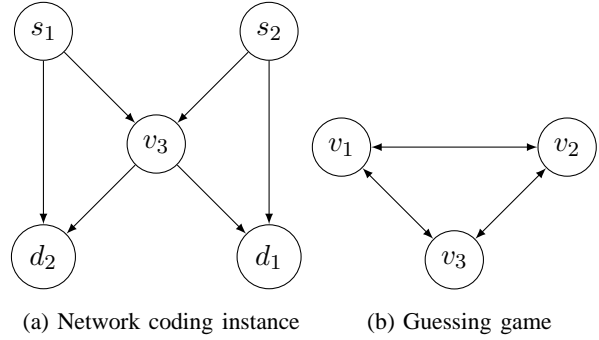
\begin{figure}
\centering
	\subfloat[Network coding instance]
	{\begin{tikzpicture}
	\tikzstyle{every node}=[draw,shape=circle];
	
	\node (xs) at (0,3) {$s_1$};
	\node (xr) at (3,0) {$d_1$};
	
	\node (ys) at (3,3) {$s_2$};
	\node (yr) at (0,0) {$d_2$};
	
	\node (z) at (1.5,1.5) {$v_3$};
	
	\draw[-latex] (xs) -- (yr);
	\draw[-latex] (ys) -- (xr);
	
	\draw[-latex] (xs) -- (z);
	\draw[-latex] (ys) -- (z);
	
	\draw[-latex] (z) -- (xr);
	\draw[-latex] (z) -- (yr);
	\end{tikzpicture}
	}
	\subfloat[Guessing game]
	{\begin{tikzpicture}
	\tikzstyle{every node}=[draw,shape=circle];
	
	\node (x) at (0,3) {$v_1$};
	\node (y) at (3,3) {$v_2$};
	\node (z) at (1.5,1.5) {$v_3$};
	
	\draw[latex-latex] (x) -- (y);
	\draw[latex-latex] (x) -- (z);
	\draw[latex-latex] (y) -- (z);
	\end{tikzpicture}
	}
\caption{The butterfly network as a guessing game.}
\label{fig:butterfly}
\end{figure}

\subsection{Parameters of undirected graphs} \label{sec:preliminaries_graphs}

An independent set in a (simple, undirected) graph is a set of vertices where any two vertices are non-adjacent. The {\em independence number} $\alpha(G)$ of an undirected graph $G$ is the maximum cardinality of an independent set. The {\em chromatic number} $\chi(G)$ of $G$ is the smallest number of parts of a partition of its vertex set into independent sets \cite{GR01}. A graph $G$ is {\em vertex-transitive} if for all $u,v \in V$, there is an automorphism $\phi$ of $G$ such that $\phi(u) = v$. For a connected vertex-transitive graph which is neither an odd cycle nor a complete graph, we have \cite[Corollary 7.5.2]{GR01}, \cite{Lov75}.
\begin{equation} \label{eq:parameters_graph}
	\frac{|V(G)|}{\alpha(G)} \le \chi(G) \le (1 + \log \alpha(G)) \frac{|V(G)|}{\alpha(G)}.
\end{equation}

We now review three types of products of graphs; all products of two graphs $G_1$ and $G_2$ have $V(G_1) \times V(G_2)$ as vertex set. We denote two adjacent vertices $u$ and $v$ in a graph as $u \sim v$.
\begin{enumerate}
	\item In the {\em co-normal product} $G_1 \oplus G_2$, we have $(u_1,u_2) \sim (v_1,v_2)$ if and only if $u_1 \sim v_1$ or $u_2 \sim v_2$. We have $\alpha(G_1 \oplus G_2) = \alpha(G_1) \alpha(G_2)$.
		
	\item In the {\em lexicographic product} (also called composition) $G_1 \cdot G_2$, we have $(u_1,u_2) \sim (v_1,v_2)$ if and only if either $u_1 = v_1$ and $u_2 \sim v_2$, or $u_1 \sim v_1$. Although this product is not commutative, we have $\alpha(G_1 \cdot G_2) = \alpha(G_1) \alpha(G_2)$.
		
	\item In the {\em cartesian product} $G_1 \Box G_2$, we have $(u_1,u_2) \sim (v_1,v_2)$ if and only if either $u_1 = v_1$ and $u_2 \sim v_2$, or $u_2 = v_2$ and $u_1 \sim v_1$. We have $\chi(G_1 \Box G_2) = \max\{\chi(G_1), \chi(G_2)\}$ and $\alpha(G_1 \Box G_2) \le \min\{ \alpha(G_1) |V(G_2)|,\alpha(G_2) |V(G_1)|\}$.
\end{enumerate}

\section{Closure solvability and network coding} \label{sec:clD}

\subsection{Closure operators related to digraphs}

Let $D$ be a digraph on $V$.

\begin{definition} \label{def:clD}
The $D$-{\em closure} of a set of vertices $X$ is defined as follows. We let $c_D(X) = X \cup \{v \in V: v^- \subseteq X\}$ and the $D$-closure of $X$ is $\cl_D(X) := c_D^n(X)$.
\end{definition}

This definition can be intuitively explained as follows. Suppose we assign a function to each vertex of $D$, which only depends on its in-neighbourhood (the function which decides which message the vertex will transmit). If we know the messages sent by the vertices of $X$, we also know the messages which will be sent by any vertex in $c_D(X)$. By applying this iteratively, we can determine all messages sent by the vertices in $\cl_D(X)$. Therefore, $\cl_D(X)$ represents everything that is determined by $X$.

We give an alternate, easier to manipulate, definition of the $D$-closure below.

\begin{lemma} \label{lem:clD_alternate}
For any $X \subseteq V$, $Y = \cl_D(X) \backslash X$ is the largest set of vertices inducing an acyclic subgraph such that $Y^- \subseteq Y \cup X$.
\end{lemma}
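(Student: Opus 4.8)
The plan is to show that $Y := \cl_D(X) \setminus X$ both (a) induces an acyclic subgraph with $Y^- \subseteq Y \cup X$, and (b) contains every other set with these two properties. For part (a), I would reason about the iterated operator $c_D$. Write $X = X_0 \subseteq X_1 \subseteq \cdots$ where $X_{k+1} = c_D(X_k)$, so that $\cl_D(X) = X_n$ (and the chain stabilises). Assign to each vertex $v \in Y$ the \emph{level} $\ell(v) = \min\{k \geq 1 : v \in X_k\}$; since $v \in X_k \setminus X_{k-1}$ means $v$ was added by $c_D$ applied to $X_{k-1}$, we get $v^- \subseteq X_{k-1} = X \cup \{w \in Y : \ell(w) \leq k-1\}$. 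In particular $v^- \subseteq Y \cup X$, giving the neighbourhood condition, and moreover every in-neighbour of $v$ lying in $Y$ has strictly smaller level. Hence any directed cycle inside $Y$ would force a strict decrease of the level all the way around, which is impossible; so $D[Y]$ is acyclic.

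For part (b), let $Z$ be any set of vertices inducing an acyclic subgraph with $Z^- \subseteq Z \cup X$; I claim $Z \subseteq \cl_D(X)$. Since $D[Z]$ is acyclic, order its vertices $z_1, \ldots, z_m$ so that $(z_i, z_j) \in E(D)$ only if $i < j$. I would then show by induction on $j$ that $z_j \in \cl_D(X)$: indeed $z_j^- \subseteq Z \cup X$, and every element of $z_j^- \cap Z$ is some $z_i$ with $i < j$, hence (by the induction hypothesis together with $X \subseteq \cl_D(X)$) lies in $\cl_D(X)$. Thus $z_j^- \subseteq \cl_D(X)$, so $z_j \in c_D(\cl_D(X)) = \cl_D(X)$ using idempotence of $\cl_D$ (Definition~\ref{def:clD} gives $c_D(\cl_D(X)) = c_D(c_D^n(X)) = c_D^{n+1}(X) = c_D^n(X) = \cl_D(X)$ once the chain has stabilised). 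This shows $Z \subseteq \cl_D(X)$, and since $Z$ is disjoint from $X$ is not assumed, we actually get $Z \subseteq \cl_D(X)$, and if we additionally want $Z \subseteq Y = \cl_D(X)\setminus X$ we simply intersect; but the cleanest statement is that $Y$ is the largest such set \emph{disjoint from} $X$, and $Z \setminus X$ is again acyclic with $(Z\setminus X)^- \subseteq (Z\setminus X) \cup X$, so $Z \setminus X \subseteq Y$.

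The main obstacle, and the point requiring care, is the interplay between the two defining conditions on $Z$ in part (b): the acyclicity is exactly what lets us do the topological-order induction, and without it the claim fails (a directed cycle $v^- \subseteq \{u\}$, $u^- \subseteq \{v\}$ with $u,v \notin \cl_D(X)$ satisfies the neighbourhood condition but is not in the closure). I would make sure the writeup states explicitly that the ordering on $Z$ is an acyclic (topological) ordering, and that the induction base case is vacuous or handled by $z_1^- \subseteq X \subseteq \cl_D(X)$. A secondary minor point is confirming the chain $X_k$ stabilises by step $n = |V|$, which is immediate since each strict inclusion adds at least one vertex; this justifies writing $\cl_D(X) = c_D^n(X)$ and the idempotence used above.
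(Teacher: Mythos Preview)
Your proposal is correct and follows essentially the same approach as the paper. For part (a) the paper simply asserts the claim is ``clear'' while you spell out the level argument; for part (b) the paper builds the filtration $Z_0=\emptyset$, $Z_i=\{v\in Z: v^-\subseteq X\cup Z_{i-1}\}$ and shows $Z_i\subseteq c_D^i(X)\setminus X$, which is just the level-set version of your topological-order induction---the same idea in slightly different packaging.
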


\begin{IEEEproof}
First, it is clear that $Y$ is a set of vertices inducing an acyclic subgraph such that $Y^- \subseteq Y \cup X$. Conversely, suppose $Z$ induces an acyclic subgraph and $Z^- \subseteq Z \cup X$. Denoting $Z_0 = \emptyset$ and $Z_i = \{v \in Z: v^- \subseteq X \cup Z_{i-1}\}$ for $1 \le i \le n$, we have $Z_i \subseteq c_D^i(X) \backslash X$ and hence $Z = Z_n \subseteq Y$.
\end{IEEEproof}

\begin{example} \label{ex:clD}
Some special classes of digraphs yield famous closure operators (all claims follow from Lemma \ref{lem:clD_alternate}).
\begin{enumerate}
    \item If $D$ is an acyclic digraph, then $\cl_D = U_{0,n}$, i.e. $\cl_D(X) = V$ for all $X$. This can be intuitively explained by the fact that an acyclic digraph comes from a network coding instance without any source or destination: no information can then be transmitted.

    \item If $D$ is the directed cycle $C_n$, then $\cl_{C_n} = U_{1,n}$, i.e. $\cl_D(\emptyset) = \emptyset$ and $\cl_D(v) = V$ for all $v \in V$. Therefore, the solutions are $(n,1,n)$ MDS codes, such as the repetition code. Intuitively, $C_n$ comes from a network coding instance with one source and one destination, and a chain of $n-1$ intermediate nodes each transmitting a message to the next until we reach the destination.

    \item If $D$ is the clique $K_n$, then $\cl_{K_n} = U_{n-1,n}$, i.e $\cl_D(X) = X$ if $|X| \le n-2$ and $v \in \cl_D(V \backslash v)$ for all $v \in V$. Therefore, the solutions of $\cl_{K_n}$ are exactly $(n,n-1,2)$ MDS codes, such as the parity-check code. Intuitively, $K_n$ comes from a generalisation of the butterfly network, with one intermediate node receiving from all sources and transmitting to all destinations.

    \item If $D$ has a loop on each vertex, then $\cl_D = U_{n,n}$, i.e. $\cl_D(X) = X$ for all $X \subseteq V$. This comes from a network with a link from every source to its corresponding destination. 
\end{enumerate}
\end{example}

Since $\cl_D(X) = V$ if and only if $X$ is a feedback vertex set of $D$, we obtain that $\cl_D$ has rank $r_D = n - \mias(D)$. 

\subsection{Closure solvability} \label{sec:solvability}


We now define the closure solvability problem. The instance consists of a closure operator $\cl$ on $V$ with rank $r$, and of a finite alphabet $A$ with $|A| \ge 2$.

\begin{definition}
A {\em coding function} for $(\cl,A)$ is a family $f$ of $n$ partitions of $A^r$ into at most $|A|$ parts such that $f_X = f_{\cl(X)}$ for all $X \subseteq V$.
\end{definition}

The problem is to determine whether there exists a coding function for $(\cl,A)$ such that $f_V$ has $A^r$ parts. That is, we want to determine whether there exists an $n$-tuple $f = (f_1,\ldots,f_n)$ of partitions of $A^r$ in at most $|A|$ parts such that
\begin{align*}
	f_X &= f_{\cl(X)} \quad \mbox{for all } X \subseteq V,\\
	f_V &= E_{A^r}.
\end{align*}

For any partition $g$ of $A^r$, we define its entropy as
$$
	H(g) := r - |A|^{-r} \sum_i |P_i(g)| \log_{|A|} |P_i(g)|.
$$
The equality partition on $A^r$ is the only partition with full entropy $r$. Denoting $H_f(X) := H(f_X)$, we can recast the conditions above as
\begin{align*}
	H_f(v) &\le 1 \quad \mbox{for all } v \in V,\\
	H_f(X) &= H_f(\cl(X)) \quad \mbox{for all } X \subseteq V,\\
	H_f(V) &= r.
\end{align*}
Therefore, $\cl$ is solvable if and only if $H_f(V) = r$ for some coding function $f$ of $\cl$ over $A$.

The first important case is solvability of uniform matroids, which is equivalent to the existence of MDS codes.

\begin{proposition} \label{prop:U}
For all $r$, $n$, and $A$, $U_{r,n}$ is solvable over $A$ if and only if there exists an $(n,r,n-r+1)$-MDS code over an alphabet of cardinality $|A|$.
\end{proposition}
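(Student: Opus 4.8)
The plan is to unwind the definition of a coding function for $\cl = U_{r,n}$, recognise each partition as the kernel of a function $A^r \to A$, and observe that the whole system of constraints becomes ``every $r$ of these functions, taken together, form a bijection $A^r \to A^r$'' — which is precisely the information-set characterisation of MDS codes. First I would simplify the constraints. Since $U_{r,n}(X) = X$ whenever $|X| < r$, the condition $f_X = f_{U_{r,n}(X)}$ is vacuous for such $X$; and since $U_{r,n}(X) = V$ for $|X| \ge r$, and since enlarging a set only refines its common refinement $f_X = \bigvee_{v \in X} f_v$, the requirements $f_X = f_{\cl(X)}$ for all $X$ together with $f_V = E_{A^r}$ collapse to the single statement: $f_X = E_{A^r}$ for every $r$-subset $X \subseteq V$.

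Next I would translate to functions. Each $f_v$ is a partition of $A^r$ into at most $|A|$ parts, hence the kernel of some $\bar f_v : A^r \to A$, and conversely any such tuple $(\bar f_1,\dots,\bar f_n)$ yields a valid coding function. For an $r$-subset $X$, $f_X$ is the kernel of the map $x \mapsto (\bar f_v(x))_{v \in X}$ from $A^r$ to $A^{r}$, so ``$f_X = E_{A^r}$'' says exactly that this map is injective, and hence — equal cardinalities — bijective. Thus $U_{r,n}$ is solvable over $A$ if and only if there are functions $\bar f_1,\dots,\bar f_n : A^r \to A$ such that the joint map onto any $r$ coordinates is a bijection.

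It remains to match this with MDS codes. Given such functions, put $\mathcal{C} = \{(\bar f_1(x),\dots,\bar f_n(x)) : x \in A^r\} \subseteq A^n$; injectivity on any $r$ coordinates forces the full map to be injective, so $|\mathcal{C}| = |A|^r$, and the projection of $\mathcal{C}$ onto any $r$ coordinates is a bijection onto $A^r$. Two distinct codewords therefore cannot agree on $r$ positions, so they differ in at least $n-r+1$ positions; with the Singleton bound this gives minimum distance exactly $n-r+1$, i.e.\ $\mathcal{C}$ is an $(n,r,n-r+1)$-MDS code over an alphabet of size $|A|$. Conversely, from such a code $\mathcal{C}$ — which we may take to be over $A$ itself, only the size mattering — fix one $r$-set $X_0$, use the bijection $\pi_{X_0} : \mathcal{C} \to A^r$ to index codewords by $A^r$, and define $\bar f_v(x)$ to be the $v$-th coordinate of the codeword indexed by $x$; for any $r$-set $X$ the joint map $(\bar f_v)_{v \in X}$ equals $\pi_X \circ \pi_{X_0}^{-1}$, a composition of bijections, so $f_X = E_{A^r}$ and the induced $f$ solves $U_{r,n}$.

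I expect no real obstacle here: the argument is short and elementary. The only point deserving care is the equivalence, for arbitrary (not necessarily linear) codes with $|A|^r$ codewords, between being MDS and having every $r$ coordinates as an information set; I would include that as the self-contained half-line above rather than merely cite it.
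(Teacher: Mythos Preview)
Your argument is correct. The paper itself does not spell out a proof of Proposition~\ref{prop:U}; it alludes to the classical ``MDS $\Leftrightarrow$ every $r$ coordinates form an information set'' argument and then defers to the general solvability-graph machinery of Section~\ref{sec:solvability_graph}. There, Theorem~\ref{th:G} and Corollary~\ref{cor:G} show that $\cl$ is solvable over $A$ iff $\mathrm{G}(\cl,A)$ has an independent set of size $|A|^r$, and Corollary~\ref{cor:U} identifies the edges of $\mathrm{G}(U_{r,n},A)$ as the pairs at Hamming distance at most $n-r$; an independent set of size $|A|^r$ is then precisely an $(n,r,n-r+1)$-MDS code.

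Your proof is the direct, self-contained unpacking of that same idea: the set $\mathcal{C}$ you build is exactly the image $\mathrm{Im}(f)$ that the paper later shows to be an independent set of the solvability graph, and your ``no two codewords agree on $r$ coordinates'' condition is exactly non-adjacency in $\mathrm{G}(U_{r,n},A)$. What you gain is a proof that stands on its own, with no need to develop the solvability graph; what the paper's route buys is that Proposition~\ref{prop:U} drops out as a one-line corollary once that general framework is in place.
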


The proof follows the classical argument that a code of length $n$ with cardinality $|A|^r$and minimum distance $n-r+1$ is separable (hence the term MDS code). We shall formally prove a much more general result in Section \ref{sec:solvability_graph}, therefore we omit the proof of Proposition \ref{prop:U}.

In particular, a solution for $U_{2,n}$ is then equivalent to $n-2$ mutually orthogonal latin squares; they exist for all sufficient large alphabets. This illustrates the complexity of this problem: representing $U_{2,4}$ (i.e., determining the possible orders for two mutually orthogonal latin squares) was wrongly conjectured by Euler and solved in 1960 \cite{BSP60}. 

Combinatorial representations \cite{CGR13} were recently introduced in order to capture some of the dependency relations amongst functions. A solution for the uniform matroid corresponds to a combinatorial representation of its family of bases; however, in general this is not true. Indeed, any family of bases has a combinatorial representation, while we shall exhibit closure operators which are not solvable.

\subsection{Closure solvability and network coding solvability}

We consider a multiple unicast instance: an acyclic network $N$ with $r$ sources $s_1, \ldots, s_r$, $r$ destinations $d_1,\ldots,d_r$, and $m$ intermediate nodes, where each destination $d_i$ requests the message $x_i$ sent by $s_i$. We assume that the messages $x_i$, along with everything carried on one link, is an element of an alphabet $A$. Also, any vertex transmits the same message on all its outgoing links, i.e. we are using the circuit representation reviewed in Section \ref{sec:preliminaries}. We denote the cumulative coding functions at the nodes as $f = (f_1,\ldots,f_n)$, where the first $r$ indeces correspond to the destinations and the other $m$ indeces to the intermediate nodes, and $n=r+m$.

We now convert the network coding solvability problem into a closure solvability problem. Recall the digraph $D$ on $n$ vertices corresponding to the guessing game, reviewed in Section \ref{sec:preliminaries}.

%
%
%

Intuitively, if the destination $d_i$ is able to recover $x_i$ from the messages it receives, it is also able to recover any function $\sigma(x_i)$ of that message. Conversely, if it can recover $\pi(x_i)$ for some permutation $\pi$ of $A$, then it can recover $x_i = \pi^{-1}(\pi(x_i))$ as well. We can then relax the condition and let $d_i$ request any such $\pi(x_i)$. Viewing $x_i$ as a function from $A^r$ to $A$, sending $(x_1,\ldots,x_r)$ to $x_i$, we remark that $\pi(x_i)$ has the same kernel as $x_i$ for any permutation $\pi$. Therefore, the correct relaxation is for $d_i$ to request that the partition assigned to it be the same as that of the source $s_i$.

The relaxation above is one argument to consider partitions instead of functions. The second main argument is that the dependency relations are completely (and elegantly) expressed in terms of partitions, as illustrated in the proof of Theorem \ref{th:clD_NC}.

\begin{theorem} \label{th:clD_NC}
The network $N$ is solvable over $A$ if and only if $\cl_D$ has rank $r$ and is solvable over $A$.
\end{theorem}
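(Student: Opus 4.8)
The plan is to prove the two implications separately, in each case translating between the coding/decoding functions at the nodes of $N$ and their kernels (Section~\ref{sec:functions}). Set $r_D := n - \mias(D)$ for the rank of $\cl_D$. Observe first that deleting the merged source--sink vertices $v_1,\dots,v_r$ from $D$ leaves precisely the intermediate nodes of the acyclic network $N$, so $\{v_1,\dots,v_r\}$ is a feedback vertex set of $D$; hence $r_D \le r$ always, and if moreover $r_D = r$ then $\{v_1,\dots,v_r\}$ is a minimum feedback vertex set, i.e. a basis of $\cl_D$, so $\cl_D(\{v_1,\dots,v_r\}) = V$.

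For the direction ``$N$ solvable $\Rightarrow$ $\cl_D$ has rank $r$ and is solvable'', fix a solution of $N$ over $A$. Since $N$ is acyclic, the message on each edge is a fixed function $A^r \to A$ of the source tuple $x = (x_1,\dots,x_r)$, and by solvability the decoder at $d_i$ recovers $x_i$ from the messages on the arcs into $d_i$ (which are the arcs into $v_i$ in $D$). To get $r_D = r$ I would argue: if $b$ is any feedback vertex set of $D$, then $V\setminus b$ induces an acyclic subgraph, and propagating along a topological order of $V\setminus b$ shows that the messages carried at the vertices of $b$ determine every $x_i$ --- directly when $v_i\in b$, and via the decoder at $d_i$ together with the propagation when $v_i\notin b$. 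Hence $x \mapsto (\text{messages at } b)$ is injective, so $|A|^{|b|}\ge|A|^r$ and $|b|\ge r$; with $r_D\le r$ this forces $r_D = r$. Then I would assign to each intermediate node $v$ the kernel $f_v$ of its coding function and to each $v_i$ the kernel of the $i$-th coordinate projection; each is a partition of $A^r$ into at most $|A|$ parts. Whenever $v^-\subseteq X$, the message at $v$ (its coding function if $v$ is intermediate, the decoder if $v = v_i$) is a function of the messages at $X$, so $f_X$ refines $f_v$ and $f_{X\cup v} = f_X$; iterating over $c_D$ yields $f_X = f_{\cl_D(X)}$ for all $X$, so $f$ is a coding function for $(\cl_D,A)$, and $f_V$ refines $\bigvee_{i=1}^r f_{v_i} = E_{A^r}$, giving $f_V = E_{A^r}$.

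For the converse, assume $r_D = r$ and fix a coding function $f$ of $\cl_D$ over $A$ with $f_V = E_{A^r}$. Since $\cl_D(\{v_1,\dots,v_r\}) = V$ we have $\bigvee_{i=1}^r f_{v_i} = E_{A^r}$, so for any choice of functions $\bar f_v:A^r\to A$ with kernel $f_v$, the map $x\mapsto(\bar f_{v_1}(x),\dots,\bar f_{v_r}(x))$ is injective, hence a bijection of $A^r$; relabelling $A^r$ by this bijection we may take each $\bar f_{v_i}$ to be the $i$-th coordinate projection. For every vertex $v$ we have $v\in c_D(v^-)\subseteq\cl_D(v^-)$, so $f_{v^-} = f_{\cl_D(v^-)}$ refines $f_v$, which means $\bar f_v = g_v\bigl((\bar f_u)_{u\in v^-}\bigr)$ for some function $g_v$. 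Using $g_v$ as the coding function at intermediate $v$ and as the decoder at $d_i$ when $v = v_i$, letting $s_i$ inject $x_i$, and inducting on a topological order of $N$, every node $v$ ends up carrying $\bar f_v(x)$; in particular $d_i$ outputs $\bar f_{v_i}(x) = x_i$, so $N$ is solvable over $A$.

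I expect the main obstacle to be conceptual rather than computational: in the converse direction one has to recognise that the ambient set $A^r$ of the closure operator is identified with the space of source tuples only after the global bijection $x\mapsto(\bar f_{v_1}(x),\dots,\bar f_{v_r}(x))$, and then check that the induced local maps $g_v$ really do compose consistently under acyclic propagation. The other step needing genuine care is the counting argument for $r_D = r$ in the forward direction, since it is the only place that uses solvability of $N$ rather than just the formal axioms of $\cl_D$ and the structure of $D$.
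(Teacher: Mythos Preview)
Your proof is correct and follows the same route as the paper: pass to kernels of the coding functions in both directions and verify that $f_{v\cup v^-}=f_{v^-}$ propagates to $f_X=f_{\cl_D(X)}$. You are more explicit than the paper about the rank condition: your counting argument that every feedback vertex set of $D$ must have size at least $r$ (hence $r_D\ge r$) is an addition, whereas the paper's proof leaves this point implicit and relies on the background fact $g(D,A)\le r$ from Section~\ref{sec:preliminaries_guessing}, so your version is slightly more self-contained.
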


\begin{proof}
Let $\bar{f}$ be a solution for $N$. Then it is easy to check that $f := (\ker \bar{f}_1, \ldots, \ker \bar{f}_n)$ is a family of partitions of $A^r$ into at most $|A|$ parts such that $f_V = E_{A^r}$ and $f_{v \cup v^-} = f_{v^-}$. As such, $f_X = f_{c_D(X)}$ for all $X \subseteq V$ and hence $f_X = f_{\cl_D(X)}$. 

Conversely, let $f$ be a solution for $\cl_D$ over $A$ and let $\bar{f}_v$ on $N$ be any collection of functions with kernels $\ker \bar{f}_{s_i} = \ker \bar{f}_{d_i} = f_i$ for all $1 \le i \le r$ and $\ker \bar{f}_v = f_v$ for all $r+1 \le v \le n$. Since $f_{v \cup v^-} = f_{v^-}$, we have that $\bar{f}_v$ only depends on $f_{v^-}$; the number of parts of $f_v$ indicates that $\bar{f}_v: A^n \to A$; finally, $f_{s_1,\ldots,s_r} = f_V =  E_{A^r}$ indicates that $\bar{f}$ is a solution for $N$.
\end{proof}

We remark that the closure operator approach differs from Riis's guessing game approach. Although it also gets rid of the source/intermediate node/receiver hierarchy and works on the same digraph, the distinction is in the fact that now $f$ corresponds to the cumulated coding functions.

\section{Main results} \label{sec:solvable}

\subsection{How to use closure solvability}

So far, we have considered any possible closure operator. Let us reduce the scope of our study by generalising some concepts arising from matroid theory. 

First, we say that a vertex is a {\em loop} if it belongs to the closure of the empty set. It is clear that removing $\cl(\emptyset)$ from $V$ does not affect solvability (any vertex from $\cl(\emptyset)$ is useless). We therefore assume that $\cl(\emptyset) = \emptyset$. In particular, we only consider digraphs with positive minimum in-degree or in other words, that have a cycle. 

Second, we say that $\cl$ is {\em separable} if for all $a,b \in V$ such that $a \notin \cl(b)$ and $b \notin \cl(a)$, we have $\cl(a) \cap \cl(b) = \emptyset$. Any matroid is separable; likewise it is easily seen that any $D$-closure is separable too. 
If $\cl$ is separable, we can further simplify the problem in a more general fashion than the so-called parallel elements in a matroid. There exists $V'$ such that $V$ is partitioned into parts $\{\cl(v'): v' \in V'\}$; for any $u \in V$, there exists $v' \in V'$ such that $\cl(u) \subseteq \cl(v')$. Again, considering the closure operator $\cl'$ on $V'$ defined by $\cl'(X') = \cl(X') \cap V'$ does not affect solvability (since if $\cl$ is solvable, then there is a solution for $\cl$ where $f_u = f_{v'}$ for all $u \in \cl(v')$). Therefore, we can always restrict ourselves to $D$-closures where $\cl_D(v) = v$ for all $v$. In other words, we have just removed all vertices of in-degree one and by-passed them instead. Clearly, these vertices of degree one are useless for network coding, as they do not bring any more combinations. The only thing they can do is forward the symbol they receive. As such, we might as well by-pass them.


There is a natural partial order on the family of closure operators of $V$. We denote $\cl_1 \le \cl_2$ if for all $X$, $\cl_1(X) \subseteq \cl_2(X)$. This partial order has maximum element $U_{0,n}$ (with $\cl(X) = V$ for all $X \subseteq V$) and minimum element $U_{n,n}$ (where $\cl(X) = X$ for all $X$). 

Any tuple $f$ of partitions of $A^r$ into at most $|A|$ parts naturally yields a closure operator on $V$: we define
\begin{align*}
    \cl_f(X) &:= \{v \in V: f_{X \cup v}  = f_X\}\\
    &= \{v \in V: H_f(X \cup v) = H_f(X)\}.
\end{align*}

\begin{proposition} \label{prop:cl1<cl2}
$f$ is a coding function for $\cl$ if and only if $\cl \le \cl_f$. Therefore, if $\cl_1 \le \cl_2$ have the same rank and $\cl_2$ is solvable over $A$, then $\cl_1$ is solvable over $A$.
\end{proposition}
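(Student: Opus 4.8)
First I would prove the equivalence "$f$ is a coding function for $\cl$ if and only if $\cl \le \cl_f$." The key observation is that $\cl_f$ as defined already satisfies the closure axioms (this should be checked, or taken as routine from the definition via entropy), and that for any tuple $f$ we have the chain rule $f_{X \cup v} = f_X \vee f_v$, so $v \in \cl_f(X)$ exactly when adding $f_v$ contributes nothing over $f_X$. The condition for $f$ to be a coding function is $f_X = f_{\cl(X)}$ for all $X$; I would show this is equivalent to $\cl(X) \subseteq \cl_f(X)$ for all $X$, i.e. $\cl \le \cl_f$.

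For the forward direction, suppose $f$ is a coding function and let $v \in \cl(X)$. Then $\cl(X \cup v) = \cl(X)$ by idempotence and isotonicity (using axiom \ref{it:axiom}: $v \subseteq \cl(X)$ gives $\cl(v) \subseteq \cl(X)$, hence $\cl(X \cup v) = \cl(\cl(X) \cup \cl(v)) = \cl(X)$). Applying the coding-function property twice, $f_{X \cup v} = f_{\cl(X \cup v)} = f_{\cl(X)} = f_X$, so $v \in \cl_f(X)$; thus $\cl(X) \subseteq \cl_f(X)$. For the converse, suppose $\cl \le \cl_f$. Fix $X$ and enumerate $\cl(X) \setminus X = \{v_1, \ldots, v_k\}$. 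Since each $v_j \in \cl(X) \subseteq \cl(X \cup \{v_1,\dots,v_{j-1}\}) \subseteq \cl_f(X \cup \{v_1,\dots,v_{j-1}\})$, adding $v_j$ leaves the common refinement unchanged at each step; inducting on $j$ gives $f_{\cl(X)} = f_X$. This establishes the first sentence.

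The second sentence is then a short deduction. Suppose $\cl_1 \le \cl_2$, both have rank $r$, and $\cl_2$ is solvable over $A$ via a coding function $f$. Then $f$ is a tuple of partitions of $A^r$ into at most $|A|$ parts with $f_V = E_{A^r}$, and by the equivalence just proved, $\cl_2 \le \cl_f$. By transitivity of $\le$, $\cl_1 \le \cl_2 \le \cl_f$, so the same $f$ is a coding function for $\cl_1$. Moreover $\cl_1$ has rank $r$, so a coding function for $\cl_1$ is defined on $A^r$ with the right number of parts, and $f_V = E_{A^r}$ already witnesses solvability; hence $\cl_1$ is solvable over $A$. (The rank hypothesis is needed only to ensure the coding functions of $\cl_1$ and $\cl_2$ live over the same ground set $A^r$ and that $f_V = E_{A^r}$ is the correct solvability condition for $\cl_1$.)

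**Main obstacle.** The only genuinely delicate point is the induction in the converse direction of the first equivalence: one must be careful that at each step $v_j$ genuinely lies in $\cl_f$ of the \emph{current} accumulated set, which uses isotonicity of $\cl$ together with $\cl \le \cl_f$, and that "adding a partition already subsumed contributes nothing" is exactly the defining property of $\cl_f$. Everything else — checking $\cl_f$ is a closure operator, and the transitivity argument for the second statement — is routine. I would also flag explicitly why the equal-rank hypothesis cannot be dropped: without it, $\cl_1$ might have rank $r' \ne r$, and a coding function for $\cl_1$ would need to partition $A^{r'}$, not $A^r$.
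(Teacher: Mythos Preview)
Your proposal is correct and follows essentially the same approach as the paper: both directions of the equivalence and the transitivity deduction match the paper's argument closely. One minor simplification available (and used in the paper) for the converse: since each $v_j \in \cl(X) \subseteq \cl_f(X)$ directly, you have $f_X \vee f_{v_j} = f_X$ for every $j$, so you can peel off the $f_{v_j}$'s from $f_{\cl(X)} = f_X \vee f_{v_1} \vee \cdots \vee f_{v_k}$ without invoking isotonicity of $\cl$ or the growing accumulated set.
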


\begin{IEEEproof}
If $f$ is a coding function for $\cl$, then $f_{\cl(X)} = f_{X \cup v} = f_X$ for all $v \in \cl(X)$ and hence $\cl \le \cl_f$. Conversely, if $\cl(X) \subseteq \cl_f(X)$, then denote $\cl(X) = \{v_1,\ldots,v_k\}$ and $f_{\cl(X)} = f_{X \cup v_1} \vee f_{v_2,\ldots,v_k} = f_X \vee f_{v_2,\ldots, v_k} = \ldots = f_X$.

Since $\cl_2$ is solvable, there exists a coding function $f$ for $\cl_2$ with entropy $r$, where $r$ is the rank of $\cl_1$ and $\cl_2$. But then $\cl_1 \le \cl_2 \le \cl_f$ and hence $f$ is also a solution for $\cl_1$.
\end{IEEEproof}

If $\cl$ is a matroid, the solvability problem is equivalent to determining whether they form a secret-sharing matroid, i.e. whether there exists a scheme whose access structure is the family of spanning sets of that matroid.

\begin{theorem} \label{th:solvable_matroid}
If $\cl$ is a matroid, then $\cl$ is solvable over some alphabet if and only if it is a secret-sharing matroid.
\end{theorem}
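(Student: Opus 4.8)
The plan is to route the statement through the classical notion of a \emph{representation by partitions}: by \cite{Mat99} a matroid is a secret-sharing matroid if and only if it admits a representation by partitions, so it suffices to prove that a matroid $\cl$ of rank $r$ on $V$ is solvable over some alphabet if and only if it admits such a representation (over a suitable uniform alphabet). The observation that makes both directions go through is that the functional $H_f$ of Section~\ref{sec:solvability} is exactly the base-$|A|$ Shannon entropy of the uniform random variable on $A^r$ pushed through a function whose kernel is the relevant partition, and that common refinement of partitions corresponds to the joint random variable. Hence $H_f$ is a non-decreasing, submodular set function with $H_f(\emptyset)=0$ and $H_f(v)\le 1$ for all $v$, and moreover $\sigma$ refines $\tau$ with $H(\sigma)=H(\tau)$ forces $\sigma=\tau$.

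For the forward direction, let $f$ be a coding function solving $\cl$ over $A$, so $f_V=E_{A^r}$ and $f_X=f_{\cl(X)}$ for all $X$. The claim is that $H_f(X)=\rk_\cl(X)$ for every $X\subseteq V$. Fix $X$ and a basis $I$ of $X$ in the matroid $\cl$, so $|I|=\rk_\cl(X)$ and $I\subseteq X\subseteq\cl(I)$; since $f$ is a coding function, $f_I=f_{\cl(I)}$, and because $f_X$ lies between $f_I$ and $f_{\cl(I)}=f_I$ in the refinement order we get $f_X=f_I$, whence $H_f(X)=H_f(I)\le\sum_{v\in I}H_f(v)\le|I|$ by subadditivity. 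Now suppose $I$ is independent in $\cl$ and extend it to a basis $b$ of $\cl$; then $f_b=f_{\cl(b)}=f_V=E_{A^r}$, so $H_f(b)=r$, while submodularity gives $r=H_f(b)\le H_f(I)+\sum_{v\in b\setminus I}H_f(v)\le H_f(I)+(r-|I|)$, hence $H_f(I)\ge|I|$ and therefore $H_f(I)=|I|$. Applying this to the basis $I$ of an arbitrary $X$ yields $H_f(X)=H_f(I)=|I|=\rk_\cl(X)$. Thus $f$ is a family of partitions of $A^r$ into at most $|A|$ parts whose entropy function equals the rank function of $\cl$, i.e. a representation of $\cl$ by partitions, so by \cite{BD91, Mat99} the matroid $\cl$ is a secret-sharing matroid.

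For the converse, suppose $\cl$ is a secret-sharing matroid. By \cite{Mat99} it has a representation by partitions, and by ideality this representation can be taken over a uniform probability space of the form $A^r$ with each partition having at most $|A|$ parts, so that the entropy function of the family $f=(f_v)_{v\in V}$ equals $\rk_\cl$. It remains to check that $f$ is a coding function solving $\cl$: for any $X$, since $X\subseteq\cl(X)$ the partition $f_{\cl(X)}$ refines $f_X$, while $H_f(\cl(X))=\rk_\cl(\cl(X))=\rk_\cl(X)=H_f(X)$, so $f_X=f_{\cl(X)}$; and $H_f(V)=\rk_\cl(V)=r$ is the full entropy, so $f_V=E_{A^r}$. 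Hence $\cl$ is solvable over $A$.

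The step I expect to be the \textbf{main obstacle} is the rigidity claim in the forward direction, namely that a coding function for a \emph{matroid} is automatically entropy-extremal on every subset ($H_f=\rk_\cl$ everywhere, not merely $H_f\le\rk_\cl$): this is exactly what turns ``solving the closure operator'' into ``building an ideal scheme'', and it hinges on combining subadditivity/submodularity of entropy with the matroid basis-exchange structure. A secondary technical point is the normalization used in the converse — extracting from the abstract existence of an ideal scheme a representation over a concrete uniform alphabet $A^r$ with partitions into at most $|A|$ parts — which is precisely where ideality, and the results of \cite{BD91, Mat99}, are invoked.
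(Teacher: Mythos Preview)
Your argument is correct and follows essentially the same route as the paper. The forward direction is identical in spirit: both you and the paper prove the rigidity claim $H_f(X)=\rk(X)$ for every $X$, getting the upper bound from a basis of $X$ and the lower bound from subadditivity together with a complement that fills out a basis of the whole matroid (the paper phrases this as ``there exists $Y$ with $\rk(X)+\rk(Y)=r$'', you phrase it as extending $I$ to a basis $b$ --- the set $b\setminus I$ is exactly the paper's $Y$). For the converse, the paper simply writes ``by definition, a secret-sharing matroid is solvable over some alphabet'', invoking implicitly the equivalence with partition representations from \cite{Mat99} that is recalled in the introduction; you spell this out explicitly and verify that the resulting $f$ is a coding function, which is a welcome bit of extra care but not a different idea.
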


\begin{IEEEproof}
By definition, a secret-sharing matroid is solvable over some alphabet. Conversely, let $f$ be a solution for $\cl$. Let $\rk$ be the rank function associated to $\cl$, i.e. $\rk(X) = \min\{|b|: \cl(b) = \cl(X)\}$ and $\cl(X) = \{v \in V: \rk(X) = \rk(X \cup v)\}$ \cite{Oxl06}. Then for any $X$, we have $H_f(X) = H_f(b) \le |b| = \rk(X)$. Moreover, there exists $Y$ such that $\cl(X \cup Y) = V$ and $\rk(Y) + \rk(X) = r$, hence $H_f(X) \ge H_f(V) - H_f(Y) \ge r - \rk(Y) \ge \rk(X)$. Thus, $H_f(X) = \rk(X)$ for all $X$ and $\cl_f(X) = \{v \in V: \rk(X) = \rk(X \cup v)\} = \cl(X)$.
\end{IEEEproof}

\subsection{Solvable networks} \label{sec:rank_2}

In this subsection, we apply the conversion of network coding solvability in order to closure solvability to determine that some classes of networks are solvable. Using general closure operators allows us to think outside of networks. In particular, it allows us to use uniform matroids, which have been proved to be solvable over many alphabets (see Proposition \ref{prop:U}), but which do not arise from networks in general (see Proposition \ref{prop:Urn} below).

\begin{proposition} \label{prop:Urn}
The uniform matroid $U_{r,n}$ is the $D$-closure of a digraph $D$ if and only if $r \in \{0,1,n-1,n\}$.
\end{proposition}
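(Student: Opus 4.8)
The plan is to characterise exactly which uniform matroids arise as $D$-closures. First I would settle the easy direction: for $r \in \{0,1,n-1,n\}$, exhibit a digraph $D$ with $\cl_D = U_{r,n}$. Example \ref{ex:clD} already provides all four: the acyclic digraph gives $U_{0,n}$, the directed cycle $C_n$ gives $U_{1,n}$, the clique $K_n$ gives $U_{n-1,n}$, and a digraph with a loop on every vertex gives $U_{n,n}$. So this direction reduces to citing Example \ref{ex:clD}.

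The substance is the converse: if $\cl_D = U_{r,n}$, then $r \in \{0,1,n-1,n\}$. The key tool is Lemma \ref{lem:clD_alternate}, which says $\cl_D(X) \setminus X$ is the largest set inducing an acyclic subgraph with in-neighbourhoods inside $\cl_D(X)$; equivalently, via the remark after Example \ref{ex:clD}, $\cl_D(X) = V$ iff $X$ is a feedback vertex set, so $r = r_D = n - \mias(D)$. The strategy is to assume $2 \le r \le n-2$ and derive a contradiction by probing the structure of $D$ with singletons and pairs. Since $\cl_D$ has rank $r \ge 2$, no singleton is a basis, so $\cl_D(v) \ne V$ for all $v$, which (as noted in Section \ref{sec:solvability}, "we have just removed all vertices of in-degree one") forces... actually more carefully: $\cl_D(v) = V$ would mean $v$ alone is a feedback vertex set. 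So for $r \ge 2$, every vertex $v$ has $\cl_D(v) \ne V$, meaning $V \setminus v$ does not induce an acyclic subgraph — i.e., $D - v$ contains a cycle for every $v$. On the other hand, $r \le n-2$ means some set of size $n-2$ is a feedback vertex set, i.e. $\mias(D) \ge 2$, so there exist two non-adjacent vertices (in the sense of no cycle through them, in fact an induced acyclic pair). I would then compute $\cl_D(v)$ for a single vertex $v$: since $U_{r,n}(v) = v$, we need $\cl_D(v) = v$, i.e. $c_D(v) = v$, i.e. no vertex has in-neighbourhood contained in $\{v\}$ — so no vertex $u \ne v$ has $u^- \subseteq \{v\}$, and no vertex has a loop and no other in-neighbours except itself, etc. Doing this for all $v$ shows every vertex has in-degree $\ge 2$, or has an in-neighbour distinct from... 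I need to be careful. The cleanest line: $\cl_D(v) = v$ for all $v$ forces that no vertex has in-degree exactly $1$ and no vertex of in-degree... hmm, a vertex $u$ with $u^- = \{v, u\}$ (loop plus one edge) would also have $u^- \subseteq \{v\} \cup c_D^0(\{v\}) $? No: $c_D(\{v\}) = \{v\} \cup \{u : u^- \subseteq \{v\}\}$, and $\{v,u\} \not\subseteq \{v\}$. So loops are fine at this level.

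The main obstacle, and where I would focus the real work, is showing that once $r \ge 2$ (so every $D - v$ has a cycle) and $r \le n-2$ (so $\mias(D)\ge 2$), one cannot simultaneously have $\cl_D$ coincide with $U_{r,n}$ on all intermediate-sized sets. I expect the contradiction to come from pairs: take an induced acyclic pair $\{u,w\}$ of maximum use, or take $b$ a basis and look at $\cl_D(b \setminus v)$ for $v \in b$; since $|b \setminus v| = r - 1 < r$ and $\cl_D = U_{r,n}$, we need $\cl_D(b \setminus v) = b \setminus v$, but by Lemma \ref{lem:clD_alternate} this says the acyclic set $V \setminus b$ (which witnesses $\mias(D) = n - r \ge 2$) contributes nothing when we remove one element of $b$ — i.e. $(V \setminus b)^- \not\subseteq (b \setminus v) \cup (\text{anything acyclic it can reach})$. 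Combining the fact that $V \setminus (b\setminus v) = (V\setminus b) \cup v$ induces a cyclic subgraph (since $b \setminus v$ is not a feedback vertex set) with the fact that $V \setminus b$ is acyclic, every cycle in $D - (b \setminus v)$ must pass through $v$; running this over all $v \in b$ and also using that $D - v$ has a cycle avoiding... this should pin down that $|b| = 1$ or $|b| = n-1$, contradicting $2 \le r \le n-2$. I would organise the final write-up as: (1) easy direction from Example \ref{ex:clD}; (2) reduce converse to $2 \le r \le n-2$; (3) singleton analysis forcing $\cl_D(v) = v$, hence in-degree conditions; (4) basis-complement analysis via Lemma \ref{lem:clD_alternate} forcing every cycle of $D - (b\setminus v)$ through $v$; (5) derive the contradiction on $|b|$.
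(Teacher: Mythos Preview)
Your easy direction and the reduction of the converse to $2\le r\le n-2$ via the feedback-vertex-set characterisation are exactly how the paper begins. The gap is in your steps (4)--(5): you never actually produce a contradiction. The observations you collect (every $D-v$ contains a cycle; for a basis $b$ every cycle of $D[(V\setminus b)\cup v]$ passes through $v$; in-degree bounds from $\cl_D(v)=v$) are all correct but do not by themselves force anything, and ``this should pin down that $|b|=1$ or $|b|=n-1$'' is a hope, not an argument.

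What you are missing is a sharp structural consequence that the paper extracts directly from the two facts you already have (every $(n-r)$-set is acyclic; every $(n-r+1)$-set is not): every $(n-r+1)$-set induces \emph{precisely} a directed Hamiltonian cycle, and hence every $(n-r)$-set induces precisely a directed path. Hamiltonicity is forced because a shorter cycle would live inside an $(n-r)$-set; absence of chords is forced because any chord, forward or backward, produces a cycle on a proper subset, again of size at most $n-r$. With this in hand the contradiction is one line: take a path $v_1\to\cdots\to v_{n-r}$ and two further vertices $v_{n-r+1},v_{n-r+2}$ (this is where $r\ge 2$ enters); each of $\{v_1,\ldots,v_{n-r},v_{n-r+1}\}$ and $\{v_1,\ldots,v_{n-r},v_{n-r+2}\}$ must be a Hamiltonian cycle extending the given path, so $v_{n-r}\to v_{n-r+1}$ and $v_{n-r}\to v_{n-r+2}$ are both arcs; but then in the $(n-r+1)$-set $\{v_2,\ldots,v_{n-r+2}\}$ (this is where $r\le n-2$ enters, so that dropping $v_1$ still leaves $n-r+1$ vertices) the vertex $v_{n-r}$ has out-degree $2$, impossible in a directed cycle. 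Your basis-complement language is compatible with all of this---indeed $V\setminus b$ is one of the $(n-r)$-sets in question---but without isolating the ``exactly a cycle / exactly a path'' claim the argument does not close.
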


\begin{IEEEproof}
The cases $r =0,1,n-1,n$ respectively have been illustrated in Example \ref{ex:clD}. Conversely, suppose a digraph has $D$-closure $U_{r,n}$, where $2 \le r \le n-2$. Then any set of $n-r$ vertices induces an acyclic subgraph, while any set of $n-r+1$ vertices induces a cycle. This implies that any set of $n-r$ vertices induces a (directed) path. Without loss, let $v_1,\ldots,v_{n-r}$ induce a path (in that order), then $v_1,\ldots,v_{n-r},v_{n-r+1}$ induce a cycle, and so do $v_1,\ldots,v_{n-r},v_{n-r+2}$. Therefore, in the subgraph induced by $v_2,\ldots,v_{n-r+2}$, the vertex $v_{n-r}$ has out-degree $2$ and hence that graph is not a cycle.
\end{IEEEproof}

We can then prove that all digraphs with minimum degree equal to the rank, or with rank $2$, are solvable. Note that the case of rank $2$ has already been proved in \cite{WS10} using a much longer argument.

\begin{theorem} \label{th:rank_2}
Any $\cl_D$ of rank $2$ is solvable over all sufficiently large alphabets. Moreover, if the mininum in-degree of $D$ is equal to its rank, then $\cl_D$ is solvable by linear functions over all sufficiently large prime powers.
\end{theorem}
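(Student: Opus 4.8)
The plan for both statements is the same: I would show that $\cl_D$ lies below a uniform matroid of the same rank in the order of Proposition \ref{prop:cl1<cl2}, and then invoke the solvability of uniform matroids (Proposition \ref{prop:U}) via MDS codes. The whole difficulty is thus in identifying the right uniform matroid and checking the domination.

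For the first statement, I would normalise $\cl_D$ using the reductions of Section \ref{sec:solvable}: delete the loops $\cl_D(\emptyset)$ and by-pass every vertex of in-degree one. These preserve solvability and, as one checks, also preserve the rank (by-passing an in-degree-one vertex lowers both $n$ and $\mias(D)$ by one, and deleting $\cl_D(\emptyset)$ lowers both by $|\cl_D(\emptyset)|$). After this normalisation $\cl_D(\emptyset)=\emptyset$ and $\cl_D(v)=v$ for every $v$, so $\cl_D(X)=X=U_{2,n}(X)$ whenever $|X|\le 1$ and $\cl_D(X)\subseteq V=U_{2,n}(X)$ whenever $|X|\ge 2$; hence $\cl_D\le U_{2,n}$, with both operators of rank $2$. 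By Proposition \ref{prop:U}, a solution of $U_{2,n}$ over an alphabet of size $s$ is the same thing as $n-2$ mutually orthogonal Latin squares of order $s$, which exist for all sufficiently large $s$, and Proposition \ref{prop:cl1<cl2} transfers this solution to $\cl_D$.

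For the second statement, write $r$ for the common value of the minimum in-degree of $D$ and the rank of $\cl_D$; we may assume $r\ge 1$, since otherwise $D$ is acyclic and $\cl_D=U_{0,n}$ is trivially solvable. The key step is to prove that $\cl_D(X)=X$ for every $X$ with $|X|\le r-1$. Indeed, if $Y:=\cl_D(X)\backslash X$ were nonempty, then by Lemma \ref{lem:clD_alternate} the subgraph $D[Y]$ is acyclic and $Y^-\subseteq X\cup Y$; picking $w\in Y$ with no in-neighbour inside $Y$ (possible since $D[Y]$ is acyclic) gives $w^-\cap Y=\emptyset$, hence $w^-\subseteq X$ and so $r\le|w^-|\le|X|\le r-1$, a contradiction. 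Therefore $\cl_D(X)\subseteq U_{r,n}(X)$ for all $X$, i.e. $\cl_D\le U_{r,n}$, and both have rank $r$. Now $U_{r,n}$ is solvable over $\GF(q)$ whenever $\GF(q)$ carries an $(n,r,n-r+1)$-MDS code (Proposition \ref{prop:U}), for instance a (doubly extended) Reed--Solomon code, which is available for every sufficiently large prime power $q$; as this code is linear, Proposition \ref{prop:cl1<cl2} delivers a linear solution of $\cl_D$ over every sufficiently large prime power.

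The genuine content of both arguments is the single observation that the hypotheses force $\cl_D$ to coincide with a uniform matroid on every set of size below the rank; everything else reduces to Proposition \ref{prop:cl1<cl2} together with classical facts about MDS codes, mutually orthogonal Latin squares, and Reed--Solomon codes. I expect the only delicate point to be the verification that the normalising reductions of Section \ref{sec:solvable} do not change the rank, which is precisely what makes Proposition \ref{prop:cl1<cl2} applicable in the rank-$2$ case.
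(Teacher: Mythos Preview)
Your proposal is correct and follows essentially the same route as the paper: normalise so that $\cl_D(v)=v$, deduce $\cl_D\le U_{2,n}$ (resp.\ $\cl_D\le U_{r,n}$), and apply Proposition~\ref{prop:cl1<cl2} together with the solvability of uniform matroids via MDS codes. The only cosmetic difference is that for the second statement the paper observes directly that $c_D(X)=X$ whenever $|X|<r$ (since no vertex of in-degree $\ge r$ can have its in-neighbourhood contained in $X$), whereas you reach the same conclusion through Lemma~\ref{lem:clD_alternate}; your added check that the rank-$2$ normalisations preserve rank is a detail the paper leaves implicit.
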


\begin{IEEEproof}
The simplifications above mean that we can assume $\cl_D(v) = v$ for all $v \in V$. This is equivalent to $\cl_D \le U_{2,n}$, therefore by Proposition \ref{prop:cl1<cl2}, any digraph with rank 2 is solvable whenever $U_{2,n}$ is.

Moreover, suppose the minimum in-degree is equal to the rank $r$. Then for any $X \subseteq V$ with $|X| \le r$, we have $c_D(X) = X$ and hence $\cl_D(X) = X$; therefore, $\cl_D \le U_{r,n}$. Again Proposition \ref{prop:cl1<cl2} yields the result.
\end{IEEEproof}

\section{Combining closure operators} \label{sec:combining}

In this section, we let $V_1, V_2 \subseteq V$ with respective cardinalities $n_1$ and $n_2$ such that $V_1 \cap V_2 = \emptyset$ and $V_1 \cup V_2 = V$. For any $X \subseteq V$, we denote $X_1 = X \cap V_1$ and $X_2 = X \cap V_2$. We also let $\cl_1, \cl_2$ be closure operators of rank $r_1$ and $r_2$ over $V_1$ and $V_2$, respectively.

\subsection{Disjoint and unidirectional unions}

We first generalise some definitions from matroid theory \cite{Oxl06}.

\begin{definition} \label{def:clS}
For any closure operator $\cl$ and any $V_2 \subseteq V$, the {\em deletion of $V_2$} and the {\em contraction of $V_2$} from $\cl$ are the closure operators defined on $V_1$ by
\begin{align*}
	\cl \backslash^{V_2} (X) &:= \cl(X) \backslash V_2\\
	\cl /_{V_2} (X) &:= \cl(X \cup V_2) \backslash V_2
\end{align*}
for any $X \subseteq V_1$.
\end{definition}



\begin{proposition} \label{prop:clD|S}
If $\cl_D$ is the closure operator associated to the digraph $D$, then for any $V_2 \subseteq D$, $\cl_{D[V_1]} = \cl_D /_{V_2}$, where $D[V_1]$ is the digraph induced by the vertices in $V_1$. Thus $r(\cl_D /_{V_2}) = |V_1| - \mias(D[V_1])$ for any $V_1$.
\end{proposition}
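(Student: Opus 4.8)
The plan is to reduce both sides to the alternate description of the $D$-closure given in Lemma~\ref{lem:clD_alternate}, and then to observe that the two resulting extremal conditions are literally the same set-theoretic condition. Throughout, fix an arbitrary $X \subseteq V_1$, and recall that since $V_1$ and $V_2$ partition $V$ we have $V \setminus (X \cup V_2) = V_1 \setminus X$.

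First I would unpack the right-hand side. Applying Lemma~\ref{lem:clD_alternate} to $D$ and the set $X \cup V_2$, the set $W := \cl_D(X \cup V_2) \setminus (X \cup V_2)$ is the largest subset of $V_1 \setminus X$ that induces an acyclic subgraph of $D$ and satisfies $W^- \subseteq W \cup X \cup V_2$, where $W^-$ is the in-neighbourhood taken in $D$. By extensivity, $\cl_D(X \cup V_2) = (X \cup V_2) \cup W$ with the union disjoint, so $\cl_D /_{V_2}(X) = \cl_D(X \cup V_2) \setminus V_2 = X \cup W$.

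Next I would unpack the left-hand side. Applying Lemma~\ref{lem:clD_alternate} to $D[V_1]$ and $X$, the set $Y := \cl_{D[V_1]}(X) \setminus X$ is the largest subset of $V_1 \setminus X$ inducing an acyclic subgraph of $D[V_1]$ whose in-neighbourhood in $D[V_1]$ is contained in $Y \cup X$. Two elementary remarks identify this condition with the one defining $W$: (i) for $Z \subseteq V_1$ one has $D[Z] = (D[V_1])[Z]$, so $Z$ induces an acyclic subgraph of $D[V_1]$ if and only if it does in $D$; (ii) for $Z \subseteq V_1$ the in-neighbourhood of $Z$ in $D[V_1]$ is $Z^- \cap V_1$, hence it is contained in $Z \cup X$ if and only if $Z^- \subseteq Z \cup X \cup V_2$ (vertices of $Z^-$ lying in $V_2$ being automatically allowed). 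Thus $Y$ and $W$ are the maximum element of one and the same family of subsets of $V_1 \setminus X$, so $Y = W$, and therefore $\cl_{D[V_1]}(X) = X \cup Y = X \cup W = \cl_D /_{V_2}(X)$. Since $X \subseteq V_1$ was arbitrary, $\cl_{D[V_1]} = \cl_D /_{V_2}$.

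Finally, the rank statement is immediate from what is already recorded after Example~\ref{ex:clD}: for any digraph $H$, $\cl_H(b) = V(H)$ exactly when $b$ is a feedback vertex set of $H$, so $\cl_H$ has rank $|V(H)| - \mias(H)$. Taking $H = D[V_1]$ and using the identity just proved gives $r(\cl_D /_{V_2}) = |V_1| - \mias(D[V_1])$. The only place that needs genuine care is the bookkeeping between in-neighbourhoods and acyclicity computed in $D$ versus in $D[V_1]$; once remarks (i) and (ii) above are in place, the coincidence of the two extremal conditions is forced and there is no further obstacle.
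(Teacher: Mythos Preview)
Your proof is correct and follows essentially the same route as the paper: both arguments invoke Lemma~\ref{lem:clD_alternate} on each side, reduce the question to identifying the largest acyclic $Y \subseteq V_1 \setminus X$ with the appropriate in-neighbourhood containment, and rely on the two bookkeeping observations you labelled (i) and (ii) to see that the extremal conditions coincide. The paper's version is terser but structurally identical, and the rank claim is dispatched the same way.
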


\begin{IEEEproof}
Let $X \subseteq V_1$, then any subset $Y$ of $V_1 \backslash X = V \backslash (X \cup V_2)$ induces an acyclic subgraph of $D$ if and only if it induces an acyclic subgraph of $D[V_1]$; moreover, $Y^- \subseteq X \cup Y$ in $D[V_1]$ if and only if $Y^- \subseteq X \cup Y \cup V_2$. By Lemma \ref{lem:clD_alternate}, we obtain $\cl_{D[V_1]}(X) \backslash X = \cl_D(X \cup V_2) \backslash (X \cup V_2)$ and hence $\cl_D / V_2(X) = \cl_{D[V_1]}(X)$.
\end{IEEEproof}


\begin{definition} \label{def:unions}
The {\em disjoint union} and {\em unidirectional union} of $\cl_1$ and $\cl_2$ are closure operators on $V$ respectively given by
\begin{align*}
	\cl_1 \cup \cl_2(X) &:= \cl_1(X_1) \cup \cl_2(X_2)\\
	\cl_1 \vcup \cl_2(X) &:= \begin{cases}
	V_1 \cup \cl_2(X_2) &\mbox{if } \cl_1(X_1) = V_1\\
	\cl_1(X_1) \cup X_2 &\mbox{otherwise}.
	\end{cases}
\end{align*}
\end{definition}

For any $\cl_1$, $\cl_2$ we have $\cl_1 \vcup \cl_2 \le \cl_1 \cup \cl_2$ and
$$
	r(\cl_1 \cup \cl_2) = r(\cl_1 \vcup \cl_2) = r_1 + r_2.
$$
Recall the definitions of unions of digraphs in Section \ref{sec:preliminaries}. Our definitions were tailored such that
\begin{align*}
	\cl_{D_1 \cup D_2} &= \cl_{D_1} \cup \cl_{D_2}\\
	\cl_{D_1 \vcup D_2} &= \cl_{D_1} \vcup \cl_{D_2}.
\end{align*}

Moreover, if there is a loop on vertex $v$ in the digraph $D$, then $\cl_D(X) = \cl_D(X \backslash v) \cup (X \cap v)$, or in other words, $\cl_D = \cl_{D[V \backslash v]} \cup U_{1,1} = \cl_{D[V \backslash v]} \vcup U_{1,1}$. We also remark that if $\cl_1$ and $\cl_2$ are matroids, then $\cl_1 \cup \cl_2$ is commonly referred to as the {\em direct sum} of $\cl_1$ and $\cl_2$ \cite{Oxl06}.

The disjoint and unidirectional unions are related to the contraction as follows.

\begin{proposition} \label{prop:cl_strong}
For any $\cl$ and any $V_2 \subseteq V$, the following are equivalent
\begin{enumerate}
	\item $\cl/_{V_2} = \cl \backslash^{V_2}$, i.e. for all $X \subseteq V$, $\cl(X) \cap V_1 = \cl(X \cup V_2) \cap V_1$;
	
	\item $\cl/_{V_2} \vcup \cl/_{V_1} \le \cl \le \cl/_{V_2} \cup \cl/_{V_1}$;
	
	\item there exist $\cl_1, \cl_2$ defined on $V_1$ and $V_2$ respectively such that
	$$
		\cl_1 \vcup \cl_2 \le \cl \le \cl_1 \cup \cl_2.
	$$
\end{enumerate}
\end{proposition}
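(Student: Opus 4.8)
The plan is to prove the cycle of implications $(1) \Rightarrow (2) \Rightarrow (3) \Rightarrow (1)$. A preliminary observation that halves the work: the inequality $\cl \le \cl/_{V_2} \cup \cl/_{V_1}$ holds for \emph{every} closure operator, with no hypothesis. Indeed, since $X_2 \subseteq V_2$ we have $X \subseteq X_1 \cup V_2$, so monotonicity gives $\cl(X) \cap V_1 \subseteq \cl(X_1 \cup V_2) \cap V_1 = \cl/_{V_2}(X_1)$, and symmetrically $\cl(X) \cap V_2 \subseteq \cl/_{V_1}(X_2)$; combining, $\cl(X) \subseteq \cl/_{V_2}(X_1) \cup \cl/_{V_1}(X_2)$. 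Hence in $(2)$ only the lower inequality has content, and in $(3)$ the upper inequality will be free once the operators are chosen.

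For $(1) \Rightarrow (2)$ I would first recast $(1)$ in the form $\cl(X) \cap V_1 = \cl/_{V_2}(X_1)$ for all $X \subseteq V$: since $X \cup V_2 = X_1 \cup V_2$, we have $\cl/_{V_2}(X_1) = \cl(X \cup V_2) \cap V_1$, which equals $\cl(X) \cap V_1$ by $(1)$. Now fix $X$ and split on the piecewise definition of $\vcup$. If $\cl/_{V_2}(X_1) \ne V_1$, then $\cl/_{V_2} \vcup \cl/_{V_1}(X) = \cl/_{V_2}(X_1) \cup X_2$, and both $\cl/_{V_2}(X_1) = \cl(X) \cap V_1$ and $X_2$ lie inside $\cl(X)$, giving the lower bound. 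If $\cl/_{V_2}(X_1) = V_1$, then $V_1 = \cl(X) \cap V_1 \subseteq \cl(X)$; since also $X_2 \subseteq \cl(X)$, idempotence yields $\cl(X_2 \cup V_1) \subseteq \cl(X)$, hence $\cl/_{V_1}(X_2) = \cl(X_2 \cup V_1) \cap V_2 \subseteq \cl(X)$, so $\cl/_{V_2} \vcup \cl/_{V_1}(X) = V_1 \cup \cl/_{V_1}(X_2) \subseteq \cl(X)$, as needed.

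For $(2) \Rightarrow (3)$ one simply takes $\cl_1 := \cl/_{V_2}$ and $\cl_2 := \cl/_{V_1}$, after checking routinely that these are closure operators on $V_1$ and $V_2$ (extensivity and isotonicity are immediate; idempotence uses $V_2 \subseteq \cl(X \cup V_2)$). For $(3) \Rightarrow (1)$, given $\cl_1, \cl_2$ with $\cl_1 \vcup \cl_2 \le \cl \le \cl_1 \cup \cl_2$, it suffices to show $\cl(X \cup V_2) \cap V_1 \subseteq \cl(X) \cap V_1$ (the reverse being monotonicity). Take $v \in \cl(X \cup V_2) \cap V_1$; the upper bound gives $v \in \cl_1 \cup \cl_2(X \cup V_2) = \cl_1(X_1) \cup \cl_2(V_2)$, and as $v \in V_1$ while $\cl_2(V_2) \subseteq V_2$, we get $v \in \cl_1(X_1)$. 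If $\cl_1(X_1) = V_1$ then $\cl_1 \vcup \cl_2(X) \supseteq V_1 \ni v$; otherwise $\cl_1 \vcup \cl_2(X) = \cl_1(X_1) \cup X_2 \ni v$; in either case the lower bound gives $v \in \cl(X)$.

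The argument is mostly bookkeeping; the one step that genuinely needs attention is the branch $\cl/_{V_2}(X_1) = V_1$ of $(1)\Rightarrow(2)$, where one must invoke idempotence of $\cl$ to pull all of $\cl/_{V_1}(X_2)$ into $\cl(X)$ — skipping this is the only way the proof breaks. A secondary point worth stating is the identity $X \cup V_2 = X_1 \cup V_2$, which lets condition $(1)$, phrased for all $X \subseteq V$, be read off directly as a property of $\cl/_{V_2}$ on $V_1$ (so that the "for all $X \subseteq V$" and "for all $X \subseteq V_1$" versions coincide).
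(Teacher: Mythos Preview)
Your proof is correct and follows essentially the same route as the paper: the cycle $(1)\Rightarrow(2)\Rightarrow(3)\Rightarrow(1)$, the observation that the upper bound $\cl \le \cl/_{V_2} \cup \cl/_{V_1}$ holds unconditionally, and a case split on whether the $V_1$-part of the closure equals all of $V_1$. The only cosmetic difference is that the paper first records the general inequality $\cl\backslash^{V_2} \vcup \cl/_{V_1} \le \cl$ (valid with no hypothesis) and then invokes $(1)$ to replace $\cl\backslash^{V_2}$ by $\cl/_{V_2}$, whereas you fold $(1)$ into the case analysis from the start; and for $(3)\Rightarrow(1)$ the paper simply asserts that $\cl_1 = \cl\backslash^{V_2} = \cl/_{V_2}$ is ``easy to check'', while you spell out the element-chase explicitly.
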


\begin{IEEEproof}
The first property implies the second, due to the following pair of inequalities: For any $V_1$,
$$
	\cl \backslash^{V_2} \vcup \cl /_{V_1} \le \cl \le \cl /_{V_2} \cup \cl/_{V_1}.
$$
To prove the first inequality, we have
$$
	\cl \backslash^{V_2} \vcup \cl /_{V_1} (X) = \begin{cases}
	V_1 \cup \cl/_{V_1}(X_2) = \cl(X_2 \cup V_1) \subseteq \cl(X) &\mbox{if } V_1 \subseteq \cl(X_1)\\
	(\cl(X_1) \cap V_1) \cup X_2 \subseteq \cl(X) &\mbox{otherwise}.
	\end{cases}
$$
For the second inequality, we have
$$
	\cl(X) \backslash V_2 = \cl(X_1 \cup X_2) \backslash V_2 \subseteq \cl(X_1 \cup V_2) \backslash V_2 = \cl/_{V_2}(X_1),
$$
and similarly $\cl(X) \backslash V_1 \subseteq \cl/_{V_1}(X_2)$, and hence $\cl(X) \subseteq \cl/_{V_2} \cup \cl/_{V_1}(X)$.

Clearly, the second property implies the third one. Finally, if there exist such $\cl_1$ and $\cl_2$, then it is easy to check that $\cl_1 = \cl\backslash^{V_2} = \cl/_{V_2}$.
\end{IEEEproof}

\subsection{Application to removing useless vertices}

The first property of Proposition \ref{prop:cl_strong} indicates that $V_2$ has no effect on $V_1$; thus suggesting the following notation.

\begin{definition} \label{def:weak}
If there exists $V_2$ such that $\cl/_{V_2} = \cl \backslash^{V_2}$, we say that $\cl$ is disconnected and that $V_2$ is {\em weak}. If $V_2$ is weak and acyclic, then we say $V_2$ is {\em useless}.
\end{definition}

The $D$-closure of a non strongly connected graph is disconnected. However, there are strongly connected graphs whose $D$-closure is disconnected, for instance if there is a loop on a vertex, or in the graph in Figure \ref{fig:D}, where $\cl \backslash^{V_2} = \cl/_{V_2}$ for $T = 45$. 

\begin{example}
The canonical example of a strongly connected graph with disconnected closure operator is given in Figure \ref{fig:123_disconnected}. On that graph, $\{3\}$ is useless, for 
\begin{align*}
	\cl_D(\emptyset) \backslash 3 &= \cl_D(3) \backslash 3 = \emptyset\\
	\cl_D(1) \backslash 3 &= \cl_D(13) \backslash 3 = 12\\
	\cl_D(2) \backslash 3 &= \cl_D(23) \backslash 3 = 12\\
	\cl_D(12) \backslash 3 &= \cl_D(123) \backslash 3 = 12
\end{align*}

\begin{figure}
\begin{center}
\begin{tikzpicture}
    \tikzstyle{every node}=[draw,shape=circle];

	\node (1) at (0,2) {1};
	\node (2) at (0,0) {2};
	\node (3) at (1,1) {3};
	
    \draw[latex-latex] (1) -- (2);
    \draw[latex-latex] (2) -- (1);
    \draw[-latex] (1) -- (3);
    \draw[-latex] (3) -- (2);
\end{tikzpicture}
\end{center}
\caption{A graph which is strongly connected but whose $D$-closure is disconnected.} \label{fig:123_disconnected}
\end{figure}
\end{example}

\begin{figure}
\begin{center}
\begin{tikzpicture}
    \tikzstyle{every node}=[draw,shape=circle];

	\node (1) at (0,2) {1};
	\node (2) at (0,0) {2};
	\node (3) at (1,1) {3};
	\node (4) at (3,1) {4};
	\node (5) at (2,1) {5};
	
    \draw[latex-latex] (1) -- (2);
    \draw[latex-latex] (2) -- (3);
    \draw[latex-latex] (3) -- (1);
    \draw[-latex] (1) .. controls(2,0) .. (4);
    \draw[-latex] (2) .. controls(2,2) .. (4);
    \draw[-latex] (4) -- (5);
    \draw[-latex] (5) -- (3);

\end{tikzpicture}
\end{center}
\caption{A graph which is strongly connected but whose $D$-closure is disconnected.} \label{fig:D}
\end{figure}

\begin{proposition} \label{prop:connected_not_strong}
Suppose $D$ is strongly connected, then $V_2$ is weak for $\cl_D$ if and only if it is useless.
\end{proposition}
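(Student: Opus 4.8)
The plan is to prove both directions. The forward direction, that useless implies weak, is immediate from Definition~\ref{def:weak}: by definition a useless set is a weak set that happens also to be acyclic, so nothing is to be shown. The content is entirely in the converse: if $D$ is strongly connected and $V_2$ is weak, then $V_2$ must induce an acyclic subgraph, i.e. $\mias(D[V_2]) = |V_2|$.

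First I would unwind what weakness means for a $D$-closure via Proposition~\ref{prop:cl_strong}: $V_2$ weak says $\cl_D(X) \cap V_1 = \cl_D(X \cup V_2) \cap V_1$ for all $X \subseteq V$. Taking $X = \emptyset$ (recall we assume $\cl_D(\emptyset) = \emptyset$, or at any rate we may reason directly) gives $\cl_D(V_2) \cap V_1 = \cl_D(\emptyset) \cap V_1$. The key step is then to argue that if $V_2$ contained a directed cycle, then since $D$ is strongly connected, $\cl_D(V_2)$ would have to "escape" into $V_1$, contradicting $\cl_D(V_2) \cap V_1 = \cl_D(\emptyset) \cap V_1 \subsetneq V_1$ (here I use that $D$ is strongly connected with $n \ge 2$, so $V_1 \ne \emptyset$ whenever $V_2 \ne V$, and that $\cl_D(\emptyset)$ is small). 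Concretely: suppose $C \subseteq V_2$ is a directed cycle. Pick any vertex $w \in V_1$. By strong connectivity there is a directed path from some vertex of $C$ to $w$; let $w$ be a vertex of $V_1$ at minimal such distance, with in-neighbour $u$ on that path. Then $u \in \cl_D(V_2)$ — because the portion of the path inside $V_2 \cup (\text{already-absorbed vertices})$ gets pulled into the closure step by step (a cycle in $V_2$ is absorbed since $V_2 \subseteq \cl_D(V_2)$ trivially, wait — more carefully, one shows the path vertices up to $u$ lie in $\cl_D(V_2)$ by induction along the path using $c_D$, since each such vertex's in-neighbours are confined to $V_2$ plus earlier path vertices by minimality of $w$). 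Hence $w \in c_D(\cl_D(V_2)) \subseteq \cl_D(V_2)$, so $w \in \cl_D(V_2) \cap V_1$; since $w$ was arbitrary in $V_1$ this forces $V_1 \subseteq \cl_D(V_2)$, hence $\cl_D(V_2) = V$, contradicting weakness (as $\cl_D(\emptyset) \ne V$ when $D$ has no... — rather, $\cl_D(\emptyset) = \emptyset \ne V_1$). Therefore $V_2$ induces no directed cycle, i.e. it is acyclic, and being weak and acyclic it is useless.

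I would also double-check the degenerate cases: if $V_2 = \emptyset$ it is vacuously acyclic hence useless; if $V_2 = V$ then weakness would force $\cl_D(V) \cap V_1 = \cl_D(\emptyset) \cap V_1$ with $V_1 = \emptyset$, which is fine, but then "$D$ strongly connected" with $V_2 = V$ means $D$ has a cycle (as we assume minimum in-degree positive), contradicting acyclicity — so in fact $V_2 = V$ cannot be weak unless $D$ is a single loopless vertex, an edge case one can dismiss. The cleanest writeup handles $V_2 \ne \emptyset, V$.

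The main obstacle is making the closure-absorption argument along the path fully rigorous: I need the right induction showing that every vertex on a shortest path from the cycle $C$ into $V_1$ lies in $\cl_D(V_2)$, which requires carefully choosing $w \in V_1$ to be closest to $C$ so that every intermediate vertex lies in $V_2$ and every such vertex's in-neighbours are already accounted for. Using Lemma~\ref{lem:clD_alternate} may in fact be slicker than iterating $c_D$: the set $Y := \cl_D(V_2) \setminus V_2$ is the largest acyclic set with $Y^- \subseteq Y \cup V_2$, and I would argue that if $V_2$ contains a cycle and $V_1 \not\subseteq \cl_D(V_2)$, strong connectivity lets me enlarge $Y$, contradiction. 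That is the step deserving the most care.
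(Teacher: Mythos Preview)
Your argument has a genuine gap: you only invoke weakness at $X = \emptyset$, and that single instance is not enough to force a contradiction. The step ``hence $w \in c_D(\cl_D(V_2)) \subseteq \cl_D(V_2)$'' requires $w^- \subseteq \cl_D(V_2)$, but all you know is that \emph{one} in-neighbour $u$ of $w$ lies in $V_2$; the other in-neighbours of $w$ may well sit in $V_1 \setminus \cl_D(V_2)$. Minimality of $w$ only guarantees that the path vertices before $w$ lie in $V_2$; it says nothing about the full in-neighbourhoods of those vertices or of $w$ itself. The same objection kills the Lemma~\ref{lem:clD_alternate} variant you sketch: to enlarge $Y = \cl_D(V_2)\setminus V_2$ by a vertex $w \in V_1$ you would again need $w^- \subseteq Y \cup V_2$.

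Here is a concrete counterexample to the implicit claim ``$D$ strongly connected and $V_2$ contains a cycle $\Rightarrow \cl_D(V_2)\cap V_1 \ne \emptyset$''. Take $V_1=\{1,2\}$, $V_2=\{3,4\}$ with arcs $1\to 2$, $2\to 1$, $3\to 4$, $4\to 3$, $2\to 3$, $4\to 1$. Then $D$ is strongly connected, $V_2$ is a $2$-cycle, yet $\cl_D(V_2)=V_2$ because $1^-=\{2,4\}\not\subseteq V_2$ and $2^-=\{1\}\not\subseteq V_2$. So no contradiction arises from $X=\emptyset$. (Of course $V_2$ is \emph{not} weak here: at $X=\{2\}$ one has $\cl_D(\{2\})\cap V_1=\{2\}$ but $\cl_D(\{2\}\cup V_2)\cap V_1=\{1,2\}$.) The point is that weakness must be exploited at nontrivial $X\subseteq V_1$, not just at $\emptyset$.

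The paper's proof does exactly this: for each $u\in V_1$ with an in-neighbour in $V_2$, it applies weakness at $X=V_1\setminus u$ to get $u\in\cl_D(V_1\setminus u)$; then Lemma~\ref{lem:clD_alternate} forces $u^-\cap V_2\subseteq \cl_{D[V_2]}(\emptyset)$. Thus every arc from $V_2$ into $V_1$ originates in the acyclic part $\cl_{D[V_2]}(\emptyset)$ of $V_2$, and if $V_2$ is not acyclic the nonempty set $V_2\setminus\cl_{D[V_2]}(\emptyset)$ has no outgoing arcs to its complement, contradicting strong connectivity.
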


\begin{proof}
We first claim that all arcs from $V_2$ to $V_1$ come from $\cl_{D[V_2]}(\emptyset)$. Indeed, let $u \in V_1$ such that $u^- \cap V_1 \ne \emptyset$. Then $u \in \cl_D/_{V_2}(V_1 \backslash u) = \cl_D\backslash^{V_2}(V_1 \backslash u)$, and hence $u \in \cl_D(V_1 \backslash u)$. Since $u^- \subseteq \cl_D(V_1 \backslash u)$, the intersection $X := \cl_D(V_1 \backslash u) \cap V_2$ is not empty. By Lemma \ref{lem:clD_alternate}, $X$ induces an acyclic subgraph and $X^- \subseteq V_1 \cap X$, which is equivalent to $X \subseteq \cl_{D[V_2]}(\emptyset)$.

Now, suppose $V_2$ is not acyclic, i.e. $V_2 \ne \cl_{D[V_2]}(\emptyset)$. But then, by the claim above there are no arcs from $V_2 \backslash \cl_{D[V_2]}(\emptyset)$ to its complement, and $D$ is not strongly connected.
\end{proof}

As a corollary, if $D$ is an undirected graph, then $\cl_D$ is connected if and only if $D$ is connected.

We remark that if $V_2$ is weak, then $V_2$ is closed, for $\cl(V_2) \backslash V_2 = \cl(\emptyset) \backslash V_2 = \emptyset.$ Also, it is easy to check that the union of two weak sets is weak, hence there exists a largest weak set. Thus, if $D$ is strongly connected, there exists a largest useless set, referred to as the {\em useless part} of $D$.

\begin{algorithm}

\caption{Remove the useless part of a strongly connected digraph $D$}
\label{alg:1}

\begin{algorithmic}
\STATE $T \gets T(D)$
\REPEAT
	\STATE $Found \gets 0$
	\WHILE{$v \in T$ \AND $Found = 0$}
		\STATE $Found \gets |v^-|$
		\WHILE[Check that $\{v\}$ is useless]{$u \in v^-$ \AND $Found = 1$}
			\IF{$v \notin \cl_D(u^- \backslash v)$}
				\STATE $Found = 0$
			\ENDIF
		\ENDWHILE
		\IF[Remove $v$]{$Found > 0$}
			\STATE $V \gets V \backslash v$
			\STATE $T \gets T \backslash v$
		\ENDIF
	\ENDWHILE
\UNTIL{$Found = 0$}
\RETURN $D$
\end{algorithmic}
\end{algorithm}

We say a cycle $v_1,\ldots,v_k$ is {\em chordless} if there does not exist $i, j \in \{1,\ldots,k\}$, $(i,j) \ne (1,k)$, such that $v_i,\ldots, v_j$ is a cycle. In other words, a chordless cycle does not cover another shorter cycle. Then $T(D)$ is the set of vertices which do not belong to any chordless cycle.

\begin{theorem} \label{th:algorithm}
Algorithm \ref{alg:1} removes the useless part of a strongly connected digraph in polynomial time.
\end{theorem}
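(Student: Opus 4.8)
The plan is to establish two things: (i) Algorithm~\ref{alg:1} is correct, i.e. it returns the digraph obtained from $D$ by deleting its useless part, and (ii) it runs in polynomial time. I would begin with correctness. The key observation is that a single vertex $\{v\}$ is weak (equivalently useless, by Proposition~\ref{prop:connected_not_strong}, since $\{v\}$ with $v^-\ni v$ impossible unless there is a loop, and in a strongly connected loopless graph a single weak vertex is acyclic) exactly when $\cl_D/_{\{v\}} = \cl_D\backslash^{\{v\}}$. Using Lemma~\ref{lem:clD_alternate} and Definition~\ref{def:clS}, I would unfold this condition to the purely local test used in the algorithm: $\{v\}$ is useless iff $v\notin\cl_D(u^-\backslash v)$ fails for no $u\in v^-$, i.e.\ iff for every in-neighbour $u$ of $v$ we have $v\in\cl_D(u^-\backslash v)$. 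The intuition: removing $v$ changes $\cl_D$ on $V_1=V\backslash v$ precisely when some $u$ with $v\in u^-$ loses the ability to be ``determined'' without $v$, and $u$ is determined without $v$ iff all of $u$'s \emph{other} in-neighbours together already determine $v$.

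Next I would show the outer loop correctly iterates this. Since (as remarked before the theorem) the union of weak sets is weak and weak sets are closed, and since deleting a useless singleton from a strongly connected digraph leaves a strongly connected digraph on the contracted vertex set whose closure operator is the corresponding restriction (Proposition~\ref{prop:clD|S} with $V_2=\{v\}$), the useless part can be peeled off one vertex at a time: if $W$ is the useless part of $D$ and $W\neq\emptyset$, then some $v\in W$ is a useless singleton of $D$ (take a vertex of $W$ that is a source in the acyclic subgraph $D[W]$ — its in-neighbours in $D$ all lie in $V\backslash W$, making the local test pass), and after deleting it the useless part of the new digraph is $W\backslash v$. Conversely the algorithm never deletes a vertex outside the useless part, because a vertex passing the local test is a useless singleton, hence lies in the (unique, maximal) useless part. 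So the loop terminates exactly when no vertices remain in the useless part, and it returns $D$ minus its useless part. I would also need the remark that we need only ever scan $T(D)$: a vertex lying on a chordless cycle can never be useless (a useless vertex is acyclic, so it lies on no cycle all of whose vertices survive; more directly, $v$ on a chordless cycle $C$ means $v\notin\cl_D(V\backslash C)$-type reasoning shows $v$ is not in $\cl_D$ of the rest), and $T(D)$ is preserved under the deletions in the sense that it only shrinks — so restricting attention to $T$ loses nothing and is what keeps the running time down.

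For the complexity claim: computing $\cl_D(Y)$ for a given $Y$ is polynomial — it is just iterating $c_D$ at most $n$ times, each application scanning all vertices and their in-neighbourhoods, so $O(n|E(D)|)$. Computing $T(D)$ (the vertices on no chordless cycle) is polynomial: one can check for each vertex whether it lies on a chordless cycle by a bounded number of reachability/shortest-cycle computations. The inner \textbf{while} over $u\in v^-$ does at most $\deg^-(v)$ closure computations; the \textbf{while} over $v\in T$ does at most $n$ iterations; and the outer \textbf{repeat} executes at most $n$ times since each successful pass removes at least one vertex (and the single failing final pass costs one more traversal). Multiplying the nested bounds gives a crude polynomial bound, e.g.\ $O(n^3|E(D)|)$, which suffices.

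The main obstacle I expect is the correctness of the \emph{local} test, i.e.\ proving cleanly that ``$\{v\}$ weak'' $\iff$ ``$v\in\cl_D(u^-\backslash v)$ for all $u\in v^-$''. One direction (local test fails $\Rightarrow$ $v$ not weak) should follow by exhibiting an $X\subseteq V_1$ with $\cl_D(X)\cap V_1\subsetneq\cl_D(X\cup v)\cap V_1$, taking $X=u^-\backslash v$ for the offending $u$. The other direction requires showing that if every such $u$ satisfies $v\in\cl_D(u^-\backslash v)$, then adjoining $v$ never helps determine anything new in $V_1$ — this is where Lemma~\ref{lem:clD_alternate} does the work, by characterising $\cl_D(X\cup v)\backslash(X\cup v)$ as a maximal acyclic set and arguing $v$ can be removed from the ``known'' set without shrinking the part of it lying in $V_1$. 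Getting this argument tight, and correctly handling the bookkeeping as vertices are deleted (so that $u^-$ in the test refers to in-neighbourhoods in the \emph{current} digraph), is the delicate part; everything else is routine.
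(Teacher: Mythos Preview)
Your overall architecture matches the paper's: restrict to $T(D)$, characterise useless singletons by a local test, peel them off one at a time, and count. But the peeling step has a genuine gap.

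You claim that if $W$ is the useless part and $W\neq\emptyset$, then a \emph{source} of the acyclic graph $D[W]$ is a useless singleton. This is false. Take $V=\{1,2,3,4\}$ with edges $1\leftrightarrow 2$, $1\to 3$, $3\to 4$, $4\to 2$. Here $W=\{3,4\}$ is useless (one checks $\cl_D(X)\cap\{1,2\}=\cl_D(X\cup\{3,4\})\cap\{1,2\}$ for all $X\subseteq\{1,2\}$), and $3$ is the source of $D[W]$. Yet $\{3\}$ is \emph{not} weak: $\cl_D(\emptyset)\setminus 3=\emptyset$ while $\cl_D(\{3\})\setminus 3=\{4\}$. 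So ``its in-neighbours all lie in $V\setminus W$'' does not make the local test pass.

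The paper instead takes the \emph{last} vertex of $W$ in the topological order of $D[W]$, i.e.\ a sink, so that $v^+\subseteq V\setminus W$ (Lemma~\ref{lem:topological_useless}). The point is that the influence of $v$ on the rest of the graph flows through $v^+$, not $v^-$; once $v^+$ lies entirely in $V_1$, weakness of $W$ on $V_1$ can be transferred to weakness of $\{v\}$. This is also why the local test in Lemma~\ref{lem:v_useless} quantifies over $u\in v^+$, not $u\in v^-$: you need $v\in\cl_D(u^-\setminus v)$ for every \emph{out}-neighbour $u$ of $v$. Your intuition paragraph actually says this correctly (``some $u$ with $v\in u^-$''), but your formal statement and your source/sink choice both have the direction reversed. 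Fixing the direction --- test over $u\in v^+$ and peel off a sink of $D[W]$ --- repairs the argument, and then your proof is essentially the paper's.
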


The proof of Theorem \ref{th:algorithm} is given in Appendix \ref{app:th}.

\subsection{Bidirectional union}

\begin{definition}
The {\em bidirectional union} of $\cl_1$ and $\cl_2$ is defined as
$$
	\cl_1 \bcup \cl_2(X) := \begin{cases}
	V_1 \cup \cl_2(X_2) & \mbox{if } X_1 = V_1\\
	\cl_1(X_1) \cup V_2 & \mbox{if } X_2 = V_2\\
	X_1 \cup X_2 & \mbox{otherwise}.
	\end{cases}
$$
\end{definition}

It is easily shown that $\cl_1 \bcup \cl_2 \le \cl_1 \vcup \cl_2$ and $r(\cl_1 \bcup \cl_2) = \min\{r_1 + n_2, r_2 + n_1\}$. Moreover, for any $\cl$ and any $V_1 \subseteq V$, we have
$$
	\cl/_{V_2} \bcup \cl/_{V_1} \le \cl \le \cl/_{V_2} \cup \cl/_{V_1}.	
$$
The first inequality means that the bidirectional union is the way to combine $\cl_1$ and $\cl_2$ which brings the fewest dependencies; as such it is the union of $\cl_1$ and $\cl_2$ with the highest entropy.

The bidirectional union of digraphs does correspond to the bidirectional union of closure operators: 	
$$
	\cl_{D_1 \bcup D_2} = \cl_{D_1} \bcup \cl_{D_2},
$$
and the converse is given below.

\begin{proposition} \label{lem:bar_cup_graphs}
If $\cl = \cl_1 \bcup \cl_2$, then $\cl_1 = \cl/_{V_1}$ and $\cl_2 = \cl/_{V_2}$. Moreover, if $D$ is a loopless graph, then $\cl_D = \cl_1 \bcup \cl_2$ if and only if $\cl_1 = \cl_{D[V_1]}$, $\cl_2 = \cl_{D[V_2]}$, and $D = D[V_1] \bcup D[V_2]$.
\end{proposition}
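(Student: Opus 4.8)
The plan is to prove the two assertions separately, and to split the ``moreover'' into its two implications.

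\emph{First assertion.} Here I would simply unfold the definition of $\bcup$. Fix $X\subseteq V_1$ and evaluate $\cl=\cl_1\bcup\cl_2$ at $X\cup V_2$: since $(X\cup V_2)\cap V_2=V_2$, the second case of the definition of $\bcup$ applies and gives $\cl(X\cup V_2)=\cl_1(X)\cup V_2$ (when in addition $X=V_1$ the first case applies, but it returns $V$ as well, so the two agree). As $\cl_1(X)\subseteq V_1$ is disjoint from $V_2$, this yields $\cl(X\cup V_2)\backslash V_2=\cl_1(X)$; that is, $\cl_1$ is exactly the contraction $\cl/_{V_2}$ of Definition~\ref{def:clS} (the one supported on $V_1$), and symmetrically $\cl_2=\cl/_{V_1}$. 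This step is routine, and it does not use the loopless hypothesis.

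\emph{Moreover, the ``if'' direction.} If $D$ is loopless with $\cl_1=\cl_{D[V_1]}$, $\cl_2=\cl_{D[V_2]}$ and $D=D[V_1]\bcup D[V_2]$, then applying the already-recorded identity $\cl_{D_1\bcup D_2}=\cl_{D_1}\bcup\cl_{D_2}$ with $D_1=D[V_1]$ and $D_2=D[V_2]$ gives at once $\cl_D=\cl_{D[V_1]}\bcup\cl_{D[V_2]}=\cl_1\bcup\cl_2$.

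\emph{Moreover, the ``only if'' direction (the substantive part).} Assume $\cl_D=\cl_1\bcup\cl_2$. Combining the first assertion with Proposition~\ref{prop:clD|S} immediately gives $\cl_1=\cl_D/_{V_2}=\cl_{D[V_1]}$ and, symmetrically, $\cl_2=\cl_{D[V_2]}$. It remains to show $D=D[V_1]\bcup D[V_2]$; since the arcs within $V_1$ (resp.\ within $V_2$) are automatically those of $D[V_1]$ (resp.\ $D[V_2]$), this reduces to showing that every cross arc is present: $(v_1,v_2),(v_2,v_1)\in E(D)$ for all $v_1\in V_1$, $v_2\in V_2$. The key observation is that as soon as $X\subseteq V$ satisfies $X\cap V_1\ne V_1$ and $X\cap V_2\ne V_2$, the third case of the definition of $\bcup$ forces $\cl_1\bcup\cl_2(X)=X\cap(V_1\cup V_2)=X$, so $X$ is closed for $\cl_D$. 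Now suppose $(v_2,v_1)\notin E(D)$ for some $v_1\in V_1$, $v_2\in V_2$, and put $X:=V\backslash\{v_1,v_2\}$. Then $X\cap V_1=V_1\backslash v_1\ne V_1$ and $X\cap V_2=V_2\backslash v_2\ne V_2$, so $X$ is closed; but $v_1^-$ contains neither $v_1$ (as $D$ is loopless) nor $v_2$ (by assumption), hence $v_1^-\subseteq X$, so $v_1\in c_D(X)\subseteq\cl_D(X)$ --- contradicting $\cl_D(X)=X$ and $v_1\notin X$. Therefore $(v_2,v_1)\in E(D)$; running the same argument with the roles of $V_1$ and $V_2$ exchanged yields $(v_1,v_2)\in E(D)$, and so $D=D[V_1]\bcup D[V_2]$. (The cases $V_1=\emptyset$ or $V_2=\emptyset$ are vacuous.) The only delicate bookkeeping is keeping the three-case definition of $\bcup$ straight and checking one lands in the ``not full on either side'' regime; once that is isolated, the forcing argument is short. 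The loopless hypothesis is used precisely to guarantee $v_1\notin v_1^-$, so that deleting only $v_1$ and $v_2$ already pushes $v_1^-$ into $X$ --- without it a vertex carrying a loop could never be forced this way and the characterization would genuinely fail.
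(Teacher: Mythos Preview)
Your proof is correct and follows essentially the same route as the paper's: for the ``only if'' direction you both take $X=V\backslash\{v_1,v_2\}$, note that the third case of $\bcup$ forces $\cl_1\bcup\cl_2(X)=X$, and contradict this by showing $\cl_D(X)\supsetneq X$ --- the paper gets $\cl_D(X)=V$ via the feedback-vertex-set characterisation (the pair $\{v_1,v_2\}$ is acyclic), while you check the weaker but sufficient $v_1\in c_D(X)$ directly. Your handling of the first assertion and the ``if'' direction just spells out what the paper leaves as ``easy'' and ``trivial''; you are also right to silently correct the subscripts to $\cl_1=\cl/_{V_2}$, $\cl_2=\cl/_{V_1}$, consistently with Definition~\ref{def:clS} and Proposition~\ref{prop:clD|S}.
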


\begin{IEEEproof}
The first claim is easy to prove. For the second claim, if $\cl_D = \cl_1 \bcup \cl_2$, then $\cl_1 = \cl_{D[V_1]}$ and $\cl_2 = \cl_{D[V_2]}$. Suppose the arc $(v_1,v_2)$ is missing between $V_1$ and $V_2$. Then $\cl_D(V \backslash \{v_1,v_2\}) = V$ (since $\{v_1,v_2\}$ is acyclic), while $\cl_D/_{V_2} \bcup \cl_D/_{V_1}(V \backslash \{v_1,v_2\}) = V \backslash \{v_1,v_2\}$. The converse is trivial.
\end{IEEEproof}

\section{Guessing number and solvability graph} \label{sec:solvability_graph}

\subsection{Definition and main results}

The solvability graph extends the definition of the so-called guessing graph to all closure operators. Most of this section naturally extends \cite{GR11}. Therefore, we shall omit certain proofs which are very similar to their counterparts in \cite{GR11}.

First of all, we need the counterpart of the guessing number of a graph for closure operators. Any partition $f_i$ of $A^r$ into at most $|A|$ parts is henceforth denoted as $f_i = \{P_a(f_i): a \in A\}$, where some parts $P_a(f_i)$ are possibly empty. By extension, for any tuple $f = (f_1,\ldots,f_n)$, the partition $f_V$ is denoted as $f_V = \{P_x(f_V) : x \in A^n \}$, where $P_x(f_V) = \bigcap_{i=1}^n P_{x_i}(f_i)$. We denote the set of words of $A^n$ indexing non-empty parts of $f_V$ as the image of $f$:
$$
	\mbox{Im}(f) := \{x \in A^n: P_x(f_V) \ne \emptyset\}.
$$

\begin{definition} \label{def:g}
The guessing number of $\cl$ over $A$ is given by
$$
	g(\cl,A) := \max \{\log_{|A|} |\mbox{Im}(f)| : f \mbox{ coding function for } \cl \mbox{ over } A \}.
$$
\end{definition}

A coding function has an image of size $|A|^r$ if and only if it is a solution; therefore, $\cl$ is solvable over $A$ if and only if $g(\cl,A) = r$.

Next, we define the solvability graph of closure operators.

\begin{definition} \label{def:G}
The solvability graph $\mathrm{G}(\cl,A)$ has vertex set $A^n$ and two words $x, y \in A^n$ are adjacent if and only if there exists no coding function $f$ for $\cl$ over $A$ such that $x,y \in \mbox{Im}(f)$.
\end{definition}

Proposition \ref{prop:properties_guessing_graph} below enumerates some properties of the solvability graph. In particular, Property \ref{it:E} provides a concrete and elementary description of the edge set which makes adjacency between two configurations easily decidable.

\begin{proposition} \label{prop:properties_guessing_graph}
The solvability graph $\mathrm{G}(\cl,A)$ satisfies the following properties:
\begin{enumerate}
	\item \label{it:N(D,s)} It has $|A|^n$ vertices.

    \item \label{it:E} Its edge set is $E = \bigcup_{S \subseteq V, v \in \cl(S)} E_{v,S}$, where $E_{v,S} = \{xy : x_S = y_S, x_v \ne y_v\}$.
	
    \item \label{it:vt} It is vertex-transitive.
\end{enumerate}
\end{proposition}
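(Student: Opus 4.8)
The plan is to verify the three properties in order, each being essentially a direct computation from the definitions. For Property \ref{it:N(D,s)}, I would simply note that the vertex set is declared to be $A^n$ in Definition \ref{def:G}, which has $|A|^n$ elements; nothing more is needed.

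For Property \ref{it:E}, the key is the characterisation of when two words $x,y \in A^n$ lie in the image of a common coding function. First I would show that $xy \in E_{v,S}$ for some $S \subseteq V$ and $v \in \cl(S)$ forces $x$ and $y$ to be adjacent: if $x_S = y_S$ but $x_v \ne y_v$, then for any coding function $f$ with $x, y \in \mathrm{Im}(f)$, the words $x$ and $y$ would have to agree on $f_S$ (same values on the coordinates in $S$) hence on $f_{\cl(S)} = f_S$, in particular on $f_v$; but $x_v \ne y_v$ contradicts this. Conversely, suppose $xy \notin E_{v,S}$ for all such pairs; I would construct a coding function $f$ witnessing $x, y \in \mathrm{Im}(f)$. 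The natural candidate is to let each $f_i$ be the partition of $A^r$ whose parts are indexed so that two designated points $p, q \in A^r$ (with $p \ne q$, using $r \ge 1$ since $\cl(\emptyset) = \emptyset$) satisfy $p \in P_{x_i}(f_i)$ and $q \in P_{y_i}(f_i)$, and all other points are placed to make $f$ a valid coding function. The condition $f_X = f_{\cl(X)}$ must be checked: the obstruction to placing $p$ and $q$ consistently is exactly that some $v \in \cl(S)$ has $x_S = y_S$ but $x_v \ne y_v$, which is excluded by hypothesis. Concretely, define an equivalence on $V$ generated by $u \approx v$ whenever $x_u = y_u$ fails to separate them; more cleanly, set $f_v$ according to whether $v$ lies in $\cl(\{u : x_u = y_u = \text{fixed}\})$... — the cleanest route is: let $W = \{v \in V : x_v = y_v\}$, observe that $x$ and $y$ can lie in a common image if and only if for every $v \notin W$ we have $v \notin \cl(W)$, i.e. $\cl(W) = W' \subseteq W$; then build $f$ so that $f_W$ already has $p$ and $q$ in the same part and $f_v$ separates them for $v \notin W$, which is consistent precisely because $\cl(W) \cap (V \setminus W) = \emptyset$. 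This matches $\bigcup_{S, v \in \cl(S)} E_{v,S}$ since that union records exactly the pairs for which some $v \notin W$ lies in $\cl(S) \subseteq \cl(W)$.

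For Property \ref{it:vt}, I would exhibit automorphisms: for any $a \in A^n$ and any tuple of permutations $\pi = (\pi_1, \ldots, \pi_n)$ with each $\pi_i$ a permutation of $A$, the map $x \mapsto (\pi_1(x_1), \ldots, \pi_n(x_n))$ is a graph automorphism of $\mathrm{G}(\cl, A)$, because the edge description in Property \ref{it:E} depends only on the pattern of equalities $x_S = y_S$ and inequalities $x_v \ne y_v$, which is preserved by applying a permutation coordinatewise. Since the group of such coordinatewise permutations already acts transitively on $A^n$ (indeed the translations $x \mapsto x + a$ in any group structure on $A$ suffice), the graph is vertex-transitive.

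The main obstacle is the converse direction of Property \ref{it:E}: one must actually construct a coding function for an arbitrary pair $x, y$ outside the claimed edge set, and check that the assignment satisfies $f_X = f_{\cl(X)}$ for \emph{all} $X \subseteq V$, not merely for $X = W$. The isotone and idempotent properties of $\cl$, together with $\cl(X \cup Y) = \cl(\cl(X) \cup \cl(Y))$ (Property \ref{it:cl(cup)} of closure operators), should reduce this to the single containment $\cl(W) \subseteq W$, but making that reduction airtight is where the care is needed.
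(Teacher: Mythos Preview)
Your proposal is correct and follows essentially the same route as the paper: the same forward argument for Property~\ref{it:E}, the same two-point/two-part construction for the converse, and vertex-transitivity via coordinatewise symmetries (the paper phrases this as a Cayley graph over $\mathbb{Z}_{|A|}^n$). The verification you flag as the ``main obstacle'' is in fact simpler than you fear: with each $f_v$ taken either trivial (when $x_v = y_v$) or equal to a single fixed two-part partition $g$ of $A^r$ (when $x_v \ne y_v$), one checks $f_{S \cup v} = f_S$ for $v \in \cl(S)$ by a bare case split---if $x_S = y_S$ then the non-edge hypothesis forces $x_v = y_v$ and both sides are trivial, while if $x_S \ne y_S$ then already $f_S = g = f_{S \cup v}$---so no appeal to Property~\ref{it:cl(cup)} is needed.
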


\begin{IEEEproof}
Property \ref{it:N(D,s)} follows from the definition. We now prove Property \ref{it:E}. $f$ is a coding function if and only if $f_{S \cup v} = f_S$ for all $S \subseteq V$ and any $v \in \cl(S)$, which in turn is equivalent to $P_{x_{S \cup v}} (f_{S \cup v}) = P_{x_S}(f_S)$ for all $x \in \mbox{Im}(f)$. Therefore, if $x_S = y_S$, $v \in \cl(S)$ and $f$ is a coding function, we have $P_{x_S}(f_{x_S}) \subseteq P_{x_v}(f_{x_v})$ and $P_{x_S}(f_{x_S}) \subseteq P_{y_v}(f_{y_v})$, or in other words $x_v = y_v$.

Conversely, for distinct $xy \notin E$ we construct the following coding function. Let $g$ be a partition of $A^r$ into two nonempty parts and for all $v \in V$, let $f_v$ have two parts $P_{x_v} = P_1(g)$ and $P_{y_v} = P_2(g)$ if $x_v \ne y_v$ and $f_v$ have one part otherwise. Then, $\mbox{Im}(f) = \{x,y\}$; for all $T \subseteq V$, $P_{x_T}(f_T) = P_{y_T}(f_T)$ if and only if $x_T = y_T$; and $f_V = g$. We now check that $f$ is indeed a coding function: let $S \subseteq V$ and $v \in \cl(S)$. If $x_S = y_S$, then $x_v = y_v$ and hence $f_S = f_{S \cup v}$ has one part. Otherwise $f_S = g = f_V = f_{S \cup v}$.

For Property \ref{it:vt}, we remark that $\mathrm{G}(\cl,A)$ is a Cayley graph \cite{GR01}, hence it is vertex-transitive. More explicitly, if we let $A = \mathbb{Z}_{|A|}$, then $\phi(z) = z - x + y$ is an automorphism of the solvability graph which takes $x$ to $y$ for any $x,y \in A^n$.
\end{IEEEproof}

\begin{corollary} \label{cor:U}
The solvability graph for the uniform matroid $U_{r,n}$ has edge set $E = \{xy: d_H(x,y) \le n-r\}$.
\end{corollary}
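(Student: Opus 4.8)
I want to show that in $\mathrm{G}(U_{r,n},A)$, two words $x,y$ are adjacent precisely when $d_H(x,y) \le n-r$. The strategy is to specialise Property~\ref{it:E} of Proposition~\ref{prop:properties_guessing_graph}, which says the edge set is $E = \bigcup_{S \subseteq V,\, v \in \cl(S)} E_{v,S}$ with $E_{v,S} = \{xy : x_S = y_S,\, x_v \ne y_v\}$. So I need to understand, for the closure operator $U_{r,n}$, exactly which pairs $(v,S)$ with $v \in U_{r,n}(S)$ contribute, and which pairs $\{x,y\}$ get caught by at least one such $E_{v,S}$.

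\textbf{Key steps.} First recall $U_{r,n}(S) = V$ if $|S| \ge r$ and $U_{r,n}(S) = S$ otherwise. In the second case $v \in U_{r,n}(S) = S$ forces $v \in S$, so $E_{v,S} = \{xy : x_S = y_S,\, x_v \ne y_v\} = \emptyset$ since $x_S = y_S$ already implies $x_v = y_v$. Hence only sets $S$ with $|S| \ge r$ contribute, and for such $S$ every $v \in V$ is admissible. Thus
$$
	E = \bigcup_{|S| \ge r,\ v \in V} E_{v,S} = \bigcup_{|S| \ge r,\ v \notin S} \{xy : x_S = y_S,\ x_v \ne y_v\},
$$
the last equality because if $v \in S$ then $E_{v,S} = \emptyset$, so we may as well take $v \notin S$. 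Now for the forward inclusion: if $xy \in E_{v,S}$ with $|S| \ge r$ and $v \notin S$, then $x$ and $y$ agree on all coordinates of $S$, so they can differ only on the at most $n - r$ coordinates outside $S$; hence $d_H(x,y) \le n-r$. For the reverse inclusion: suppose $0 < d_H(x,y) \le n - r$ (if $x = y$ they are equal, not a pair of distinct vertices; if they are distinct we need $d_H \ge 1$). Let $\Delta = \{i : x_i \ne y_i\}$, so $1 \le |\Delta| \le n-r$. Pick any $v \in \Delta$ and set $S = V \setminus \Delta$, so $|S| = n - |\Delta| \ge r$ and $v \notin S$; then $x_S = y_S$ (since $S$ avoids $\Delta$) and $x_v \ne y_v$, so $xy \in E_{v,S} \subseteq E$. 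This establishes $E = \{xy : d_H(x,y) \le n-r\}$, i.e.\ $\mathrm{G}(U_{r,n},A)$ is the graph-power $K_{|A|}^{\Box n}$-type Hamming graph complement—more precisely the graph on $A^n$ joining words at Hamming distance at most $n-r$.

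\textbf{Main obstacle.} Honestly this is mostly bookkeeping; the only subtle point is being careful about the case $v \in S$ giving an empty $E_{v,S}$ and about excluding $x = y$, both of which are handled above. One should double-check the edge case $r = n$ (then $n - r = 0$, and indeed $U_{n,n}(S) = S$ always, so no nonempty $E_{v,S}$ arises and $\mathrm{G}$ has no edges, consistent with $d_H(x,y) \le 0$ meaning $x=y$) and $r = 0$ (then $U_{0,n}(S) = V$ for all $S$, including $S = \emptyset$; taking $S = \emptyset$ and any $v$ gives $E_{v,\emptyset} = \{xy : x_v \ne y_v\}$, so all distinct words are adjacent, consistent with $d_H(x,y) \le n$). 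No real difficulty is expected.
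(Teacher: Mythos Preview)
Your proof is correct and takes essentially the same approach as the paper: the corollary is stated immediately after Proposition~\ref{prop:properties_guessing_graph} with no separate proof, so the intended argument is precisely the direct specialisation of Property~\ref{it:E} that you carry out. Your write-up is simply a fully expanded version of that one-line deduction, with careful handling of the vacuous cases $v \in S$ and the boundary ranks $r=0,n$.
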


The main reason to study the solvability graph is given in Theorem \ref{th:G} below.

\begin{theorem} \label{th:G}
A set of words in $A^n$ is an independent set of $\mathrm{G}(\cl,A)$ if and only if they are the image of $A^r$ by a coding function for $\cl$ over $A$.
\end{theorem}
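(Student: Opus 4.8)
The plan is to prove the two directions separately, in each case translating between the language of coding functions and the language of edges described by Property \ref{it:E} of Proposition \ref{prop:properties_guessing_graph}. Write $E = \bigcup_{S \subseteq V,\, v \in \cl(S)} E_{v,S}$ with $E_{v,S} = \{xy : x_S = y_S,\, x_v \ne y_v\}$.

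First I would handle the easy direction: if $W \subseteq A^n$ is the image of a coding function $f$ for $\cl$ over $A$, then $W$ is independent. This is essentially immediate from Definition \ref{def:G}: for any two distinct $x,y \in \mathrm{Im}(f)$, the very function $f$ witnesses that they are non-adjacent, since $x,y \in \mathrm{Im}(f)$. (Alternatively, one argues directly with Property \ref{it:E}: if $x_S = y_S$ and $v \in \cl(S)$ then $f_{S\cup v} = f_S$ forces $x_v = y_v$, exactly as in the forward part of the proof of Property \ref{it:E}, so no edge $E_{v,S}$ can join $x$ and $y$.) One should also note $|W| = |\mathrm{Im}(f)| \le |A|^r$ automatically, and that $g(\cl,A)$ measures precisely the largest such image, but for the statement as phrased we only need the set-theoretic equivalence.

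The substantive direction is the converse: given an independent set $W$ of $\mathrm{G}(\cl,A)$ with $|W| = |A|^r$, produce a coding function $f$ for $\cl$ with $\mathrm{Im}(f) = W$. The natural construction is to fix a bijection $\beta : W \to A^r$ and, for each $v \in V$, define the partition $f_v$ of $A^r$ by declaring $\beta(x)$ and $\beta(y)$ (for $x,y \in W$) to lie in the same part of $f_v$ exactly when $x_v = y_v$. Concretely, $P_a(f_v) := \{\beta(x) : x \in W,\ x_v = a\}$ for $a \in A$, which has at most $|A|$ parts. Then by construction $\mathrm{Im}(f) = \{(x_1,\dots,x_n): x \in W\} = W$ (here we use $|W| = |A|^r$ so that $f_V$ has $|A|^r$ nonempty parts, each a singleton $\{\beta(x)\}$). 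It remains to check $f_S = f_{\cl(S)}$ for all $S$, equivalently $f_{S \cup v} = f_S$ for every $v \in \cl(S)$. Unwinding the construction, $f_S$ separates $\beta(x)$ from $\beta(y)$ iff $x_S \ne y_S$; so $f_{S\cup v} = f_S$ iff for all $x,y \in W$, $x_S = y_S \implies x_v = y_v$. But if there were $x,y \in W$ with $x_S = y_S$ and $x_v \ne y_v$, then $xy \in E_{v,S} \subseteq E$ since $v \in \cl(S)$, contradicting independence of $W$. Hence $f$ is a coding function with image $W$.

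The main obstacle — really the only place any care is needed — is the bookkeeping around cardinalities: the theorem as stated identifies independent sets with images "of $A^r$", so one must be careful that an independent set $W$ of full size $|A|^r$ gives a coding function whose image is exactly $W$ and whose parts $P_a(f_v)$ are well-defined (at most $|A|$ of them) and correctly indexed by $A$. This is handled by the bijection $\beta$ above; the rest is a direct translation through Property \ref{it:E}, and no further combinatorial input is required. I would present the argument in this order: (i) note the trivial direction, (ii) set up $\beta$ and the partitions $f_v$, (iii) verify $\mathrm{Im}(f) = W$, (iv) verify the coding-function condition via $E_{v,S}$ and independence.
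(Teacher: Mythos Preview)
Your argument is correct for the special case $|W| = |A|^r$, but that is not what the theorem asserts. The phrase ``the image of $A^r$ by a coding function'' means $\mathrm{Im}(f)$ as defined just above Definition~\ref{def:g}, and $|\mathrm{Im}(f)|$ can be any integer between $1$ and $|A|^r$. The paper's proof accordingly starts from an arbitrary independent set $\{x^i\}_{i=1}^k$ and builds a coding function with exactly that image; this generality is what makes Corollary~\ref{cor:G} (namely $\log_{|A|}\alpha(\mathrm{G}(\cl,A)) = g(\cl,A)$, not merely the solvability criterion) follow.

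Your construction with a bijection $\beta: W \to A^r$ breaks when $|W| < |A|^r$: the sets $P_a(f_v) = \{\beta(x): x\in W,\ x_v = a\}$ then cover only $\beta(W)$ and do not partition $A^r$. The paper's fix is simple and worth noting: instead of a bijection, pick any partition $g$ of $A^r$ into $k = |W|$ nonempty parts $P_1(g),\dots,P_k(g)$, associate $P_i(g)$ to $x^i$, and set $P_{x_v^i}(f_v) := \bigcup_{j:\, x_v^j = x_v^i} P_j(g)$. Your verification of the coding-function condition via $E_{v,S}$ and independence then goes through verbatim (with $P_i(g)$ playing the role of $\{\beta(x^i)\}$), and one obtains $\mathrm{Im}(f) = \{x^i\}_{i=1}^k$. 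In short, replace ``bijection $W \to A^r$'' by ``surjection $A^r \to W$'' and your outline becomes the paper's proof.
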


\begin{IEEEproof}
By definition of the solvability graph, the image of a coding function forms an independent set. Conversely, let $\{x^i\}_{i=1}^k$ be an independent set of the solvability graph $\mathrm{G}(\cl,A)$. Let $g$ be a partition of $A^r$ into $k$ nonempty parts and let
$$
	P_{x_v^i}(f_v) := \bigcup_{j: x_v^j = x_v^i} P_j(g).
$$
Then we have $\mbox{Im}(f) = \{x^i\}_{i=1}^k$; for all $T \subseteq V$, $P_{x^i_T} = P_{x^j_T}$ if and only if $x^i_T = x^j_T$; and $f_V = g$. We now justify that $f$ is a coding function. Let $v \in \cl(S)$, then for all $i$, $S_i := \{j: x_S^j = x_S^i\} = \{j: x_{S \cup v}^j = x_{S \cup v}^i \}$ and hence
$$
	P_{x_{S \cup v}^i} = \bigcup_{j \in S_i} P_j(g) = P_{x_S^i},
$$
which means $f_S = f_{S \cup v}$.
\end{IEEEproof}

\begin{corollary} \label{cor:G}
We have $\log_{|A|} \alpha(\mathrm{G}(\cl,A)) = g(\cl,A)$ and hence $\alpha(\mathrm{G}(\cl,A)) = |A|^r$ if and only if $\cl$ is solvable over $A$.
\end{corollary}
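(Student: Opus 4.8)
The plan is to obtain this as an immediate bookkeeping consequence of Theorem \ref{th:G}, the definition of the guessing number, and the characterisation of solvability recorded just after Definition \ref{def:g}. Theorem \ref{th:G} asserts that a subset of $A^n$ is an independent set of $\mathrm{G}(\cl,A)$ if and only if it equals $\mbox{Im}(f)$ for some coding function $f$ of $\cl$ over $A$. Taking the maximum cardinality over both sides of this equivalence, I would conclude
\[
	\alpha(\mathrm{G}(\cl,A)) = \max\{ |\mbox{Im}(f)| : f \mbox{ a coding function for } \cl \mbox{ over } A \}.
\]
Here the maximum is genuinely attained, since $\mbox{Im}(f)$ ranges over a finite family of subsets of $A^n$. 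Applying $\log_{|A|}$ to both sides and comparing with Definition \ref{def:g} then yields $\log_{|A|} \alpha(\mathrm{G}(\cl,A)) = g(\cl,A)$, which is the first assertion.

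For the second assertion, I would recall that $f_V$ is a common refinement of partitions of $A^r$, hence has at most $|A|^r$ parts, so $|\mbox{Im}(f)| \le |A|^r$ for every coding function $f$, with equality precisely when $f$ is a solution; consequently $g(\cl,A) \le r$ always, and $g(\cl,A) = r$ if and only if $\cl$ is solvable over $A$ (exactly the remark following Definition \ref{def:g}). Chaining this with the first part gives $\alpha(\mathrm{G}(\cl,A)) = |A|^r$ if and only if $g(\cl,A) = r$ if and only if $\cl$ is solvable over $A$.

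There is essentially no obstacle in this argument: it is a direct corollary of Theorem \ref{th:G}. The only point deserving a word of care is that one should note the maximum in Definition \ref{def:g} is achieved (which it is, by finiteness), so that the two equivalences above are genuine biconditionals rather than one-sided inequalities; everything else is unwinding definitions.
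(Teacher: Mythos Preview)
Your argument is correct and matches the paper's approach: the paper states this as a corollary with no proof, since it follows immediately from Theorem \ref{th:G}, Definition \ref{def:g}, and the remark after it. Your unwinding of those definitions is exactly the intended reading.
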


In \cite{GR11}, we remark that the index coding problem asks for the chromatic number of the guessing graph of a digraph. We can extend the index coding problem to any closure operator and we say that $\cl$ is index-solvable over $A$ if $b(\cl,A) := \log_{|A|} \chi(\mathrm{G}(\cl,A)) = n-r$. We have
\begin{align*}
	g(\cl,A) + b(\cl,A) &\ge n,\\ 
	\lim_{|A| \to \infty} g(\cl,A) + \lim_{|A| \to \infty} b(\cl,A) &= n
\end{align*}
by (\ref{eq:parameters_graph}). Therefore, although determining $g(\cl,A)$ and $b(\cl,A)$ are distinct over a fixed alphabet $A$, they are asymptotically equivalent. More strikingly, solvability and index-solvability are equivalent for finite alphabets too, as seen below.

\begin{theorem} \label{th:index-solvable}
The closure operator $\cl$ is solvable over $A$ if and only if it is index-solvable over $A$.
\end{theorem}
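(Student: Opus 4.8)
The plan is to reduce everything to two facts already available: the inequality $g(\cl,A) + b(\cl,A) \ge n$ recorded just above, and the a priori bound $g(\cl,A) \le r$, which holds for every closure operator because $f_V$ is a partition of the $|A|^r$-element set $A^r$, so $|\mbox{Im}(f)| \le |A|^r$ for any coding function $f$. Granting these, one direction is immediate: if $\cl$ is index-solvable then $b(\cl,A) = n-r$, whence $g(\cl,A) \ge n - b(\cl,A) = r$, and with $g(\cl,A)\le r$ this gives $g(\cl,A)=r$, i.e. $\cl$ is solvable. For the other direction, once we know $b(\cl,A) \le n-r$ for solvable $\cl$, solvability ($g(\cl,A)=r$) together with $g(\cl,A)+b(\cl,A)\ge n$ forces $b(\cl,A)=n-r$, i.e. index-solvability. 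So the whole theorem rests on the single claim: if $\cl$ is solvable over $A$, then $\chi(\mathrm{G}(\cl,A)) \le |A|^{n-r}$.

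To prove that claim I would exhibit a proper colouring of $\mathrm{G}(\cl,A)$ with $|A|^{n-r}$ colours, by tiling the vertex set $A^n$ with translates of the image of a solution. Fix a solution $f$, so $|\mbox{Im}(f)| = |A|^r$ and $f_V = E_{A^r}$, and fix a basis $b$ of $\cl$. Since $f$ is a coding function, $f_b = f_{\cl(b)} = f_V$ has $|A|^r$ parts, while $f_b = \bigvee_{v\in b} f_v$ has at most $|A|^{|b|} = |A|^r$ parts, indexed by $A^b$; hence every such index is realised, i.e. $\bigcap_{v\in b} P_{y_v}(f_v) \ne \emptyset$ for all $y_b \in A^b$. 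Picking an element of this intersection and recording, for each $v\notin b$, the part of $f_v$ containing it, yields a word $x \in \mbox{Im}(f)$ with $x_b = y_b$; thus the projection of $\mbox{Im}(f)$ onto the coordinates in $b$ is onto $A^b$. By Property \ref{it:vt} we may identify $A$ with $\mathbb{Z}_{|A|}$, so that $\mathrm{G}(\cl,A)$ is a Cayley graph of $A^n$ and translations are automorphisms. Let $H := \{z \in A^n : z_v = 0 \text{ for all } v\in b\}$, a subgroup of order $|A|^{n-r}$. Surjectivity of the $b$-projection gives $\mbox{Im}(f) + H = A^n$, and as $|\mbox{Im}(f)|\cdot|H| = |A|^r\cdot|A|^{n-r} = |A^n|$ the sum is direct, so $A^n = \bigsqcup_{h\in H}\bigl(h+\mbox{Im}(f)\bigr)$. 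Each block $h+\mbox{Im}(f)$ is an independent set of $\mathrm{G}(\cl,A)$ since $\mbox{Im}(f)$ is (Theorem \ref{th:G}) and translations preserve adjacency; this colouring uses $|H| = |A|^{n-r}$ colours, so $\chi(\mathrm{G}(\cl,A)) \le |A|^{n-r}$.

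The only non-routine step is this tiling argument, and within it the mild subtlety is verifying that the image of a solution surjects onto the coordinate block of a basis (which is exactly where $f_V = E_{A^r}$ is used) and that the translates by the complementary coordinate subgroup partition $A^n$ exactly. Everything else is bookkeeping around the two inequalities above. The same construction specialises to, and slightly sharpens, the corresponding statement for the guessing graph of a solvable digraph in \cite{GR11}.
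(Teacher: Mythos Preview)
Your proof is correct and follows essentially the same route as the paper's: both directions coincide with the paper's argument, the only cosmetic difference being that you establish the projection of $\mathrm{Im}(f)$ onto the basis coordinates is \emph{surjective} via $f_b=f_V$, whereas the paper shows it is \emph{injective} via the edge description in Proposition~\ref{prop:properties_guessing_graph}(\ref{it:E}); with $|\mathrm{Im}(f)|=|A|^r$ these are equivalent, and the tiling by translates in $H$ is identical to the paper's family $\{S_w\}$.
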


\begin{IEEEproof}
Let $\{x^i\}$ be an independent set of $\mathrm{G}$ and $b$ be a basis of $\cl$. Without loss, let $b = \{1,\ldots,r\}$). First, we remark that $x^i_b \ne x^j_b$ for all $i \ne j$, for otherwise $x^i_{V \backslash b} \ne x^j_{V \backslash b}$ and $x^i_b = x^j_b$ means that $x^i \sim x^j$. Secondly, let $A = \mathbb{Z}_{|A|}$, then for any $w \in A^{n-r}$ and any $i$, denote $x^i + w = (x^i_b,x^i_{V \backslash b} + w)$. Then it is easily shown that $S_w = \{x^i + w\}$ forms an independent set and that the family $\{S_w\}$ forms a partition of $A^n$ into $|A|^{n-r}$ independent sets.

Conversely, if $\chi(G(\cl,A)) = |A|^{n-r}$, then $\alpha(G(\cl,A)) = |A|^r$ by (\ref{eq:parameters_graph}).
\end{IEEEproof}

\subsection{Neighbourhood and girth}

Note that the relation ``having an arc from $u$ to $v$'' cannot be expressed in terms of the $D$-closure. Indeed, all acyclic graphs on $n$ vertices, from the empty graph to an acyclic tournament, all have the same closure operator $U_{0,n}$. However, the $D$-closure of the in-neighbourhood of a vertex can be described by means of the digraph closure.

\begin{lemma}
For any $v$ and any $X \subseteq V \backslash \{v\}$,  $v \in \cl_D(X)$ if and only if $\cl_D(v^-) \subseteq \cl_D(X)$.
\end{lemma}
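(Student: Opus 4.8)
The plan is to prove both directions using the alternate characterisation of the $D$-closure from Lemma~\ref{lem:clD_alternate}, together with the axioms of a closure operator (in particular axiom~\ref{it:axiom}, that $X \subseteq \cl(Y)$ iff $\cl(X) \subseteq \cl(Y)$).

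First I would handle the easy direction. Suppose $v \in \cl_D(X)$. Since $v^- \subseteq \cl_D(v)$ — indeed every in-neighbour of $v$ lies in $c_D(\{v\})$, hence in $\cl_D(v)$ — isotonicity and idempotence give $\cl_D(v^-) \subseteq \cl_D(\cl_D(v)) = \cl_D(v)$. Combined with $v \in \cl_D(X)$, axiom~\ref{it:axiom} yields $\cl_D(v) \subseteq \cl_D(\cl_D(X)) = \cl_D(X)$, so $\cl_D(v^-) \subseteq \cl_D(X)$.

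For the converse, assume $\cl_D(v^-) \subseteq \cl_D(X)$; in particular $v^- \subseteq \cl_D(X)$. I want to conclude $v \in \cl_D(X)$. The idea is to apply Lemma~\ref{lem:clD_alternate} to the set $W := (\cl_D(X) \setminus X) \cup \{v\}$: if $W$ induces an acyclic subgraph and $W^- \subseteq W \cup X$, then $W \subseteq \cl_D(X) \setminus X$, forcing $v \in \cl_D(X)$ (note $v \notin X$ since $X \subseteq V \setminus \{v\}$). The containment $W^- \subseteq W \cup X$ is immediate: $(\cl_D(X) \setminus X)^- \subseteq \cl_D(X) \subseteq W \cup X$ by Lemma~\ref{lem:clD_alternate}, and $v^- \subseteq \cl_D(X) = (\cl_D(X) \setminus X) \cup X \subseteq W \cup X$ by hypothesis. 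The acyclicity of $W$ is the one point needing care: $\cl_D(X) \setminus X$ is acyclic by Lemma~\ref{lem:clD_alternate}, so any cycle through $v$ in $W$ would have to revisit $v$, i.e. be a loop at $v$; but a loop at $v$ means $v \in v^- \subseteq \cl_D(X)$ already, and then $v \in \cl_D(X) \setminus X = W \setminus \{v\}$, so we may instead take $W = \cl_D(X) \setminus X$ and we are done trivially. Hence we may assume $v$ has no loop, so $W$ is acyclic, and Lemma~\ref{lem:clD_alternate} finishes the argument.

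The main obstacle is the acyclicity check for $W$ in the converse: one must rule out directed cycles created by adjoining $v$, and the clean way is to observe that such a cycle must be a loop at $v$ (since all other vertices of $W$ form an acyclic set and $v$ can only be visited once otherwise), and that a loop at $v$ trivially places $v$ in $\cl_D(v^-) \subseteq \cl_D(X)$. Everything else is a routine application of the closure axioms and Lemma~\ref{lem:clD_alternate}.
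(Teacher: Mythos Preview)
Your forward direction contains a genuine error: the claim that $v^- \subseteq \cl_D(v)$ (equivalently, that every in-neighbour of $v$ lies in $c_D(\{v\})$) is false in general. By definition $c_D(\{v\}) = \{v\} \cup \{u : u^- \subseteq \{v\}\}$, so an in-neighbour $w \in v^-$ lies in $c_D(\{v\})$ only if $w = v$ or $v$ is the \emph{only} in-neighbour of $w$; neither need hold. Concretely, for the clique $K_n$ with $n \ge 3$ (Example~\ref{ex:clD}) one has $\cl_D(v) = \{v\}$ while $v^- = V \setminus \{v\}$. The paper proves this direction directly from Lemma~\ref{lem:clD_alternate}: since $v \in Y := \cl_D(X) \setminus X$ and $Y^- \subseteq Y \cup X = \cl_D(X)$, one gets $v^- \subseteq \cl_D(X)$ immediately, whence $\cl_D(v^-) \subseteq \cl_D(X)$ by axiom~\ref{it:axiom}.

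You have also inverted the relative difficulty of the two directions. The converse is a one-liner: $v \in c_D(v^-) \subseteq \cl_D(v^-) \subseteq \cl_D(X)$, which is precisely the paper's argument. Your route via $W = (\cl_D(X) \setminus X) \cup \{v\}$ and Lemma~\ref{lem:clD_alternate} is not only unnecessary but also incomplete: the assertion that any cycle in $W$ through $v$ ``would have to be a loop at $v$'' is unjustified, since a cycle $v \to u_1 \to \cdots \to u_{k-1} \to v$ with all $u_i \in \cl_D(X) \setminus X$ is not excluded by the acyclicity of $\cl_D(X) \setminus X$ alone. (Such a cycle is in fact impossible, but only because $v$ already lies in $\cl_D(X)$ by the one-line argument above, making the whole construction redundant.)
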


\begin{IEEEproof}
Suppose $v \in \cl_D(X) \backslash X$, then $Y = \cl_D(X) \backslash X$ induces an acyclic subgraph and $Y^- \subseteq \cl_D(X)$; in particular, $v^- \subseteq \cl_D(X)$. Since $v \in \cl_D(v^-)$, we easily obtain the converse.
\end{IEEEproof}

We remark that if there is a loop on $v$, then there exists no set $X \subseteq V \backslash \{v\}$ such that $v \in \cl_D(X)$. Note that $v^-$ is not necessarily an inner basis of its own closure, for instance this is trivial in nonempty acyclic digraphs.

Based on our results about closure operators associated to digraphs, we can define some concepts to any closure operators which generalise those of digraphs.

\begin{definition}
For any vertex $v$, the {\em degree} of $v$ is
$$
	d_v := \min\{|X| :v \in \cl(X) \backslash X\}
$$
if there exists such set $X$, or by convention is equal to $0$ otherwise. We denote the minimum degree as $\delta$. 
\end{definition}

Note that the degree (according to the closure operator $\cl_D$) of a vertex of the digraph $D$ is not necessarily equal to the size of its in-neighbourhood. 

\begin{definition}
We say a subset $X$ of vertices is {\em acyclic} if $\cl(V \backslash X) = V$. The {\em girth} $\gamma$ of the closure operator as the minimum size of a non-acyclic subset of vertices. 
\end{definition}

Here, the girth of a digraph is equal to the girth of its closure operator.

We denote the maximum cardinality of a code over $A$ of length $n$ and minimum distance $d$ as $M_A(n,d)$.

\begin{proposition} \label{prop:bounds}
For any $\cl$, we have
$$
	\log_{|A|} M_A(n,n-\delta+1) \le g(\cl,A) \le \log_{|A|} M_A(n,\gamma).
$$
\end{proposition}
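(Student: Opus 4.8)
The plan is to establish the two inequalities separately, each by exhibiting (respectively bounding) a coding function via Corollary \ref{cor:G}, which identifies $g(\cl,A)$ with $\log_{|A|}\alpha(\mathrm{G}(\cl,A))$. So it suffices to bound the independence number of the solvability graph below by $M_A(n,n-\delta+1)$ and above by $M_A(n,\gamma)$. The two bounds are mirror images: one uses the edge set description of Proposition \ref{prop:properties_guessing_graph}\eqref{it:E} to show that \emph{every} edge of $\mathrm{G}(\cl,A)$ joins words at small Hamming distance (giving the lower bound), the other shows that \emph{some} edge joins any two words at large Hamming distance (giving the upper bound).

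For the lower bound, I would first argue that if $xy \in E(\mathrm{G}(\cl,A))$ then $d_H(x,y) \ge \delta$; equivalently, any code $C \subseteq A^n$ with minimum distance $\ge n - \delta + 1$ is... no — I want the reverse direction. Let me restate: I claim every edge $xy$ has $d_H(x,y) \le n - \delta$, hence any set of words pairwise at distance $> n-\delta$, i.e. a code of minimum distance $n-\delta+1$, is independent. To see the claim, take $xy \in E_{v,S}$ for some $S$ with $v \in \cl(S)\setminus S$ (the case $v \in S$ is impossible since then $x_S = y_S$ forces $x_v = y_v$). Then $x_S = y_S$ and $|S| \ge d_v \ge \delta$ by definition of degree, so $x$ and $y$ agree on at least $\delta$ coordinates, i.e. $d_H(x,y) \le n - \delta$. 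Thus a maximum-size code of length $n$ and minimum distance $n-\delta+1$ is an independent set, and Corollary \ref{cor:G} gives $g(\cl,A) \ge \log_{|A|} M_A(n,n-\delta+1)$.

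For the upper bound, I would show that any independent set of $\mathrm{G}(\cl,A)$ is a code of minimum distance at least $\gamma$, so $\alpha(\mathrm{G}(\cl,A)) \le M_A(n,\gamma)$. Suppose $x \ne y$ with $d_H(x,y) = d < \gamma$; I must produce an edge between them. Let $S = \{i : x_i = y_i\}$, so $|S| = n - d > n - \gamma$, i.e. $|V \setminus S| < \gamma$. By the definition of girth, a subset of size $< \gamma$ is acyclic, meaning $\cl(V \setminus (V\setminus S)) = \cl(S) = V$. Since $x \ne y$ there is some $v$ with $x_v \ne y_v$, necessarily $v \in V \setminus S \subseteq \cl(S)$, so $xy \in E_{v,S}$, i.e. $x \sim y$ in $\mathrm{G}(\cl,A)$. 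Hence every independent set has minimum distance $\ge \gamma$, giving $g(\cl,A) = \log_{|A|}\alpha(\mathrm{G}(\cl,A)) \le \log_{|A|} M_A(n,\gamma)$.

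The only mild subtlety — and the step I would double-check most carefully — is the bookkeeping around whether $v \in S$ or $v \notin S$ in the edge-set formula $E = \bigcup_{S \subseteq V,\, v \in \cl(S)} E_{v,S}$: one must be careful that the relevant witnessing pair $(v,S)$ indeed has $v \notin S$ (so that $|S| \ge d_v$ applies in the lower bound) and that in the upper bound the coordinate set $V \setminus S$ witnessing non-acyclicity is exactly complementary to the agreement set. Both are straightforward once one notes that $E_{v,S}$ with $v \in S$ is empty. No genuine obstacle is expected; this is essentially the transcription of the guessing-graph degree/girth bounds of \cite{GR11} into the closure-operator language, now enabled by the explicit edge set of Proposition \ref{prop:properties_guessing_graph}.
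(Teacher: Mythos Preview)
Your proof is correct and is precisely the argument the paper has in mind. The paper itself omits the proof of Proposition~\ref{prop:bounds}, having announced that results in this section ``naturally extend \cite{GR11}'' and that proofs similar to their counterparts there will be skipped; your derivation via Proposition~\ref{prop:properties_guessing_graph}\eqref{it:E} and Corollary~\ref{cor:G} (showing that every edge of $\mathrm{G}(\cl,A)$ has Hamming distance at most $n-\delta$, and that every pair at distance less than $\gamma$ is an edge) is exactly that natural extension, and your handling of the $v\in S$ versus $v\notin S$ bookkeeping is correct.
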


Since $\delta \le r$ and $\gamma \le n-r+1$, we have $\gamma = n-\delta+1$ if and only if $\cl = U_{r,n}$.

\subsection{Combining closure operators}

Recall the definitions of unions of closure operators in Section \ref{sec:combining}. The following theorem is the counterpart of Propositions 6, 7, and 8 in \cite{GR11}.

\begin{theorem} \label{th:G(union)}
For any $\cl_1$ and $\cl_2$ defined on disjoint sets $V_1$ and $V_2$ of cardinalities $n_1$ and $n_2$, we have
\begin{align*}
	G(\cl_1 \cup \cl_2,A) &= G(\cl_1,A) \oplus G(\cl_2,A)\\
	G(\cl_1 \vcup \cl_2,A) &= G(\cl_1,A) \cdot G(\cl_2,A)\\
	G(\cl_1 \bcup \cl_2,A) &= G(\cl_1,A) \Box G(\cl_2,A).
\end{align*}
Therefore,
\begin{align*}
	g(\cl_1 \cup \cl_2,A) &= g(\cl_1 \vcup \cl_2,A) = g(\cl_1,A) + g(\cl_2,A)\\
	b(\cl_1 \bcup \cl_2,A) &= \max\{b(\cl_1,A), b(\cl_2,A)\}\\
	g(\cl_1 \bcup \cl_2,A) &\le \min\{g(\cl_1,A) + n_2, g(\cl_2,A) + n_1\}.
\end{align*}
\end{theorem}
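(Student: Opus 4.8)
The plan is to verify the three graph identities directly from the edge-set description of the solvability graph given in Property \ref{it:E} of Proposition \ref{prop:properties_guessing_graph}, namely $E(\mathrm{G}(\cl,A)) = \bigcup_{S \subseteq V,\, v \in \cl(S)} E_{v,S}$ with $E_{v,S} = \{xy : x_S = y_S,\, x_v \ne y_v\}$. In each case the vertex set is $A^n = A^{n_1} \times A^{n_2}$, which matches the vertex set of the relevant graph product, so everything reduces to comparing adjacency relations. I would write a configuration $x \in A^n$ as a pair $(x^{(1)}, x^{(2)})$ with $x^{(i)} \in A^{V_i}$, and for any $S \subseteq V$ set $S_i = S \cap V_i$.

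\emph{First, the disjoint union.} Using $\cl_1 \cup \cl_2(S) = \cl_1(S_1) \cup \cl_2(S_2)$, a vertex $v \in \cl_1 \cup \cl_2(S)$ lies either in $\cl_1(S_1) \subseteq V_1$ or in $\cl_2(S_2) \subseteq V_2$. In the first case $E_{v,S}$ only constrains coordinates in $V_1$: if $x^{(1)} \ne y^{(1)}$ via this relation (with the rest arbitrary), that is exactly an edge witnessing $x^{(1)} \sim y^{(1)}$ in $\mathrm{G}(\cl_1,A)$. Conversely any edge of $\mathrm{G}(\cl_1,A)$ extends (with $x^{(2)}, y^{(2)}$ arbitrary) to such an $E_{v,S_1}$. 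Symmetrically for $V_2$. Hence $(x^{(1)},x^{(2)}) \sim (y^{(1)},y^{(2)})$ in $\mathrm{G}(\cl_1 \cup \cl_2,A)$ iff $x^{(1)} \sim y^{(1)}$ in $\mathrm{G}(\cl_1,A)$ or $x^{(2)} \sim y^{(2)}$ in $\mathrm{G}(\cl_2,A)$, which is precisely the co-normal product. The identities for $g$ then follow from $\alpha(G_1 \oplus G_2) = \alpha(G_1)\alpha(G_2)$ via Corollary \ref{cor:G}; and since $\cl_1 \vcup \cl_2 \le \cl_1 \cup \cl_2$ have the same rank, Proposition \ref{prop:cl1<cl2}-style reasoning (or the lexicographic-product identity below) gives $g(\cl_1 \vcup \cl_2,A) = g(\cl_1,A) + g(\cl_2,A)$ as well, matching $\alpha(G_1 \cdot G_2) = \alpha(G_1)\alpha(G_2)$.

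\emph{Second, the unidirectional union.} Here I would use the case split in Definition \ref{def:unions}: $v \in \cl_1 \vcup \cl_2(S)$ means either ($v \in V_1$ and $v \in \cl_1(S_1)$), or ($v \in V_2$ and $\cl_1(S_1) = V_1$, i.e. $S_1$ is a basis-type set, or $v \in S_2$). The first family of edges forces $x^{(1)} \sim y^{(1)}$ in $\mathrm{G}(\cl_1,A)$ with $x^{(2)},y^{(2)}$ unconstrained, hence contributes exactly the "$u_1 \sim v_1$" edges of the lexicographic product. For the second family: the relation $E_{v,S}$ with $v \in V_2$ and $\cl_1(S_1) = V_1$ requires $x_S = y_S$; choosing $S = S_1$ a basis of $\cl_1$ forces $x^{(1)} = y^{(1)}$, and then the remaining condition on the $V_2$-coordinates is exactly an edge of $\mathrm{G}(\cl_2,A)$ — note $\cl_2$-relations are already captured because $\cl_1(S_1) = V_1$ lets any $S$ with $S_1$ large "pass through". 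So the second family contributes the "$u_1 = v_1$ and $u_2 \sim v_2$" edges. The delicate point to get right is that edges $E_{v,S}$ with $v \in V_2$ but $\cl_1(S_1) \ne V_1$ only appear when $v \in S_2$, hence are degenerate ($x_v = y_v$ is forced, so $E_{v,S}$ is empty), so they add nothing; and that $x^{(1)} \sim y^{(1)}$ in $\mathrm{G}(\cl_1,A)$ already subsumes all $\cl_2$-edges on configurations with differing $V_1$-parts because of the asymmetry. This yields $\mathrm{G}(\cl_1 \vcup \cl_2,A) = \mathrm{G}(\cl_1,A) \cdot \mathrm{G}(\cl_2,A)$.

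\emph{Third, the bidirectional union.} From the definition, $v \in \cl_1 \bcup \cl_2(S)$ iff ($v \in V_1$ and $S_1 = V_1$), or ($v \in V_2$ and $S_2 = V_2$), or $v \in S$. The last case is degenerate as before. The first case: $E_{v,S}$ with $v \in V_1$, $S_1 = V_1$ requires $x_{V_1} = y_{V_1}$ (so $x^{(1)} = y^{(1)}$) and then $x_v \ne y_v$ for some $v \in V_1$ — but that contradicts $x^{(1)} = y^{(1)}$ unless we instead read it as: $S \supseteq V_1$, so $S_2$ arbitrary, and the constraint $x_S = y_S$ with $v \notin S$... the correct reading is that the only nondegenerate such edges have $x^{(1)} = y^{(1)}$ and $x^{(2)}_{S_2} = y^{(2)}_{S_2}$, $x^{(2)}_v \ne y^{(2)}_v$, i.e. an $\mathrm{G}(\cl_2,A)$-edge at a single $V_2$-coordinate — but Corollary \ref{cor:U}-type care shows these are exactly the edges of $\mathrm{G}(\cl_2,A)$ lifted with $x^{(1)} = y^{(1)}$. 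Symmetrically the second case gives $\mathrm{G}(\cl_1,A)$-edges with $x^{(2)} = y^{(2)}$. Together: adjacency holds iff ($x^{(2)} = y^{(2)}$ and $x^{(1)} \sim y^{(1)}$) or ($x^{(1)} = y^{(1)}$ and $x^{(2)} \sim y^{(2)}$), which is the cartesian product. The consequences for $b$ and $g$ then follow from $\chi(G_1 \Box G_2) = \max\{\chi(G_1),\chi(G_2)\}$ and $\alpha(G_1 \Box G_2) \le \min\{\alpha(G_1)|V(G_2)|, \alpha(G_2)|V(G_1)|\}$ together with Corollaries \ref{cor:G} and the definition of $b$.

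\emph{Main obstacle.} The genuinely fiddly part is the unidirectional (and bidirectional) case: one must argue carefully that the edges $E_{v,S}$ arising from $v$ in the "downstream" block but with the "upstream" block not yet spanning contribute nothing new, and that all $\cl_2$-type constraints are already visible on configurations agreeing on $V_1$. Making this bookkeeping precise — essentially checking that $\bigcup_{v \in \cl_1\vcup\cl_2(S)} E_{v,S}$ collapses exactly to the lexicographic-product edge set and no more — is where the real work lies; the disjoint-union case and the passage to $g$, $b$ via the cited product identities for $\alpha$ and $\chi$ are routine.
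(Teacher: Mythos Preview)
Your approach is exactly what the paper intends: it omits the proof here and declares it a direct extension of \cite{GR11}, Propositions 6--8, which proceed by computing the edge set from the analogue of Property~\ref{it:E} in Proposition~\ref{prop:properties_guessing_graph} and matching it against the defining adjacency of each graph product, then reading off the $g$ and $b$ statements from the cited $\alpha$- and $\chi$-identities. One correction: your initial characterisation in the bidirectional case, ``$v \in \cl_1 \bcup \cl_2(S)$ iff ($v \in V_1$ and $S_1 = V_1$) or ($v \in V_2$ and $S_2 = V_2$) or $v \in S$'', is mis-stated (all three clauses force $v \in S$, giving only degenerate edges); the non-degenerate clauses are ``$S_1 = V_1$ and $v \in \cl_2(S_2)$'' and ``$S_2 = V_2$ and $v \in \cl_1(S_1)$'', which is precisely what you recover after your self-correction, so the argument stands.
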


\begin{corollary} \label{cor:strong}
The following are equivalent:
\begin{itemize}
	\item $\cl_1$ and $\cl_2$ are solvable over $A$
	
	\item $\cl_1 \cup \cl_2$ is solvable over $A$
	
	\item $\cl_1 \vcup \cl_2$ is solvable over $A$.
\end{itemize}
\end{corollary}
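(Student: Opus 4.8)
The plan is to read everything off Theorem~\ref{th:G(union)} together with the rank computation $r(\cl_1 \cup \cl_2) = r(\cl_1 \vcup \cl_2) = r_1 + r_2$ recorded in Section~\ref{sec:combining}. First I would recall the basic inequality $g(\cl,A) \le r$, valid for every closure operator of rank $r$: by Definition~\ref{def:g}, any coding function $f$ has $\mathrm{Im}(f)$ in bijection with the nonempty parts of the partition $f_V$ of $A^r$, so $|\mathrm{Im}(f)| \le |A|^r$, and $\cl$ is solvable over $A$ precisely when the equality $g(\cl,A) = r$ holds. (This is exactly the remark following Definition~\ref{def:g}, and also follows from Corollary~\ref{cor:G}.)

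Next I would invoke the identity $g(\cl_1 \cup \cl_2,A) = g(\cl_1 \vcup \cl_2,A) = g(\cl_1,A) + g(\cl_2,A)$ from Theorem~\ref{th:G(union)}. If $\cl_1$ and $\cl_2$ are both solvable over $A$, then $g(\cl_1,A) = r_1$ and $g(\cl_2,A) = r_2$, hence $g(\cl_1 \cup \cl_2,A) = g(\cl_1 \vcup \cl_2,A) = r_1 + r_2$, which equals $r(\cl_1 \cup \cl_2) = r(\cl_1 \vcup \cl_2)$; so both unions are solvable. This establishes the implications from the first item to the second and third.

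For the converses, suppose $\cl_1 \cup \cl_2$ is solvable over $A$ (the argument for $\vcup$ is identical). Then $g(\cl_1,A) + g(\cl_2,A) = r_1 + r_2$, while $g(\cl_1,A) \le r_1$ and $g(\cl_2,A) \le r_2$ by the bound recalled above; so both inequalities are tight, i.e. $g(\cl_i,A) = r_i$ for $i = 1,2$, which says exactly that $\cl_1$ and $\cl_2$ are each solvable over $A$. This closes the cycle of equivalences.

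I expect no real obstacle here: the entire content of the corollary has already been absorbed into Theorem~\ref{th:G(union)} (whose proof in turn rests on the graph-product identities $G(\cl_1 \cup \cl_2,A) = G(\cl_1,A) \oplus G(\cl_2,A)$ and $G(\cl_1 \vcup \cl_2,A) = G(\cl_1,A) \cdot G(\cl_2,A)$ and the multiplicativity $\alpha(G_1 \oplus G_2) = \alpha(G_1 \cdot G_2) = \alpha(G_1)\alpha(G_2)$ from Section~\ref{sec:preliminaries_graphs}, via Corollary~\ref{cor:G}). The only extra ingredients needed are the elementary bound $g(\cl,A)\le r$ and the additivity of ranks under disjoint and unidirectional union, both already in place, so the proof is a short deduction rather than a calculation.
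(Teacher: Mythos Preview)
Your proposal is correct and matches the paper's approach: the paper states Corollary~\ref{cor:strong} without proof, treating it as an immediate consequence of the guessing-number identity in Theorem~\ref{th:G(union)} together with the rank formula $r(\cl_1 \cup \cl_2) = r(\cl_1 \vcup \cl_2) = r_1 + r_2$, exactly as you unpack it.
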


Therefore, when studying solvability, we can only consider connected closure operators.

\begin{corollary}
Without loss, suppose $n_1 - r_1 \ge n_2 - r_2$, then we have the following list of properties each implying the next:
\begin{itemize}
	\item $\cl_1$ and $\cl_2$ are solvable over $A$;
	
	\item $\cl_1$ is solvable over $A$ and $b(\cl_2,A) \le n_1 - r_1$;
	
	\item $\cl_1 \bcup \cl_2$ is solvable over $A$;
	
	\item $\cl_1$ is solvable over $A$ and $g(\cl_2,A) \ge n_2 -n_1 + r_1$.
\end{itemize}
In particular, if $n_1-r_1 = n_2-r_2$, then $\cl_1 \bcup \cl_2$ is solvable over $A$ if and only if $\cl_1$ and $\cl_2$ are solvable over $A$.
\end{corollary}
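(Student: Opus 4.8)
The plan is to read off the entire chain from two ingredients already available: the union formulas
$b(\cl_1 \bcup \cl_2,A) = \max\{b(\cl_1,A), b(\cl_2,A)\}$ and $g(\cl_1 \bcup \cl_2,A) \le \min\{g(\cl_1,A) + n_2, g(\cl_2,A) + n_1\}$ from Theorem \ref{th:G(union)}, together with the equivalence of solvability and index-solvability from Theorem \ref{th:index-solvable}. The first step is purely bookkeeping on ranks. Under the hypothesis $n_1 - r_1 \ge n_2 - r_2$ we have $r_1 + n_2 \le r_2 + n_1$, so $r := r(\cl_1 \bcup \cl_2) = \min\{r_1 + n_2, r_2 + n_1\} = r_1 + n_2$, and hence the codimension of the bidirectional union collapses to $n - r = (n_1 + n_2) - (r_1 + n_2) = n_1 - r_1$. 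This is the common threshold against which all subsequent inequalities are compared.

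For the first implication, assume $\cl_1$ and $\cl_2$ are solvable over $A$. By Theorem \ref{th:index-solvable}, $\cl_2$ is index-solvable, i.e. $b(\cl_2,A) = n_2 - r_2 \le n_1 - r_1$ by hypothesis; retaining the solvability of $\cl_1$ gives the second item. For the second implication, solvability of $\cl_1$ yields $b(\cl_1,A) = n_1 - r_1$ by Theorem \ref{th:index-solvable}, so the $b$-formula of Theorem \ref{th:G(union)} gives $b(\cl_1 \bcup \cl_2,A) = \max\{n_1 - r_1, b(\cl_2,A)\} = n_1 - r_1 = n - r$; thus $\cl_1 \bcup \cl_2$ is index-solvable and, by Theorem \ref{th:index-solvable} again, solvable over $A$.

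For the third implication, suppose $\cl_1 \bcup \cl_2$ is solvable, so $g(\cl_1 \bcup \cl_2,A) = r = r_1 + n_2$. Substituting into the upper bound of Theorem \ref{th:G(union)}, the inequality $r_1 + n_2 \le g(\cl_1,A) + n_2$ forces $g(\cl_1,A) \ge r_1$; since always $g(\cl_1,A) \le r_1$, we get $g(\cl_1,A) = r_1$, i.e. $\cl_1$ is solvable; and $r_1 + n_2 \le g(\cl_2,A) + n_1$ gives $g(\cl_2,A) \ge n_2 - n_1 + r_1$, the fourth item. Finally, for the ``in particular'' claim, when $n_1 - r_1 = n_2 - r_2$ one has $n_2 - n_1 + r_1 = n_2 - (n_2 - r_2) = r_2$, so the fourth item reads $g(\cl_2,A) \ge r_2$, which with $g(\cl_2,A) \le r_2$ says $\cl_2$ is solvable; combining this with the implication $(1) \Rightarrow (2) \Rightarrow (3)$ already proved yields the stated equivalence. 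Since every step is a one-line consequence of Theorems \ref{th:G(union)} and \ref{th:index-solvable}, there is no genuine obstacle; the only point requiring care is the ``without loss'' convention, namely verifying that $n_1 - r_1 \ge n_2 - r_2$ is exactly what makes $r_1 + n_2$ the minimum defining $r(\cl_1 \bcup \cl_2)$, so that $n - r = n_1 - r_1$ and the four threshold conditions align.
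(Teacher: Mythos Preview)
Your proof is correct and is exactly the argument the paper has in mind: the corollary is stated without proof, immediately after Theorem \ref{th:G(union)}, and your derivation simply unpacks the formulas $b(\cl_1 \bcup \cl_2,A) = \max\{b(\cl_1,A), b(\cl_2,A)\}$ and $g(\cl_1 \bcup \cl_2,A) \le \min\{g(\cl_1,A) + n_2, g(\cl_2,A) + n_1\}$ together with Theorem \ref{th:index-solvable} and the rank identity $r(\cl_1 \bcup \cl_2) = r_1 + n_2$ under the standing assumption. There is nothing to add; the one implicit fact you use, $g(\cl_i,A) \le r_i$, is immediate from the definition of the guessing number (any coding function partitions $A^{r_i}$, so $|\mathrm{Im}(f)| \le |A|^{r_i}$).
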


An example where $\cl_2$ is not solvable, yet $\cl_1 \bcup \cl_2$ is solvable, is given in Figure \ref{fig:E3cupC5}.

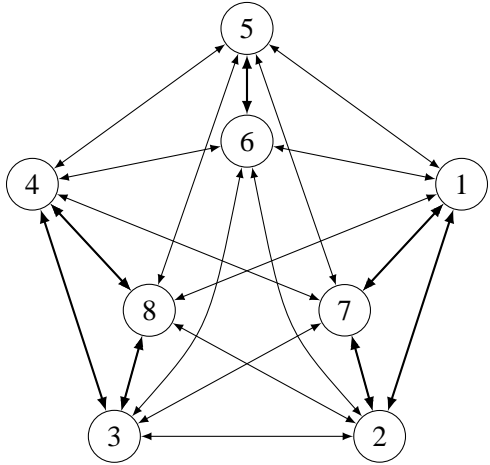
\begin{figure}
\begin{center}
\begin{tikzpicture}
    \tikzstyle{every node}=[draw,shape=circle];

\foreach \x in {1,...,5}{
	\node (\x) at (90-72*\x:3) {\x};
}

    \draw[thick, latex-latex] (1) -- (2);
    \draw[latex-latex] (2) -- (3);
    \draw[thick, latex-latex] (3) -- (4);
    \draw[latex-latex] (4) -- (5);
    \draw[latex-latex] (5) -- (1);

\foreach \x in {6,7,8}{
	\node (\x) at (90-120*\x:1.5) {\x};
}

\foreach \y in {1,...,5}{
\foreach \x in {7,8}{
    \draw[latex-latex] (\y) -- (\x);
}
}

    \draw[latex-latex] (1) -- (6);
	\draw[latex-latex] (2) .. controls(0.5,-1) .. (6);
	\draw[latex-latex] (3) .. controls(-0.5,-1) .. (6);
    \draw[latex-latex] (4) -- (6);
	\draw[thick, latex-latex] (5) -- (6);

	\draw[thick, latex-latex] (1) -- (7);
	\draw[thick, latex-latex] (2) -- (7);
	\draw[thick, latex-latex] (3) -- (8);
	\draw[thick, latex-latex] (4) -- (8);
	
\end{tikzpicture}
\end{center}
\caption{The bidirectional union $E_3 \bcup \bar{C}_5$. The vertices of $\bar{C}_5$ form a basis; the highlighted disjoint cliques $127, 248, 56$ show that it is solvable.} \label{fig:E3cupC5}
\end{figure}

The results on the bidirectional union can be viewed as ``network sharing,'' illustrated in Figure \ref{fig:sharing}. Suppose we have two solvable networks $N_1$ and $N_2$, where $N_1$ has the same number of or more intermediate nodes than $N_2$. Then $N_2$ can be plugged in to $N_1$, which can share its links with $N_2$ without compromising its solvability. In the resulting shared network, not only each source-destination pair of $N_2$ is there, but also each intermediate node yields an additional source-destination pair. As a result, the only intermediate nodes are those coming from $N_1$.

\begin{figure}
\centering
	\subfloat[First network $N_1$]{\begin{tikzpicture}
		\tikzstyle{every node}=[draw,shape=circle];

		\node (s1) at (0,4) {$s_1$};
		\node (d1) at (4,0) {$d_1$};
		
		\node (s2) at (4,4) {$s_2$};
		\node (d2) at (0,0) {$d_2$};
		
		\node (v3) at (2,2) {$v_3$};
		
		\draw[-latex] (s1) -- (d2);
		\draw[-latex] (s1) -- (v3);
		
		\draw[-latex] (s2) -- (d1);
		\draw[-latex] (s2) -- (v3);
		
		\draw[-latex] (v3) -- (d1);
		\draw[-latex] (v3) -- (d2);	
	\end{tikzpicture}}
	\hspace{2cm}
	\subfloat[Second network $N_2$]{\begin{tikzpicture}
		\node (a) at (-2,0) {};
		\node (b) at (2,0) {};
		
		\node[draw,shape=circle] (s4) at (0,4) {$s_4$};
		\node[draw,shape=circle] (d4) at (0,0) {$d_4$};
		
		\node[draw,shape=circle] (v5) at (0,2) {$v_5$};
		
		\draw[-latex] (s4) -- (v5);
		\draw[-latex] (v5) -- (d4);
	\end{tikzpicture}}
	\hspace{2cm}
	\subfloat[Shared network]{\begin{tikzpicture}
		\tikzstyle{every node}=[draw,shape=circle];

		\node (s1) at (0,7) {$s_1$};
		\node (d1) at (4,1) {$d_1$};
		
		\node (s2) at (4,7) {$s_2$};
		\node (d2) at (0,1) {$d_2$};
		
		\node (s4) at (-4,8) {$s_4$};
		\node (d4) at (-4,0) {$d_4$};

		\node (s5) at (8,8) {$s_5$};
		\node (d5) at (8,0) {$d_5$};
		
		\node (v3) at (2,4) {$v_3$};
		
		\draw[-latex] (s1) -- (d2);
		\draw[-latex] (s1) -- (v3);
		\draw[-latex] (s1) -- (d4);
		\draw[-latex] (s1) -- (d5);
		
		\draw[-latex] (s2) -- (d1);
		\draw[-latex] (s2) -- (v3);
		\draw[-latex] (s2) -- (d4);
		\draw[-latex] (s2) -- (d5);
		
		\draw[-latex] (s4) -- (d1);
		\draw[-latex] (s4) -- (d2);
		\draw[-latex] (s4) -- (v3);
		\draw[-latex] (s4) .. controls(2,3) .. (d5);

		\draw[-latex] (s5) -- (d1);
		\draw[-latex] (s5) -- (d2);
		\draw[-latex] (s5) -- (v3);
		\draw[-latex] (s5) .. controls(2,3) .. (d4);

		\draw[-latex] (v3) -- (d1);
		\draw[-latex] (v3) -- (d2);
		\draw[-latex] (v3) -- (d4);
		\draw[-latex] (v3) -- (d5);
	\end{tikzpicture}}
\caption{Example of network sharing} \label{fig:sharing}
\end{figure}
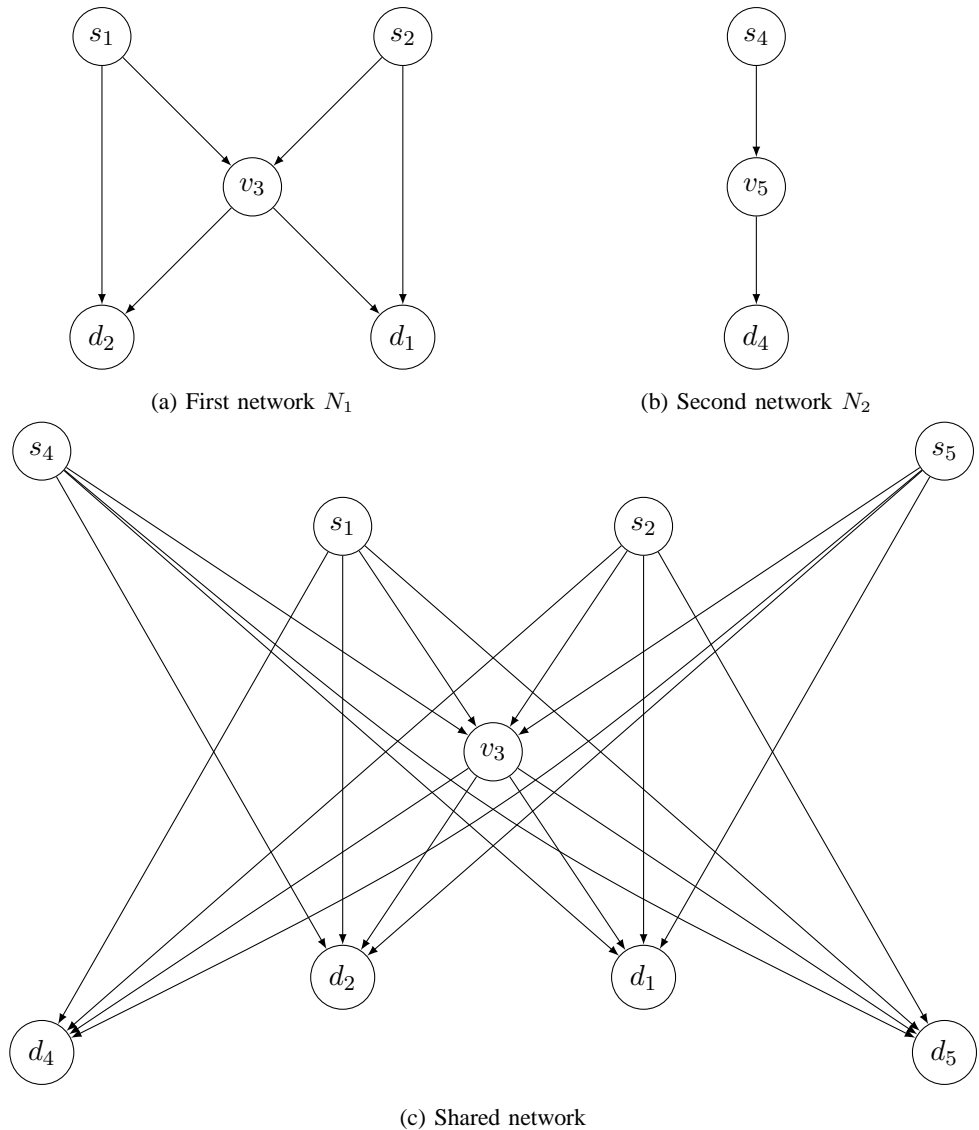

\subsection{Combining alphabets}

Let $[k] = \{1,\ldots,k\}$ for any positive integer $k$. We define a closure operator on $V \times [k]$ as follows. For any $v \in V$, let $[v] = \{(v,i): i \in [k]\}$ and for any $X \subseteq V \times [k]$, denote $X_V = \{v \in V: [v] \subseteq X\}$. Then
$$
	\cl^{[k]}(X) := X \cup \{[v] : v \in \cl(X_V)\}.
$$
This closure operator can be intuitively explained as follows. Consider the solvability problem of $\cl$ over the alphabet $A^k$. Each element of $A^k$ is a vector of length $k$ over $A$, then $\cl^{[k]}$ associates $k$ according vertices $[v]$ to each  $v \in V$, each new vertex $(v,i)$ corresponding to the coordinate $i$. If $v \in \cl(Y)$ for some $Y \subseteq V$, then the local function $f_v$ depends on $f_Y$. We can view $f_v: A^{kr} \to A^k$ (and hence all its coordinate functions) as depending on all coordinates of all vertices in $Y$, hence the definition of the closure operator.

In particular, for $D$ construct $D^{[k]}$ as follows: its vertex set is $V \times [k]$ and its edge set is $\{((u,i), (v,j)) : (u,v) \in E(D)\}$. Then it is easy to check that
$\cl_D^{[k]} = \cl_{D^{[k]}}.$

\begin{proposition} \label{prop:clk}
We have the following properties:
\begin{enumerate}
	\item $r(\cl^{[k]}) = k r(\cl)$.

    \item $G(\cl^{[k]}, A) \cong G(\cl,A^k)$ and hence $H(\cl^{[k]}, A) = kH(\cl,A^k)$.


    \item If $\cl$ is connected, then so is $\cl^{[k]}$ for all $k$.
\end{enumerate}
\end{proposition}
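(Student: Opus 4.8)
The plan is to prove the three claims of Proposition \ref{prop:clk} in order, since each is essentially a matter of unwinding the definition of $\cl^{[k]}$ and, when convenient, passing through the digraph picture $\cl_D^{[k]} = \cl_{D^{[k]}}$ (although a direct argument at the level of closure operators is cleaner and covers the non-$D$-closure case).

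For Property 1, I would compute the rank directly. Observe first that $\cl^{[k]}(X)$ always contains $[v]$ for every $v \in \cl(X_V)$; in particular, if $b$ is a basis of $\cl$ then $[b] := \bigcup_{v \in b}[v]$ satisfies $\cl^{[k]}([b]) = V \times [k]$, so $r(\cl^{[k]}) \le k\,r(\cl)$. For the reverse inequality, suppose $X \subseteq V \times [k]$ with $\cl^{[k]}(X) = V \times [k]$. By definition of $\cl^{[k]}$, the only way $(v,j)$ enters the closure of $X$ beyond $X$ itself is via $v \in \cl(X_V)$, which then forces the whole block $[v]$ into the closure; hence $\cl^{[k]}(X) = X \cup \{[v] : v \in \cl(X_V)\}$, and for this to be everything we need $\cl(X_V) = V$ (so $|X_V| \ge r(\cl)$) and, for each $v \notin X_V$, $[v]$ to be contained in the closure --- which is automatic --- but for each $v$ with $[v]$ only partially in $X$ we still need $v \in \cl(X_V)$. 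A short counting argument then shows $|X| \ge k\,r(\cl)$: one needs a full block $[v]$ present for at least $r(\cl)$ choices of $v$ forming a spanning set, contributing $k\,r(\cl)$ elements. (I would phrase this carefully: any set $X_V$ with $\cl(X_V)=V$ has $|X_V|\ge r$, and $|X|\ge k|X_V|$.)

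For Property 2, I would exhibit the explicit isomorphism $G(\cl^{[k]},A) \cong G(\cl,A^k)$. Identify $A^{V \times [k]} = (A^k)^V$ in the obvious way: a configuration $x \in A^{V\times[k]}$ corresponds to $\hat x \in (A^k)^V$ with $\hat x_v = (x_{(v,1)},\ldots,x_{(v,k)})$. Using Property \ref{it:E} of Proposition \ref{prop:properties_guessing_graph}, $x \sim y$ in $G(\cl^{[k]},A)$ iff there exist $S \subseteq V\times[k]$ and $(v,j) \in \cl^{[k]}(S)$ with $x_S = y_S$ and $x_{(v,j)} \ne y_{(v,j)}$. The key observation is that such an $S$ can be taken to be a union of full blocks $[u]$: if $(v,j) \in \cl^{[k]}(S)\setminus S$ then $v \in \cl(S_V)$, and replacing $S$ by $\bigcup_{u\in S_V}[u]$ only shrinks the agreement constraint minimally while still satisfying $x_{S'} = y_{S'}$ on the relevant coordinates --- more precisely, one checks directly that $x\sim y$ in $G(\cl^{[k]},A)$ iff $\hat x \sim \hat y$ in $G(\cl,A^k)$ by matching $S_V$ with the corresponding subset of $V$ and the block structure with the coordinates of $A^k$. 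The entropy identity $H(\cl^{[k]},A) = kH(\cl,A^k)$ --- I will write it with the notation $g$ as used consistently elsewhere, i.e.\ $g(\cl^{[k]},A) = kg(\cl,A^k)$, unless the intended statement really is about $H$ --- follows from Corollary \ref{cor:G}: $\alpha(G(\cl^{[k]},A)) = \alpha(G(\cl,A^k))$, and taking $\log_{|A|}$ versus $\log_{|A^k|} = \frac1k\log_{|A|}$ accounts for the factor $k$.

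For Property 3, connectedness of $\cl$ means no nonempty $V_2 \subsetneq V$ is weak, i.e.\ for every such $V_2$ there is some $X$ with $\cl(X)\cap V_1 \ne \cl(X\cup V_2)\cap V_1$. I would show that any weak set $W \subseteq V\times[k]$ for $\cl^{[k]}$ must be a union of full blocks: if $(v,j) \in W$ but $[v]\not\subseteq W$, then comparing $\cl^{[k]}$ of a set with and without $W$ at the coordinate $(v,j')$ for $j'$ with $(v,j')\notin W$ yields a discrepancy (the block-completion behaviour of $\cl^{[k]}$ makes $(v,j)$ and $(v,j')$ interchangeable), contradicting weakness. So a weak $W$ equals $[U]$ for some $U \subseteq V$, and then $[U]$ weak for $\cl^{[k]}$ translates directly into $U$ weak for $\cl$ (via the correspondence $X \leftrightarrow X_V$ and the identity $\cl^{[k]}(X)_V = \cl(X_V)$ when $X$ is a union of blocks); since $\cl$ is connected, $U \in \{\emptyset, V\}$, hence $W \in \{\emptyset, V\times[k]\}$.

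The main obstacle I anticipate is Property 2: making the graph isomorphism watertight requires the observation that, for the purposes of the adjacency relation in Proposition \ref{prop:properties_guessing_graph}\eqref{it:E}, one may always assume the "agreement set" $S$ in $\cl^{[k]}$ is a union of full blocks, and carefully checking that this does not lose any edges. Once that normalization is in place, everything else (ranks, entropies, connectedness) reduces to bookkeeping about blocks.
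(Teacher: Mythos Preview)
Your treatment of Properties 1 and 2 is essentially correct and more explicit than the paper, which simply refers these two claims to \cite[Proposition 10]{GR11}. One small simplification for Property 2: the normalisation you flag as the ``main obstacle'' is not actually needed. In the backward direction of the isomorphism, if $x\sim y$ in $G(\cl^{[k]},A)$ is witnessed by $S$ and $(v,j)\in\cl^{[k]}(S)\setminus S$, then automatically $v\in\cl(S_V)$, and the pair $(S_V,v)$ witnesses $\hat x\sim\hat y$ in $G(\cl,A^k)$ directly; there is no need to first replace $S$ by a union of blocks.

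For Property 3 there is a genuine gap. Your argument rests on the claim that \emph{every} weak set $W$ of $\cl^{[k]}$ is a union of full blocks, but the sketch you give (``comparing $\cl^{[k]}$ of a set with and without $W$ at the coordinate $(v,j')$ yields a discrepancy'') does not go through: to produce such a discrepancy at $(v,j')$ you would need some $X\subseteq (V\times[k])\setminus W$ with $v\in\cl((X\cup W)_V)$ but $v\notin\cl(X_V)$, and nothing in your argument manufactures such an $X$. In fact the claim is false as stated: already for $\cl=U_{1,1}$ one has $\cl^{[2]}=U_{2,2}$, for which $\{(1,1)\}$ is weak but not a block. The paper does \emph{not} attempt to prove this stronger statement. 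Instead, writing $S=(V\times[k])\setminus T$, $\lfloor S\rfloor=\bigcup_{v\in S_V}[v]$ and $\lceil T\rceil=(V\times[k])\setminus\lfloor S\rfloor$, it shows that if $T$ is weak then so is the enlarged set $\lceil T\rceil$, which \emph{is} a union of blocks by construction; one then checks that weakness of $\lceil T\rceil$ for $\cl^{[k]}$ descends to weakness of the corresponding $T_V$ for $\cl$. Your Step 2 (block-weak sets project to weak sets) is fine; it is Step 1 that needs to be replaced by this enlargement argument rather than the stronger structural claim.
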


\begin{IEEEproof}
The proof of the first two claims is similar to that of \cite[Proposition 10]{GR11}. We now prove the last claim. For any $S \subseteq V \times [k]$, we denote $\lfloor S \rfloor = \bigcup_{v \in S_V} [v]$, $T = (V \times [k]) \backslash S$, and $\lceil T \rceil = T \cup (S \backslash \lfloor S \rfloor) = (V \times [k]) \backslash \lfloor S \rfloor$. Note that $S_V = \lfloor S \rfloor_V = V \backslash \lceil T \rceil_V$. Then we claim that if $\cl^{[k]}|_{\backslash T} = \cl^{[k]}|^{/ T}$, then $\cl^{[k]}|_{\backslash \lceil T \rceil} = \cl^{[k]}|^{/ \lceil T \rceil}$. For any $Y \subseteq \lfloor S \rfloor$, let $X = Y \cup (S \backslash \lfloor S \rfloor)$; then $X_V = Y_V$ and $X \cup T = Y \cup \lceil T \rceil$. We then have
\begin{align*}
	\{[v] : v \in \cl(Y_V)\} \cap S &= \{[v]: v \in \cl(X_V)\} \cap S\\
	&= \{[v]: v \in \cl((X \cup T)_V)\} \cap S\\
	&= \{[v]: v \in \cl((Y \cup T)_V)\} \cap S,
\end{align*}
and in particular, then intersections with $\lfloor S \rfloor$ are equal, thus proving the claim.

Now suppose $\cl^{[k]}$ is disconnected, then $\cl^{[k]}|_{\backslash T} = \cl^{[k]}|^{/ T}$ for some $T = \lceil T \rceil$ (and hence $S = \lfloor S \rfloor$ and $V = S_V \cup T_V$). Then for any $X \subseteq S$, $(X \cup T)_V = X_V \cup T_V$ and we have
\begin{align*}
	\{[v] : v \in \cl(X_V) \cap S_V\}  &= \{[v] : v \in \cl(X_V)\} \cap S\\
	&= \{[v] : v \in \cl(X_V \cup T_V)\} \cap S\\
	&= \{[v] : v \in \cl(X_V \cup T_V) \cap S_V\},
\end{align*}
and hence $\cl|_{\backslash T_V}(X_V) = \cl|^{/ T_V}(X_V)$ for all $X_V \subseteq S_V$.
\end{IEEEproof}

\section{Acknowledgment} \label{sec:ack}

The author would like to thank the team from Queen Mary, University of London for fruitful discussions and the anonymous reviewers for their valuable suggestions.

\appendix

\subsection{Proof of Theorem \ref{th:algorithm}} \label{app:th}

First of all, we justify why we only search for useless vertices in $T(D)$.

\begin{lemma} \label{lem:useless_T}
If $V_2$ is useless, then $V_2 \subseteq T(D)$.
\end{lemma}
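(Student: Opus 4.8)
The plan is to show the contrapositive: if a vertex $v$ of $V_2$ lies outside $T(D)$, i.e. $v$ belongs to some chordless cycle $C$, then $V_2$ cannot be useless. Recall that $V_2$ useless means $V_2$ is weak (so $\cl_D/_{V_2} = \cl_D\backslash^{V_2}$) and acyclic. The idea is that a chordless cycle through $v$ provides an obstruction to the weakness condition, because it forces a dependency that crosses between $V_1$ and $V_2$ in a way the contraction/deletion equality forbids.

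First I would fix a chordless cycle $C = v_1, \ldots, v_k$ with $v = v_1 \in V_2$. Since $V_2$ is acyclic, $C$ is not entirely contained in $V_2$, so $C$ meets $V_1$; in particular there is an arc of $C$ entering some vertex of $V_2$ from $V_1$, and since the cycle visits $v$, the whole cyclic structure ties $v$'s in-neighbour on $C$ to the rest. The key computation is to apply Lemma \ref{lem:clD_alternate} together with the weakness hypothesis to the set $X = V \setminus C$ (or a suitable variant). On one hand, since $C$ is a cycle, $C$ is non-acyclic, so $\cl_D(V \setminus C) \ne V$; more precisely no vertex of $C$ can be added, because each $v_i$ has its in-neighbour $v_{i-1} \in C$ outside $V \setminus C$, and chordlessness guarantees there is no shorter sub-cycle that would already have been absorbed. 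On the other hand, I would test the weakness identity $\cl_D(X) \cap V_1 = \cl_D(X \cup V_2) \cap V_1$ on $X = V_1 \setminus C$: after adjoining $V_2$, the portion of $C$ lying in $V_2$ becomes available, and one can then propagate along the chordless cycle to pull in a vertex of $C \cap V_1$ that was not in $\cl_D(V_1 \setminus C)$, contradicting the equality. The chordlessness is exactly what makes this propagation valid — a chord would let a shorter cycle short-circuit the argument.

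Concretely, the steps in order: (1) assume for contradiction $v \in V_2 \setminus T(D)$, pick a chordless cycle $C$ through $v$; (2) note $C \cap V_1 \ne \emptyset$ since $V_2$ is acyclic, and pick a vertex $w \in C \cap V_1$ whose predecessor on $C$ lies in $V_2$ — such a $w$ exists because $C$ alternates across the partition at least once and passes through $v \in V_2$; (3) show $w \notin \cl_D(V_1 \setminus C)$ using Lemma \ref{lem:clD_alternate}: were $w \in \cl_D(V_1 \setminus C)$, then $Y = \cl_D(V_1 \setminus C) \setminus (V_1 \setminus C)$ would be an acyclic set with $Y^- \subseteq Y \cup (V_1 \setminus C)$ containing $w$, but chasing in-neighbours around the chordless cycle forces a directed cycle inside $Y \cup \{\text{cycle vertices}\}$, and chordlessness prevents any shortcut, giving a contradiction; (4) show $w \in \cl_D((V_1 \setminus C) \cup V_2)$: once $V_2$ is present, the arc from $w$'s $V_2$-predecessor is satisfied, and walking forward along $C$ (each successive vertex's only missing in-neighbour is the previous one on the chordless cycle, since any other in-arc would be a chord) each vertex of $C$ gets absorbed in turn until we reach $w$; (5) conclude $\cl_D\backslash^{V_2}(V_1\setminus C) \ne \cl_D/_{V_2}(V_1 \setminus C)$, so $V_2$ is not weak, hence not useless.

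The main obstacle I anticipate is step (4), the forward-propagation argument: one must argue carefully that when we add $V_2$ and then take the closure, we really can absorb the entire cycle segment in $V_1$ — the subtlety is that the closure operator only absorbs a vertex when \emph{all} its in-neighbours are already present, so I need chordlessness to guarantee that along $C$ the only "new" in-neighbour each vertex waits on is its predecessor in $C$, and that vertices of $C$ outside the current set have no other unfulfilled in-neighbours that block the chain. Handling the bookkeeping of which vertices of $C$ are in $V_1$ versus $V_2$, and making sure the acyclicity of $V_2$ is used at exactly the right point (to guarantee $C \not\subseteq V_2$ and that the segment of $C$ inside $V_2$ is itself absorbed as part of $\cl_{D[V_2]}(\emptyset)$-type reasoning), is where the proof needs the most care. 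An alternative, possibly cleaner route is to invoke Proposition \ref{prop:connected_not_strong}: if $V_2$ is useless then it is weak, and then re-derive from its proof that all arcs from $V_2$ to $V_1$ emanate from $\cl_{D[V_2]}(\emptyset)$, which combined with $V_2$ acyclic means $V_2 = \cl_{D[V_2]}(\emptyset)$ and there are \emph{no} arcs from $V_2$ into $V_1$ beyond those internal to that acyclic structure — then a chordless cycle through $v \in V_2$ would have to re-enter $V_2$ with no available return arc unless it stays within the acyclic set, contradicting that it is a cycle; I would present whichever of these is shorter once the details are checked.
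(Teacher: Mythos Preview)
Your main approach is correct and uses the same test set $Y = V_1 \setminus (C \cap V_1)$ as the paper, but the paper runs the contradiction in the opposite direction and more compactly. Instead of exhibiting a specific witness $w$ where deletion and contraction differ, the paper assumes weakness and derives that the chordless cycle $X$ is acyclic: from weakness one gets $X_1 \subseteq \cl_D/_{V_2}(Y) = \cl_D\backslash^{V_2}(Y) \subseteq \cl_D(Y)$, hence $V_1 \subseteq \cl_D(Y)$; then acyclicity of $V_2$ gives $\cl_D(Y) = V$, so $X \subseteq \cl_D(Y)\setminus Y$, which is acyclic by Lemma~\ref{lem:clD_alternate} --- contradiction. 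No particular $w$ is ever singled out and no walk along $C$ is needed.

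Two remarks on your write-up. First, your step~(4) is not the obstacle at all: once $V_2$ is adjoined, the complement of $(V_1\setminus C)\cup V_2$ is $C\cap V_1$, a proper subset of a chordless cycle and hence acyclic, so $\cl_D((V_1\setminus C)\cup V_2)=V$ in one line; the propagation argument you sketch is unnecessary. Your step~(3), on the other hand, works for \emph{any} $w\in C\cap V_1$ (the special choice with predecessor in $V_2$ is not needed): if $w\in\cl_D(V_1\setminus C)$ then chasing predecessors around $C$ puts all of $C$ inside the acyclic set $\cl_D(V_1\setminus C)\setminus(V_1\setminus C)$, contradicting that $C$ is a cycle. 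Second, your alternative route via Proposition~\ref{prop:connected_not_strong} does not work: that proposition's proof shows arcs from $V_2$ to $V_1$ emanate from $\cl_{D[V_2]}(\emptyset)$, but when $V_2$ is acyclic one has $\cl_{D[V_2]}(\emptyset)=V_2$, so the constraint is vacuous and gives no contradiction.
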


\begin{IEEEproof}
Let $V_2$ be a useless set. First of all, if $X$ induces a chordless cycle, then it cannot entirely lie in $V_2$, for $V_2$ is acyclic. Suppose $X$ does not lie entirely in $V_1$ either. Since $X_1$ is acyclic, we have $X_1 \subseteq \cl_D/_{V_2}(Y) \subseteq \cl_D(Y)$, where $Y = V_1 \backslash X_1$. Therefore, $X_2 \subseteq X^- \subseteq \cl_D(Y)$; gathering, we obtain $X \subseteq \cl_D(Y)$. More precisely, $X \subseteq \cl_D(Y) \backslash Y$ and hence $X$ is acyclic, which is a contradiction.
\end{IEEEproof}

The following results ensure that we can remove useless vertices one by one.

\begin{lemma} \label{lem:v_removed}
Let $V_2$ be useless in $D$ and $v \in V_2$. Once $v$ is removed from $D$, $V_2 \backslash v$ is useless in $D[V \backslash v]$.
\end{lemma}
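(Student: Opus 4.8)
The plan is to unfold both sides of the definition of ``useless'' (Definition~\ref{def:weak}) for the smaller digraph $D' := D[V \backslash v]$ and the set $V_2' := V_2 \backslash v$, and to reduce everything to statements about $\cl_D$ that we already know hold because $V_2$ is useless in $D$. Write $V_1 = V \backslash V_2$, which is also $V(D') \backslash V_2'$. We must show two things: (i) $V_2'$ is acyclic in $D'$, which is immediate since $V_2' \subseteq V_2$ and $V_2$ induces an acyclic subgraph of $D$, and $D'$ is an induced subdigraph; and (ii) $V_2'$ is weak in $D'$, i.e.\ $\cl_{D'} /_{V_2'}(X) = \cl_{D'} \backslash^{V_2'}(X)$ for all $X \subseteq V_1$, equivalently $\cl_{D'}(X) \cap V_1 = \cl_{D'}(X \cup V_2') \cap V_1$ for all such $X$.

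\textbf{Key steps.} First I would invoke Proposition~\ref{prop:clD|S}: since $D' = D[V \backslash v] = D[(V\backslash v)]$ and $V_2 = V_2' \cup \{v\}$, deleting $v$ and then passing to $\cl_{D'}$ can be rephrased via contraction. Concretely, $\cl_{D'} = \cl_D /_{\{v\}}$ (contraction of the single vertex $v$), and then $\cl_{D'}/_{V_2'} = (\cl_D/_{\{v\}})/_{V_2'}$. The contraction operation composes in the expected way: $(\cl/_{W})/_{W'}(X) = \cl(X \cup W \cup W') \backslash (W \cup W') = \cl/_{W \cup W'}(X)$ for disjoint $W, W'$ with $X$ disjoint from both; applying this with $W = \{v\}$, $W' = V_2'$ gives $\cl_{D'}/_{V_2'} = \cl_D /_{V_2}$. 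On the other side, $\cl_{D'}\backslash^{V_2'}(X) = \cl_{D'}(X) \backslash V_2' = (\cl_D(X \cup \{v\}) \backslash \{v\}) \backslash V_2' = \cl_D(X \cup \{v\}) \backslash V_2$ for $X \subseteq V_1$. Now I would use that $V_2$ is weak in $D$, which by Proposition~\ref{prop:connected_not_strong} (or directly from Definition~\ref{def:weak}) gives $\cl_D(X) \cap V_1 = \cl_D(X \cup V_2) \cap V_1$; in particular, since $\{v\} \subseteq V_2$, isotonicity sandwiches $\cl_D(X) \subseteq \cl_D(X \cup \{v\}) \subseteq \cl_D(X \cup V_2)$, and intersecting with $V_1$ forces $\cl_D(X \cup \{v\}) \cap V_1 = \cl_D(X) \cap V_1 = \cl_D(X \cup V_2) \cap V_1$. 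Combining, $\cl_{D'}\backslash^{V_2'}(X) = \cl_D(X\cup\{v\}) \backslash V_2 = \cl_D(X \cup V_2) \backslash V_2 = \cl_D /_{V_2}(X) = \cl_{D'}/_{V_2'}(X)$, which is exactly what we need.

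\textbf{Main obstacle.} The only genuinely delicate point is the composition law for contractions, $\cl/_{W \cup W'} = (\cl/_{W})/_{W'}$, together with the identity $\cl_{D[V\backslash v]} = \cl_D/_{\{v\}}$; both are consequences of Proposition~\ref{prop:clD|S} and the alternate characterisation in Lemma~\ref{lem:clD_alternate}, but one should check the set bookkeeping (which vertices are removed and in what order) does not go wrong, and in particular that $v \notin V_1$ so that the contraction of $v$ is legitimate and $X \subseteq V_1$ stays disjoint from $\{v\} \cup V_2'$ throughout. Everything else is routine set manipulation and monotonicity of $\cl_D$. I would present the argument in the order: reduce to the two defining conditions; dispatch acyclicity; establish the two contraction/deletion identities for $D'$ in terms of $\cl_D$; then quote weakness of $V_2$ in $D$ and conclude.
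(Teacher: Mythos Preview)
Your proposal is correct and follows essentially the same route as the paper's proof: both reduce via Proposition~\ref{prop:clD|S} to the identity $\cl_{D[V\backslash v]}(Y) = \cl_D(Y \cup v)\backslash v$, then use weakness of $V_2$ in $D$ (in the form $\cl_D(\,\cdot\,)\cap V_1 = \cl_D(\,\cdot\,\cup V_2)\cap V_1$) to pass from $\cl_D(X\cup v)\backslash V_2$ to $\cl_D(X\cup V_2)\backslash V_2$, and finally unwind back to $\cl_{D'}$. The paper's version is simply the four-line display of these equalities, without spelling out the contraction-composition law or the isotonicity sandwich (which it absorbs into the ``for all $X\subseteq V$'' formulation of Proposition~\ref{prop:cl_strong}(1)); your write-up is more explicit but the argument is the same. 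One cosmetic point: the reference to Proposition~\ref{prop:connected_not_strong} is a slip---you only need Definition~\ref{def:weak} (as you note parenthetically).
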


\begin{IEEEproof}
$V_2 \backslash v$ is clearly acyclic. For any $X \subseteq (V \backslash v)$, we have
\begin{align*}
	\cl_{D[V\backslash v]}(X) \backslash (V_2 \backslash v) &= \cl_D(X \cup v) \backslash V_2\\
	&= \cl_D(X \cup V_2) \backslash V_2\\
	&= (\cl_D(X \cup V_2) \backslash v) \backslash (V_2 \backslash v)\\
	&= \cl_{D[V \backslash v]}(X \cup (V_2 \backslash v)) \backslash (V_2 \backslash v).
\end{align*}
\end{IEEEproof}

\begin{lemma} \label{lem:topological_useless}
Let $V_2$ be useless in $D$ and $v$ the last vertex of $V_2$ according to the topological order (i.e., $v^+ \subseteq S$). Then $\{v\}$ is a useless set.
\end{lemma}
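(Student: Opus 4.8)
The plan is to verify directly that $\{v\}$ meets both requirements in Definition~\ref{def:weak}. Acyclicity is immediate: since $V_2$ is acyclic it carries no loop, so in particular $v$ has none. The substance is to show that $\{v\}$ is weak, i.e.\ $\cl_D(X\cup v)\setminus v=\cl_D(X)\setminus v$ for every $X\subseteq V\setminus v$; one inclusion is trivial by isotony, so it suffices to prove $\cl_D(X\cup v)\subseteq\cl_D(X)\cup\{v\}$.

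First I would control the part of $\cl_D(X\cup v)$ lying in $V_1=V\setminus V_2$. Writing $X_1=X\cap V_1$, we have $X\cup v\subseteq X_1\cup V_2$, so isotony gives $\cl_D(X\cup v)\subseteq\cl_D(X_1\cup V_2)$, and the hypothesis that $V_2$ is weak (i.e.\ $\cl_D/_{V_2}=\cl_D\backslash^{V_2}$) gives $\cl_D(X_1\cup V_2)\setminus V_2=\cl_D(X_1)\setminus V_2\subseteq\cl_D(X)$. Hence every vertex of $V_1$ occurring in $\cl_D(X\cup v)$ already belongs to $\cl_D(X)$.

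Next I would control the part in $V_2$ by means of Lemma~\ref{lem:clD_alternate}. Put $Y=\cl_D(X\cup v)\setminus(X\cup v)$ and $Z=Y\cap V_2$; by the lemma $Y$, and hence $Z$, induces an acyclic subgraph and $Y^-\subseteq Y\cup X\cup v$, so $Z^-\subseteq(Y\cap V_1)\cup Z\cup X\cup v$. Here the choice of $v$ enters: being last in a topological order of $D[V_2]$ means $v^+\cap V_2=\emptyset$ (equivalently $v^+\subseteq V_1$), so no vertex of $Z\subseteq V_2$ has $v$ as an in-neighbour, i.e.\ $v\notin Z^-$; combining with the previous paragraph, $Z^-\subseteq\cl_D(X)\cup Z$. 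Applying Lemma~\ref{lem:clD_alternate} with base set $\cl_D(X)$ to the acyclic set $Z\setminus\cl_D(X)$, together with idempotence $\cl_D(\cl_D(X))=\cl_D(X)$, forces $Z\setminus\cl_D(X)=\emptyset$, i.e.\ $Z\subseteq\cl_D(X)$. Putting the pieces together, $\cl_D(X\cup v)=X\cup\{v\}\cup(Y\cap V_1)\cup Z\subseteq\cl_D(X)\cup\{v\}$, as required.

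The main obstacle is the cascade: a priori $v$ could feed a vertex of $V_1$, which in turn feeds a vertex of $V_2$, and so on, so no purely local argument works. The two steps above are sequenced precisely to defeat this — the $V_1$-part of the enlarged closure is absorbed into $\cl_D(X)$ using weakness of $V_2$, and then the sink property $v^+\cap V_2=\emptyset$ is exactly what keeps $v$ out of the in-neighbourhoods one needs to control when closing up the $V_2$-part.
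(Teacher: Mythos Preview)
Your argument is correct. Both proofs rest on the same two ingredients --- weakness of $V_2$ to pin down the $V_1$-part of $\cl_D(X\cup v)$, and the sink property $v^+\subseteq V_1$ to handle the rest --- but they are organised differently. The paper first proves a general criterion (by induction on the iterates $c_D^j$): adding $v$ leaves the closure unchanged precisely when $\cl_D(X\cup v)\setminus(X\cup v)$ avoids $v^+$; it then verifies this criterion using weakness and $v^+\subseteq V_1$, after a preliminary case split on whether $v\in\cl_D(X)$. You instead split $\cl_D(X\cup v)$ along $V_1/V_2$, absorb the $V_1$-part into $\cl_D(X)$ via the sandwich $X_1\subseteq X\cup v\subseteq X_1\cup V_2$, and then close off the $V_2$-part by a second appeal to Lemma~\ref{lem:clD_alternate} with base set $\cl_D(X)$, using $v^+\cap V_2=\emptyset$ to drop $v$ from $Z^-$. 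Your route avoids both the case distinction and the step-by-step $c_D^j$ induction, and makes the role of Lemma~\ref{lem:clD_alternate} (and idempotence) more visible; the paper's route isolates a reusable criterion in terms of $v^+$ that is independent of the $V_1/V_2$ structure.
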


\begin{IEEEproof}
We only need to prove that $\{v\}$ is weak, i.e. for all $X \subseteq V \backslash v$, $\cl_D(X \cup v) \backslash v = \cl_D(X) \backslash v$. This clearly holds if $v \in \cl_D(X)$, hence let us assume that $v \notin \cl_D(X)$.

It is easy to show by induction on $j$ that $c_D^j(X) \backslash X = c_D^j(X \cup v) \backslash (X \cup v)$ if and only if $c_D^j(X \cup v) \backslash (X \cup v) \cap v^+ = \emptyset$; in particular, $\cl_D(X) \backslash X = \cl_D(X \cup v) \backslash (X \cup v)$ if and only if $(\cl_D(X \cup v) \backslash (X \cup v)) \cap v^+ = \emptyset$.

We have $v^+ \cap (\cl_D(X) \backslash X) = \emptyset$. Since $V_2$ is weak, $\cl_D(X) \cap V_1 = \cl_D(X \cup v) \cap V_1$. Moreover, since $v^+ \subseteq V_1$, we have $\cl_D(X) \cap v^+ = \cl_D(X \cup v) \cap v^+$. Combining, we obtain $(\cl_D(X \cup v) \backslash (X \cup v)) \cap v^+ = \emptyset$ and by the paragraph above, $\cl_D(X) \backslash X = \cl_D(X \cup v) \backslash (X \cup v)$, which yields $\cl_D(X) \backslash v = \cl_D(X \cup v) \backslash v$.
\end{IEEEproof}

Next, we indicate an efficient way to check that a singleton is useless.

\begin{lemma} \label{lem:v_useless}
For any vertex $v$, $\{v\}$ is useless if and only if for any $u \in v^+$, $v \in \cl_D(u^- \backslash v)$.
\end{lemma}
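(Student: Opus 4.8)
The plan is to translate both sides of the equivalence into the iterative description $\cl_D=c_D^n$, where $c_D(X)=X\cup\{w:w^-\subseteq X\}$, recalling that by Definition~\ref{def:weak} ``$\{v\}$ is useless'' means ``there is no loop on $v$'' together with ``$\{v\}$ is weak'', i.e.\ $\cl_D(X\cup v)\backslash v=\cl_D(X)\backslash v$ for every $X\subseteq V\backslash v$. As is standard we may assume $D$ is loopless (a loop simply splits $\cl_D$ off as a trivial factor, by the remarks in Section~\ref{sec:combining}), so that $u\in v^+$ forces $u\neq v$ and $u\notin u^-$.

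For ``$\Leftarrow$'', assume $v\in\cl_D(u^-\backslash v)$ for every $u\in v^+$. First $\{v\}$ is acyclic: a loop on $v$ would place $v\in v^+$, and the hypothesis with $u=v$ would give $v\in\cl_D(v^-\backslash v)$, contradicting the fact that a looped vertex lies in no $D$-closure of a set avoiding it. To see $\{v\}$ is weak it suffices, by isotonicity, to prove $\cl_D(X\cup v)\backslash v\subseteq\cl_D(X)$ for all $X\subseteq V\backslash v$, and I would do this by showing $c_D^j(X\cup v)\backslash v\subseteq\cl_D(X)$ by induction on $j$. The base case $j=0$ is $X\subseteq\cl_D(X)$. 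For the step, take $w\neq v$ with $w^-\subseteq c_D^j(X\cup v)$, so that $w^-\backslash v\subseteq\cl_D(X)$ by the inductive hypothesis; if $v\notin w^-$ then $w^-\subseteq\cl_D(X)$, hence $w\in c_D(\cl_D(X))=\cl_D(X)$, while if $v\in w^-$ (i.e.\ $w\in v^+$) the hypothesis gives $v\in\cl_D(w^-\backslash v)\subseteq\cl_D(\cl_D(X))=\cl_D(X)$, whence again $w^-\subseteq\cl_D(X)$ and $w\in\cl_D(X)$. Taking $j=n$ and combining with the trivial reverse inclusion gives $\cl_D(X\cup v)\backslash v=\cl_D(X)\backslash v$, so $\{v\}$ is weak and hence useless.

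For ``$\Rightarrow$'', assume $\{v\}$ is useless and fix $u\in v^+$; set $X:=u^-\backslash v\subseteq V\backslash v$. Since $u^-\subseteq X\cup v$ we get $u\in c_D(X\cup v)\subseteq\cl_D(X\cup v)$, and weakness gives $u\in\cl_D(X\cup v)\backslash v=\cl_D(X)\backslash v$, so $u\in\cl_D(X)$. As $u\notin X$ (no loop on $u$), $u$ lies in $Y:=\cl_D(X)\backslash X$, which by Lemma~\ref{lem:clD_alternate} induces an acyclic subgraph with $Y^-\subseteq Y\cup X$; hence $u^-\subseteq Y\cup X$, and since $v\in u^-$ with $v\notin X$ we conclude $v\in Y\subseteq\cl_D(X)=\cl_D(u^-\backslash v)$, as required.

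I expect the main obstacle to be precisely this last step of the ``$\Rightarrow$'' direction: weakness tells us directly only that the out-neighbour $u$ becomes determined by $u^-\backslash v$, and turning that into the statement that $v$ itself is determined requires the acyclicity/in-neighbourhood structure packaged in Lemma~\ref{lem:clD_alternate} together with the arc $v\to u$. This is also the step that genuinely uses $u$ carrying no loop, which is why the looplessness reduction is worth making explicit at the outset.
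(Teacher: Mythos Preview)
Your proof is correct and takes essentially the same approach as the paper: both directions hinge on the iterative description $c_D^j$ together with Lemma~\ref{lem:clD_alternate}, and the differences are purely organisational (you run ``$\Leftarrow$'' as a direct induction on $j$ and prove ``$\Rightarrow$'' directly, whereas the paper argues ``$\Leftarrow$'' by contradiction at the first index where the iterates diverge and proves ``$\Rightarrow$'' by contrapositive). You are also more explicit about the acyclicity of $\{v\}$ and the reduction to loopless digraphs, both of which the paper leaves tacit.
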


\begin{IEEEproof}
Suppose there exists $u \in v^+$ such that $v \notin \cl_D(u^- \backslash v)$. There is an edge from $v$ to $u$, hence $u \notin \cl_D(u^- \backslash v) \backslash (u^- \backslash v)$ and $u \notin \cl_D(u^- \backslash v) \backslash u^-$. Since $u \in \cl_D(u^-) \backslash u^-$, we obtain $\cl_D(u^-) \backslash v \neq \cl_D(u^- \backslash v) \backslash v$ and $\{v\}$ is not weak.

Otherwise, suppose there exists $X$ such that $\cl_D(X) \backslash v \neq \cl_D(X \cup v) \backslash v$; clearly $v \notin \cl_D(X)$. It is easy to show by induction on $j$ that $c_D^j(X) \backslash X = c_D^j(X \cup v) \backslash (X \cup v)$ if and only if $\big(c_D^j(X \cup v) \backslash (X \cup v)\big) \cap v^+ = \emptyset$. Let $i = \min\{j : c_D^j(X) \backslash X \neq c_D^j(X \cup v) \backslash (X \cup v)\}$, then there exists $u \in (c_D^i(X \cup v) \backslash c_D^{i-1}(X \cup v)) \cap v^+$. We have $u^- \backslash v \subseteq c_D^{i-1}(X) \cl_D(X)$, and hence $v \in \cl_D(u^- \backslash v) \subseteq \cl_D(X)$, which is the desired contradiction.
\end{IEEEproof}

We can now prove the correctness of Algorithm \ref{alg:1}. Clearly, the running time is polynomial.

\begin{IEEEproof}
First of all, Lemma \ref{lem:useless_T} guarantees that the set of useless vertices lies in $T(D)$. At every iteration of the Repeat loop, if there exists a set of useless vertices in the new graph, then there exists a singleton $\{v\}$ which is useless by \ref{lem:topological_useless}. By Lemma \ref{lem:v_useless}, the algorithm will find a useless vertex $v$ if there exists one. Lemma \ref{lem:v_removed} guarantees that after all the iterations, all the useless vertices will be removed.
\end{IEEEproof}

\bibliographystyle{IEEEtran}
\bibliography{g}

\end{document}